\documentclass[11pt]{article}
\usepackage{fullpage}
\usepackage[utf8]{inputenc}
\usepackage{amsmath, amsthm, amssymb, mathtools,xcolor}
\usepackage{colonequals}

\usepackage{latexsym,amsfonts,amscd,epsfig,color,mathrsfs,mdframed,mleftright}
\usepackage{aliascnt}
\usepackage{graphicx}
\usepackage{tikz}
\usepackage{algorithm}
\usepackage{algpseudocode}
\usepackage{float}
\usepackage{ifthen}
\usepackage{array}
\usepackage{pgfplots}
\usepgfplotslibrary{groupplots}
\pgfplotsset{compat=1.18}
\usetikzlibrary{decorations.pathreplacing}
\usepackage{comment}

\usepackage{hyperref}
\hypersetup{
    colorlinks=true,
    allcolors=blue
}
\usepackage[capitalize,nameinlink]{cleveref}
\crefname{claim}{claim}{claims}
\Crefname{claim}{Claim}{Claims}
\crefname{question}{question}{questions}
\Crefname{question}{Question}{Questions}
\crefname{conjecture}{conjecture}{conjectures}
\Crefname{conjecture}{Conjecture}{Conjectures}
\crefname{observation}{observation}{observations}
\Crefname{observation}{Observation}{Observations}

\newcommand{\ket}[1]{|#1\rangle}
\newcommand{\bra}[1]{\langle#1|}
\newcommand{\ketbra}[2]{|#1\rangle\langle#2|}
\newcommand{\braket}[2]{\langle#1|#2\rangle}
\newcommand{\norm}[1]{\left\lVert#1\right\rVert}
\newcommand{\ceil}[1]{\lceil{#1}\rceil}
\newcommand{\floor}[1]{\lfloor{#1}\rfloor}

\newtheorem{theorem}{Theorem}[section]
\newtheorem{corollary}[theorem]{Corollary}
\newtheorem{remark}[theorem]{Remark}
\newtheorem{lemma}[theorem]{Lemma}
\newtheorem{claim}[theorem]{Claim}

\newtheorem{definition}[theorem]{Definition}

\def\01{\{0,1\}}
\newcommand{\C}{\mathbb{C}}
\newcommand{\eps}{\varepsilon}
\renewcommand{\epsilon}{\varepsilon}

\newcommand{\mathify}[1]{\ifmmode{#1}\else\mbox{$#1$}\fi}

\newcommand{\ot}{\otimes}
\def\abs#1{\left| #1 \right|}

\renewcommand{\H}{\mathcal{H}}

\renewcommand{\L}{\mathcal{L}}
\newcommand{\R}{\mathbb{R}}
\newcommand{\Z}{\mathbb{Z}}
\newcommand{\N}{\mathbb{N}}

\newcommand{\defeq}{\colonequals}

\title{Optimal Ground-State Preparation 
with a Guiding State}
\author{Stacey Jeffery\thanks{QLever, QuSoft, CWI and University of Amsterdam, the Netherlands. This work is co-funded by the European Union (ERC, ASC-Q, 101040624).}
\and
Rolando D. Somma\thanks{Google Quantum AI, Venice, CA 90291, United States.}
\and
Freek Witteveen\thanks{QuSoft and CWI, Amsterdam, the Netherlands}
\and 
Ronald de Wolf\thanks{Google Quantum AI, Venice, CA 90291, United States. Also QuSoft, CWI and University of Amsterdam, the Netherlands. Partially supported by the Dutch Research Council (NWO) through Gravitation-grant Quantum Software Consortium, 024.003.037.}
}
\date{}

\begin{document}

\maketitle

\begin{abstract}
Suppose a Hamiltonian $H$ has a unique ground state $\ket{\psi_0}$ with an eigenvalue $E_0$, and we have an estimate $\tilde{E}_0$ such that $|\tilde{E}_0-E_0|\leq\delta$, and there is a gap of at least $3\delta$ between $E_0$ and all other eigenvalues. Suppose we have a unitary $A$ available that can produce a ``guiding state'' $A\ket{0}$ that has overlap at least $\gamma$ with $\ket{\psi_0}$. We show how to obtain an $\eps$-approximation of $\ket{\psi_0}$ using $O(\log(1/\eps)/\gamma\delta)$ applications of $U=e^{iH}$ and $A$, and their inverses. 
We give two different algorithms, one based on interleaving amplitude amplification and error-reduction in the style of~\cite{hmw:berrorsearch}, and one using the composition of transducers.
This paper is the state-preparation follow-up to our two recent ground-state-energy estimation papers~\cite{JW:optQPE,SdW:optQPE}. Combined, our results show an optimal $O(\log(1/\eps)/\gamma\delta)$ upper bound for ground state preparation.  
\end{abstract}



\section{Introduction}

We study the problem of preparing the unique ground state $\ket{\psi_0}$ for a given Hamiltonian~$H$. The tools we have available for this are the unitary $e^{iH}$ (which one gets from doing Hamiltonian simulation on $H$) and a unitary $A$ that prepares a ``guiding state'' $A\ket{0}$ that is promised to have overlap at least $\gamma$ with $\ket{\psi_0}$. One may think of $A\ket{0}$ as an ``ansatz'' for the ground state, based on physical or chemical intuition about the system described by $H$.

A closely related problem is estimating the ground-state \emph{energy} $E_0$ associated with the ground state $\ket{\psi_0}$ (i.e., $H\ket{\psi_0}=E_0\ket{\psi_0}$).
For the purposes of normalization and to avoid the problem that eigenvalues 0 and $2\pi$ are indistinguishable for $U=e^{iH}$, let us assume all eigenvalues are in the interval $[0,\pi]$.
This problem of estimating~$E_0$ through a variant of phase estimation is extensively studied in the quantum computing literature. Recently, two of us~\cite{JW:optQPE} addressed this problem: one can estimate $E_0$ to within $\pm\delta$ using $O(1/\gamma\delta)$ applications of $U$ and $U^{-1}$, improving previous work by a factor of $\log(1/\gamma)$. This matches an earlier lower bound of~\cite{Mande2026tightboundsquantum} up to a constant factor. At the expense of a factor of $\log(1/\eps)$, one can reduce the error probability to small~$\eps$, and this upper bound of $O(\log(1/\eps)/\gamma\delta)$ applications of $U$ and $U^{-1}$ was very recently shown to be optimal as well by the other two of us~\cite{SdW:optQPE}. 

In this paper we focus on efficiently preparing a state that is $\eps$-close to the ground state $\ket{\psi_0}$.
To make the problem tractable, we also assume the Hamiltonian is gapped: all other eigenvalues are at least $E_0+\Delta$.
We may assume we have already obtained an estimate of the ground state energy $E_0$ to within additive error $\delta = O(\Delta)$ (for instance by running the algorithm from~\cite{JW:optQPE}). 
\cite{SdW:optQPE} already proved a lower bound of $\Omega(\log(1/\eps)/\gamma\Delta)$ applications of $U$ and $U^{-1}$ for this state-preparation problem.
Here we give two different algorithms that both prepare an $\eps$-approximation of $\ket{\psi_0}$ using the optimal $O(\log(1/\eps)/\gamma\Delta)$ applications of $U$ and $U^{-1}$. Both follow the high-level idea of using the estimate of $E_0$ to ``mark'' the ground space, and then use amplitude amplification. Naively, one could mark the ground space by using phase estimation to precision $\Delta$, and comparing the estimated phase with $\tilde{E}_0$, however, this has bounded error, and naive success probability amplification incurs an overhead of $O(\log\frac{1}{\gamma})$, since this subroutine will be called $O(1/\gamma)$ times by amplitude amplification. This gives a total complexity of $O(\frac{1}{\gamma\Delta}\log\frac{1}{\gamma})$, which is worse than the best known complexity for estimating the ground-state energy by a factor $O(\log\frac{1}{\gamma})$. Our two algorithms deal with the issue of the subroutine error in two different ways, in order to avoid this $O(\log\frac{1}{\gamma})$ factor.

Our first algorithm (\Cref{sec:HMWbasedproof}) uses fairly well-established techniques, specifically the ``quantum search on bounded-error inputs'' of~\cite{hmw:berrorsearch}. That paper considered a search problem on $N$ bits $x_0,\ldots,x_{N-1}$, where for each $j$ we have a coherent quantum algorithm that computes $x_j$ with error probability $\leq 1/3$: it marks the basis states that are solutions, albeit with some error. Suppose $x_i$ is the only one of these bits that is~1, and we want to find this $i$. Suppose we also have a unitary $A$ such that $A\ket{0}$ has overlap $\geq\gamma$ with the unique solution~$\ket{i}$. \cite{hmw:berrorsearch} shows how to do amplitude amplification to increase that overlap to a constant (at which point a measurement will yield the solution $i$ with constant probability). It does so by a subtle recursive amplitude amplification interleaved with error-reduction. Each round of amplitude amplification increases the magnitude of the marked part of the state by a factor of roughly~3. This amplifies the solution~$\ket{i}$, but it also amplifies the part of the state that was incorrectly marked as a solution by the bounded-error queries.
By doing the right amount of error-reduction in between the amplification steps in order to push these ``false positives'' down again, this method still works in asymptotically the same number of queries as in the noiseless case.
The problem we solve here is analogous, with $N$ being the dimension of $H$, and $\ket{\psi_0}$ taking the role of the solution $\ket{i}$.  
We can now use phase estimation with precision $\delta$ to mark the solution (i.e., to distinguish eigenphase $E_0$ from eigenphases $\geq E_0+3\delta$), albeit with some error.
The difference with~\cite{hmw:berrorsearch} is that we are now working in the unknown basis of eigenstates of~$H$ rather than in the known computational basis. Despite this complication, we show that their technique can be made to work for our ground-state preparation problem as well, yielding the optimal complexity.

Our second algorithm (\Cref{sec:transducerbasedproof}) is based on the more modern ``transducer'' techniques (see \Cref{sec:transducers}) that were already used in~\cite{JW:optQPE} for estimating $E_0$. A transducer is a way of representing a bounded-error quantum algorithm that composes very nicely. One reason for this nice composition is that transducers can often be exact, even when they represent bounded-error algorithms, so they compose without error. When the final composed transducer is mapped back to a standard quantum algorithm, the algorithm will have error, but we can often avoid log factors from error reduction by using transducers. The transducer for preparing the ground state is a composition of two transducers. The first transducer, presented in \Cref{sec:lcu-transducer}, is for LCU (linear combination of unitaries) where the unitaries are all powers of some $U$. A special case of this LCU applies a filter that marks only the ground state. The second transducer, presented in \Cref{sec:aa-transducers} is for amplitude amplification, in order to amplify the marked ground state. These compose without any need for error reduction, to obtain a transducer with ``complexity'' $O(1/(\gamma\Delta))$, which can then be turned into a bounded-error quantum algorithm with this optimal complexity. The transducers for LCU and amplitude amplification may be of independent interest. A previous transducer for amplitude amplification was used in~\cite{apers2026elfs}, but our construction improves the space complexity by a multiplicative log factor.

In both algorithms, we can reduce the error from constant to $\eps$ using the ability to check if we are in the ground state (see \Cref{sec:error-reduction}).

\subsection{Problem setup, and formal statement of results}\label{sec:setup}
For a Hamiltonian evolution unitary $U$ on $\mathbb{C}^N$, we write
$$U=\sum_{k=0}^{K-1}e^{i E_k}\Pi_k, \qquad 0 \leq E_0<E_1<\dots<E_{K-1}\leq \pi.$$
The lower bound on $E_0$ is just for convenience, it only matters that modulo $2\pi$ $E_0$ should be bounded away from $E_{K-1}$.
We will also assume access to an \emph{advice-preparation unitary} $A$, with
$$A\ket{0}=\ket{\psi}=\sum_{k=0}^{K-1}\alpha_k\ket{\psi_k},\qquad \Pi_k\ket{\psi_k}=\ket{\psi_k}, \; \norm{\ket{\psi_k}}=1,$$
for some amplitudes $\alpha_0,\dots,\alpha_{K-1}$ and normalized states $\ket{\psi_0},\dots,\ket{\psi_{K-1}}$ such that for all $k$:
$$\alpha_k\in \R_{\geq 0},\; \Pi_k\ket{\psi_k}=\ket{\psi_k}.$$
We will assume that the ground state is unique, meaning $\Pi_0=\ket{\psi_0}\bra{\psi_0}$ has rank one. 

\begin{definition}[Guided ground-state preparation]
Fix $\Delta\in (0,\pi]$, $\gamma\in (0,1]$, and $\epsilon\in (0,1/2)$.
    Given query access to $(U,A)$ as above, such that $U$ has a unique ground state, $E_1\geq E_0+\Delta$, and $\alpha_0\geq \gamma$, output a state $\rho$ such that $\norm{\ketbra{\psi_0}{\psi_0} - \rho}_1 \leq \epsilon$ with constant probability of success.  
\end{definition}

\begin{definition}[Guided ground-state preparation, with estimate] Fix $\Delta\in (0,\pi]$, $\gamma\in (0,1]$, and $\epsilon\in (0,1/2)$.
    Given query access to $(U,A)$ as above, where $U$ has a unique ground state, $E_1\geq E_0+\Delta$, and $\alpha_0\geq \gamma$; and an estimate $\tilde{E}_0$ such that $|\tilde{E}_0-E_0|\leq \delta\defeq \Delta/3$ output a state $\rho$ such that $\norm{\ketbra{\psi_0}{\psi_0} - \rho}_1 \leq \epsilon$ with constant probability of success.  
\end{definition}

The main result of this work is the following.

\begin{theorem}
There is a quantum algorithm that solves the problem of  guided ground-state preparation with estimate using $O((\frac{1}{\gamma}+\log\frac{1}{\eps})\frac{1}{\Delta})$ controlled calls to $U$ and $U^\dagger$, $O(\frac{1}{\gamma})$ controlled calls to $A$ and $A^\dagger$, 
and $O\left(\frac{1}{\gamma\Delta}\left(\log N+\log^2\frac{1}{\gamma\Delta}\right)+\frac{1}{\Delta}\log\frac{1}{\Delta}\log\frac{1}{\eps}\right)$ other one- and two-qubit gates.
\end{theorem}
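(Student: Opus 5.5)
The plan has two stages: first prepare a constant-fidelity approximation of $\ket{\psi_0}$ at cost $O(\frac{1}{\gamma\Delta})$, then boost the accuracy to $\eps$ at additive cost $O(\frac{1}{\Delta}\log\frac{1}{\eps})$. For the first stage we follow the interleaved amplification/error-reduction scheme of~\cite{hmw:berrorsearch}, transplanted to the unknown eigenbasis of $U$.

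\textbf{Stage 1: an approximate reflection about the ground state.} We first build a filtered marking operation. Take a polynomial (Fourier/LCU) filter $P(e^{i\theta})$ of degree $d=O(\frac{1}{\Delta}\log\frac{1}{\eta})$, implemented with $d$ controlled applications of $U$ and $U^\dagger$. We want $|P|\approx 1$ for $|\theta-\tilde{E}_0|\le\delta$ and $|P|\le\eta$ for $\theta\ge\tilde{E}_0+2\delta$. Equivalently, this is phase estimation to precision $\delta$ followed by comparison with $\tilde{E}_0$, which is possible because of the $3\delta$ gap. The result is a unitary $M_\eta$ that, on $\ket{\psi_0}$, flips a flag qubit with error $\le\eta$. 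On every $\ket{\psi_k}$ with $k\ge1$ it flips the flag with amplitude $\le\eta$. Crucially, since $M_\eta$ is a function of $U$, it acts block-diagonally on the eigenspaces $\Pi_k$. The analysis therefore decomposes into a direct sum over $k$, each summand looking like the single-item bounded-error setting of~\cite{hmw:berrorsearch}. The only new feature is that an error now sits in a garbage register correlated with the eigenspace, not on a known basis state.

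\textbf{Stage 1: the recursion.} We run the recursion of~\cite{hmw:berrorsearch}. Let $A_0=A$, with ground overlap $\gamma$. Define $A_{j+1}$ as $A_j$ followed by one round of amplitude amplification using the marking $M_{\eta_j}$. The error parameters $\eta_j$ grow geometrically toward a constant as the overlap $\gamma_j\approx 3^j\gamma$ grows toward a constant, say $\log\frac{1}{\eta_j}=O(1+(J-j))$ with $J=\log_3\frac{1}{\gamma}$.

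The invariant we maintain is that the state $A_j\ket{0}$ has ground amplitude $\gamma_j$ and ``false-positive'' weight at most $c\gamma_j$ on the marked subspace, with garbage errors bounded in norm. Each amplification step triples both the good part and the false positives. Error reduction with $\eta_j$ then pushes the false-positive term back down, by an amount we will tune so the invariant holds at the next level.

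Counting costs, level $j$ invokes $A_{j}$ three times, so $A$ is called $3^J=O(1/\gamma)$ times. Calls to $U$ total $\sum_j 3^{J-j}\cdot O(\frac{1}{\Delta}(1+J-j))=O(\frac{1}{\gamma\Delta})$, since $\sum_i i3^{-i}$ converges. This telescoping is exactly where the $\log\frac{1}{\gamma}$ factor is avoided.

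\textbf{Main obstacle.} The step I expect to be hardest is making the false-positive and garbage bookkeeping rigorous in the eigenbasis. In~\cite{hmw:berrorsearch} the error of the marking oracle leaves residual states entangled with work registers, and those states feed into subsequent reflections. Here $M_\eta$ leaves garbage depending on $k$ and on the phase-estimation outcome. I would handle this by uncomputing the filter, so that $M_\eta$ is a clean approximate reflection: $\|M_\eta-(I-2\Pi_0)\otimes I\|\le O(\eta)$ restricted to each eigenspace, with a bound depending on the eigenvalue distance. I would then redo the~\cite{hmw:berrorsearch} inequalities with operator norms per eigenspace. If this proves delicate, the alternative is the transducer route, composing the LCU-filter transducer with the amplitude amplification transducer, which avoids error reduction altogether.

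\textbf{Stage 2 and gate count.} Stage 1 outputs $\rho$ with constant fidelity $\ge 2/3$ with $\ket{\psi_0}$. To reach trace distance $\eps$, we apply one more filter with $\eta=\eps$, costing $O(\frac{1}{\Delta}\log\frac{1}{\eps})$ calls to $U$, and measure the flag. On success, the post-measurement state is $O(\eps)$-close to $\ket{\psi_0}$; success occurs with constant probability, as the theorem allows.

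For the gate count, the overhead of Stage 1 comes from two sources. The reflections about $\ket{0}$ on $\log N$ qubits give $O(\frac{1}{\gamma}\log N)$ gates. The filter/phase-estimation arithmetic and the counters cost $O(\log^2\frac{1}{\gamma\Delta})$ per use of $U$, accounting for the $\frac{1}{\gamma\Delta}\log^2$ term. The final filter contributes $O(\frac{1}{\Delta}\log\frac{1}{\Delta}\log\frac{1}{\eps})$ gates. Together these give the claimed bounds.
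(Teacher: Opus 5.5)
Your architecture (interleaving amplitude amplification with error reduction as in~\cite{hmw:berrorsearch}, then one $\eps$-precision marking and a flag measurement) is the paper's first algorithm. However, your error schedule is inverted, and this breaks the central complexity claim. With $\log\frac{1}{\eta_j}=\Theta(1+(J-j))$, the most precise markings sit at the bottom levels, which are exactly the ones executed most often. Since $M_{\eta_j}$ occurs $3^{J-j}$ times in $A_J$, your own sum is $\sum_{j}3^{J-j}(1+J-j)/\Delta=\sum_{i=0}^{J}3^{i}(1+i)/\Delta=\Theta\bigl(\frac{1}{\gamma\Delta}\log\frac{1}{\gamma}\bigr)$. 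The convergent series $\sum_i i3^{-i}$ never arises. Nor can this be fixed within your invariant. Demanding false-positive amplitude at most $c\gamma_j$ already at level $0$ forces $\eta_0\lesssim\gamma$, because the non-ground part of $A\ket{0}$ has norm close to $1$. That means $\Omega(\frac{1}{\Delta}\log\frac{1}{\gamma})$ uses of $U$ per base-level marking, repeated $\Theta(1/\gamma)$ times, which is precisely the naive bound the theorem is meant to beat.

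The missing idea is that the base-level marking may have \emph{constant} error, so false positives may initially dominate the ground amplitude by a factor up to about $\eta_1/\gamma$. You should track the ratio $b_\ell/a_\ell$ instead. An amplification step multiplies $a_\ell$ and $b_\ell$ by the same factor $\sin(3\theta_\ell)/\sin(\theta_\ell)=3-4(a_\ell^2+b_\ell^2)$. The controlled $Q_{\eta_{\ell+1}}$ keeps each eigencomponent in its eigenspace, so its errors on the $\ket{\psi_0}$ part are only false negatives; it multiplies $b_\ell$ by at most $\eta_{\ell+1}$ and $a_\ell$ by at least $1-\eta_{\ell+1}$. With $\eta_\ell=2^{-\ell}/100$ one gets $b_L/a_L\le\frac{2\eta_1}{\gamma}\prod_{\ell=2}^{L}2\eta_\ell$, which decays superexponentially in $L$ and beats $1/\gamma\approx3^L$. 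Meanwhile $\sum_\ell\eta_\ell\le 1/100$ keeps $a_\ell$ growing by nearly $3$ per level. The cost $\sum_{\ell}3^{L-\ell}O(\ell)/\delta=O(3^L/\delta)$ is now where $\sum_\ell \ell3^{-\ell}<\infty$ genuinely enters, because the expensive markings are at the top levels, which run only $O(1)$ times. Separately, only $\alpha_0\ge\gamma$ is promised, so a fixed depth $J=\log_3\frac{1}{\gamma}$ can overshoot. The paper runs every depth $\ell\le L$ (geometric total cost); one of these has $\Omega(1)$ ground weight, and the final flag measurement selects it.
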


We prove this theorem in two different ways, using the two methods described above. We have stated the gate complexity of the second algorithm, which might differ slightly from that of the first. Combining this theorem with our previous algorithm for estimating the ground energy~\cite[Theorem~1.1]{JW:optQPE}, we immediately get the following corollary.

\begin{corollary}
There is a quantum algorithm that solves the guided ground-state preparation problem using $O(\frac{1}{\gamma\Delta}\log\frac{1}{\eps})$ controlled calls to $U$ and $U^\dagger$, $O(\frac{1}{\gamma}\log\frac{1}{\Delta}\log\log\frac{1}{\Delta}\log\frac{1}{\eps})$ controlled calls to $A$ and $A^\dagger$, 
and 
\begin{align*}
    &O\left(\left(\frac{1}{\gamma\Delta}\log\frac{1}{\gamma\Delta}+\frac{1}{\gamma}\log N\log\frac{1}{\Delta}\log\log\frac{1}{\Delta}\right)+\left(\frac{1}{\gamma\Delta}\left(\log N+\log^2\frac{1}{\gamma\Delta}\right)+\frac{1}{\Delta}\log\frac{1}{\Delta}\log\frac{1}{\eps}\right)\right)\\
    ={}& O\left( \frac{1}{\gamma\Delta}\left(\log^2\frac{1}{\gamma\Delta}+\log N\right)+\frac{1}{\gamma}\log N\log\frac{1}{\Delta}\log\log\frac{1}{\Delta} +\frac{1}{\Delta}\log\frac{1}{\Delta}\log\frac{1}{\eps} \right)
\end{align*} 
other one- and two-qubit gates.
\end{corollary}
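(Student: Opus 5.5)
The corollary follows by running two algorithms in sequence. First we use \cite[Theorem~1.1]{JW:optQPE} to get an estimate $\tilde{E}_0$ with $|\tilde{E}_0-E_0|\leq\delta=\Delta/3$. Then we feed $\tilde{E}_0$ into the algorithm of the main theorem above. The only real work is the failure probability and the resource count.

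\textbf{Step 1: energy estimation.} Take $\delta=\Delta/3$ and target failure probability $\eps'=\Theta(\eps)$, say $\eps/4$. By \cite[Theorem~1.1]{JW:optQPE}, with probability at least $1-\eps'$ this gives $\tilde{E}_0$ with $|\tilde{E}_0-E_0|\leq\delta$. I expect that theorem, stated with confidence parameter $\eps'$, to use:
\begin{itemize}
\item $O(\frac{1}{\gamma\delta}\log\frac{1}{\eps'})=O(\frac{1}{\gamma\Delta}\log\frac1\eps)$ controlled calls to $U^{\pm1}$;
\item $O(\frac1\gamma\log\frac1\Delta\log\log\frac1\Delta\log\frac1\eps)$ calls to $A^{\pm1}$;
\item $O\big((\frac{1}{\gamma\Delta}\log\frac{1}{\gamma\Delta}+\frac1\gamma\log N\log\frac1\Delta\log\log\frac1\Delta)\log\frac1\eps\big)$ other gates.
\end{itemize}
The $\log\frac1\eps$ factor multiplies these counts if confidence is boosted by a median of independent repetitions. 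The first parenthesis of the displayed gate bound is exactly this cost. The hypotheses $\Delta\in(0,\pi]$ and $\alpha_0\geq\gamma$ match those of \cite{JW:optQPE}.

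\textbf{Step 2: state preparation.} Conditioned on success of Step 1, the instance is exactly guided ground-state preparation with estimate. Run the main theorem with accuracy $\eps/2$. Its costs add directly:
\begin{itemize}
\item $O((\frac1\gamma+\log\frac1\eps)\frac1\Delta)$ calls to $U^{\pm1}$, which is $O(\frac{1}{\gamma\Delta}\log\frac1\eps)$ since $\gamma\leq1$ and $\eps<1/2$;
\item $O(\frac1\gamma)$ calls to $A^{\pm1}$, absorbed by the Step 1 count;
\item the second parenthesis of gates.
\end{itemize}

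\textbf{Error accounting.} This is the step needing the most care. If Step 1 fails, Step 2 is run with a bad estimate and may output garbage. Two fixes are available:
\begin{itemize}
\item Mixing: the output is $(1-\eps')\rho_{\rm good}+\eps'\rho_{\rm bad}$, so the trace distance to $\ketbra{\psi_0}{\psi_0}$ is at most $\eps/2+2\eps'\leq\eps$, given $\|\rho_{\rm good}-\psi_0\|_1\leq\eps/2$.
\item Constant-confidence Step 1: only constant success probability is required overall, so Step 1 could be run with constant confidence. This would drop the $\log\frac1\eps$ factors from Step 1.
\end{itemize}
The stated bound carries those $\log\frac1\eps$ factors, so the authors presumably want the $\eps$-closeness to hold on the success event with high confidence. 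I would therefore use the first, mixing-based accounting. The main theorem's "constant probability of success" composes with the success of Step 1 to give constant overall success.

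\textbf{Final simplification.} The last line of the display comes from simplifying the gate sum. The term $\frac{1}{\gamma\Delta}\log\frac{1}{\gamma\Delta}$ is absorbed by $\frac{1}{\gamma\Delta}\log^2\frac{1}{\gamma\Delta}$. The remaining terms are kept as they are, matching the stated form. I would check that the $\log\frac1\eps$ factors on Step 1 gate terms appear as the corollary states. If they do not, I would state the Step 1 cost with the confidence parameter exactly as \cite[Theorem~1.1]{JW:optQPE} gives it.
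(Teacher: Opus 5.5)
Your overall route matches the paper's. There the corollary is obtained by running \cite[Theorem~1.1]{JW:optQPE} to get $\tilde E_0$ with $|\tilde E_0-E_0|\leq\Delta/3$, then running the main theorem. The gap is in your gate count for Step~1. Under the high-confidence (mixing) accounting you adopt, your own Step~1 gate cost is $\bigl(\frac{1}{\gamma\Delta}\log\frac{1}{\gamma\Delta}+\frac{1}{\gamma}\log N\log\frac{1}{\Delta}\log\log\frac{1}{\Delta}\bigr)\log\frac{1}{\eps}$. That is not ``exactly'' the first parenthesis of the display, which has no $\log\frac1\eps$. The extra factor is also not absorbed by the other terms. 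Take constant $\Delta$, small $\gamma$ and $\log\frac1\eps\gg\log\frac1\gamma+\log N$: then $\frac{1}{\gamma}\log\frac{1}{\gamma}\log\frac1\eps$ is not $O$ of the displayed expression. So as set up, your argument proves a weaker gate bound than the one stated. Your closing ``I would check'' leaves exactly this point open.

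The fix is the option you discarded. The problem only asks for an $\eps$-close output with constant success probability, so Step~1 can be run at constant confidence. On the constant-probability event that $\tilde E_0$ is good, the main theorem applies and succeeds with constant probability, so overall success is constant. By your own reading of \cite[Theorem~1.1]{JW:optQPE}, Step~1 then costs $O(\frac{1}{\gamma\Delta})$ calls to $U^{\pm1}$, $O(\frac1\gamma\log\frac1\Delta\log\log\frac1\Delta)$ calls to $A^{\pm1}$, and exactly the first gate parenthesis. Adding the main theorem gives $O(\frac{1}{\gamma\Delta}+\frac{1}{\Delta}\log\frac1\eps)=O(\frac{1}{\gamma\Delta}\log\frac1\eps)$ calls to $U^{\pm1}$ (using $\eps<1/2$). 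The calls to $A^{\pm1}$ stay within the stated bound, and the gate count is exactly the stated one. Your inference that the $\log\frac1\eps$ factors in the stated query counts mean high confidence is required does not follow: they are upper bounds and need not be spent. If you do want the mixing guarantee, the correct conclusion is the stated bound with an extra $\log\frac1\eps$ factor on the first gate parenthesis.
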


We have thus improved the number of queries to $U$, typically the dominating factor in the complexity, by a factor of $O(\log\frac{1}{\gamma})$ relative to the previous best known algorithm. By \cite{SdW:optQPE}, this is optimal.

\paragraph{Concurrent work.} An independent work~\cite{chen2026optimal} similarly builds on~\cite{JW:optQPE}, using transducers to get an algorithm for ground-state preparation which is similarly optimal in the number of queries to $U$ and $A$.

\paragraph{AI statement.}
The majority of high-level ideas in this work are human, but some important ideas in \Cref{sec:transducerbasedproof} come from LLMs, namely the leak schedule in the amplitude amplification transducer (while the overall structure of that transducer was a human idea); and the choice of filter function in \Cref{sec:filter}, as well as the method for preparing its Fourier coefficients. The algorithm in \cref{sec:HMWbasedproof} is fully devised by the authors.

We also used LLMs for generating some figures and proofs, and extensively for proofreading. All proofs and text have been checked carefully and rewritten as needed, and the authors take full responsibility for the correctness of all parts of the paper.

\section{An algorithm using~\cite{hmw:berrorsearch}}\label{sec:HMWbasedproof}

\subsection{Increasing the overlap with the ground state from $\gamma$ to a constant}

We want to use a version of amplitude amplification to grow the overlap of $A\ket{0}$ with the ground state $\ket{\psi_0}$ from $\gamma$ to a constant.
To that end we need a unitary that ``marks'' the unique ground state, in the sense of setting a flag qubit to~1 if the first register contains $\ket{\psi_0}$, and keeping the flag at 0 if the first register contains any state orthogonal to $\ket{\psi_0}$. We cannot do this marking perfectly, but we can do it with small, tunable error~$\eta$ using phase estimation if we already have a good estimate of $E_0$:

\begin{lemma}
Suppose all eigenvalues of $H$ are in $[0,\pi]$ and we have an estimate $\tilde{E}_0$ such that $|E_0-\tilde{E}_0|\leq\delta$, and all other eigenvalues are at least $E_0+3\delta$. 
For each $\eta>0$, there exists a unitary $Q_\eta$ involving $O(\log(1/\eta)/\delta)$ controlled applications of $U$ and $U^{-1}$, and with $m=O(\log(1/\eta)\log(1/\delta))$ auxiliary qubits that are initially~$\ket{0}$ (and are mostly $\ket{0}$ again at the end), such that for all reals $a,b$ such that $a^2+b^2=1$,  state $\ket{w_0}$, and state $\ket{\psi_0^\perp}$ that has no support on $\ket{\psi_0}$ in its first register, we have
\[
\norm{Q_\eta(a\ket{\psi_0}\ket{w_0}\ket{0^{m+1}}+b\ket{\psi_0^\perp}\ket{0^{m+1}})
-
(a\ket{\psi_0}\ket{w_0}\ket{0^m}\ket{1}+b\ket{\psi_0^\perp}\ket{0^m}\ket{0})
}\leq\eta.
\]
\end{lemma}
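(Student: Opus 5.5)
The construction is the standard one: phase estimation, a threshold test, copying the verdict to a flag, then uncomputing the phase estimate, with the success probability boosted by majority voting over independent runs. I describe it on a single eigenvector and then extend by linearity.

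\textbf{Single run.} Let $P$ be phase estimation with $t=O(\log(1/\delta))$ output qubits and precision $\delta/2$, using $O(1/\delta)$ controlled applications of $U$ and $U^{-1}$. It acts on an eigenvector of $U$ with eigenphase $E$. With probability at least $0.9$ its outcome $\tilde{E}$ satisfies $|\tilde{E}-E|<\delta/2$, where distance is measured modulo $2\pi$. To get this constant success probability we can use either a few extra qubits or the standard cosine/Kaiser-window variant. Next, a comparator writes the bit
\[
[\,|\tilde{E}-\tilde{E}_0|\leq 1.5\delta\,]
\]
into a fresh qubit.

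For $E=E_0$, a good outcome gives $|\tilde{E}-\tilde{E}_0|<1.5\delta$, so the bit is $1$. For $E\geq E_0+3\delta$, a good outcome gives $\tilde{E}-\tilde{E}_0>E-\delta/2-E_0-\delta\geq1.5\delta$, so the bit is $0$. We also need $E$ not to wrap around modulo $2\pi$ close to $\tilde E_0$. This holds because all eigenvalues lie in $[0,\pi]$, so $2\pi-E\geq\pi\geq 3\delta$ up to constants; if needed, run phase estimation on $U^{1/2}$-scaled phases, i.e. use $U$ with phases interpreted in $[0,2\pi)$ and pay a constant factor.

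\textbf{Amplification.} Run $r=O(\log(1/\eta))$ independent copies of this procedure, each on its own block of $t+1$ ancillas. Compute the majority of the $r$ bits, XOR it into the flag qubit, and then run all the copies' computations in reverse. The totals are $O(\log(1/\eta)/\delta)$ uses of $U$ and $U^{-1}$ and $m=O(\log(1/\eta)\log(1/\delta))$ ancillas, as required.

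\textbf{Error analysis.} This is the main point: the basis of $\ket{\psi_0^\perp}$ is unknown, but everything is block-diagonal in the eigenbasis of $U$. Write $\ket{\psi_0^\perp}=\sum_{k\geq1}\Pi_k\ket{\psi_0^\perp}$, allowing arbitrary entanglement with the work register. On each eigenspace $k$, the forward circuit $W$ (all phase estimations plus comparators) maps $\ket{\phi}\ket{0^m}$ to $\ket{\phi}\ket{g_k}$, where $\ket{g_k}$ depends only on $E_k$ and the eigenvector register is untouched. Decompose $\ket{g_k}=\ket{g_k^{\rm good}}+\ket{g_k^{\rm bad}}$ according to whether the majority bit equals the correct value $c_k=[k=0]$. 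By a Chernoff bound, $\|\ket{g_k^{\rm bad}}\|^2\leq e^{-\Omega(r)}\leq\eta^2/4$.

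Let $X$ be the operation that XORs the majority into the flag. Then
\[
XW\ket{\phi}\ket{0^m}\ket{0}=W\ket{\phi}\ket{0^m}\ket{c_k}+\ket{\phi}(\text{error of norm}\leq 2\|g_k^{\rm bad}\|).
\]
Applying $W^\dagger$ and using $W^\dagger W=I$ shows that eigenspace $k$ ends within $\eta$ of $\ket{\phi}\ket{0^m}\ket{c_k}$.

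These error vectors for different $k$ lie in orthogonal eigenspaces. The total squared error is therefore $\sum_k|{\rm weight}_k|^2\eta^2\leq\eta^2$, and the statement follows for all $a,b$ and all $\ket{w_0}$ simultaneously. Along the way, one must check that the comparator and majority circuits are reversible, and that the wrap-around/normalization of phases is correct.
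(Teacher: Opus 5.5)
Your proposal is correct and is essentially the paper's argument: $r=O(\log(1/\eta))$ independent precision-$\delta/2$ phase estimations, a majority vote of the threshold bits written into the flag, and uncomputation, with a Chernoff bound giving error $e^{-\Omega(r)}$. Your eigenspace-by-eigenspace error analysis and your two-sided circular threshold test fill in details that the paper's sketch leaves implicit. The circular test handles wrap-around cleanly, and since $3\delta\leq E_1-E_0\leq\pi$, the $U^{1/2}$ fallback you mention is never needed.
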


\begin{proof}
We just sketch the idea, using standard techniques.
Use phase estimation with precision $\delta/2$, $r=O(\log(1/\eta))$ times on the state. Apply an $X$-gate on the last qubit conditioned on having more than half of the $r$ estimates being less than $\tilde{E}_0+1.5\delta$. By a Chernoff bound, the error can be made  $\ll\eta^2$. Now reverse the $r$ phase estimations to set the $m$ auxiliary qubits back to $\ket{0}$.
\end{proof}

The number of auxiliary qubits here can be reduced by using QSVT to implement $Q_\eta$.

Since we will vary $\eta$ in different applications of $Q_\eta$, the number of auxiliary qubits~$m$ will also vary. We will use a fresh batch of auxiliary qubits in each application of $Q_\eta$, absorbing those qubits into the workspace instead of trying to set them back to 0. , but we can use the same $m$ (determined by the smallest $\eta$ we'll use) for all these different approximate-$Q$s; if the actual  number of auxiliary qubits needed by a particular $Q_\eta$ is smaller than this~$m$, then the rest can just stay $\ket{0}$. Below we will omit writing those $m$ auxiliary qubits for simplicity.

For errors $\eta_1,\eta_2,\ldots$ that we will choose later, we recursively define algorithms on a growing number of qubits as follows:
\[
A_1=Q_{\eta_1}A
\mbox{~~~and~~~}
A_{\ell+1}=Q^c_{\eta_{\ell+1}}
((\underbrace{A_\ell
(I-2\ketbra{0}{0})
A_\ell^{-1})
(I\otimes Z)}_{{\rm amplitude~amplification~step}~B}A_\ell)\otimes I_2)
\]
We have 
\[
A\ket{0}=\gamma'\ket{\psi_0}\ket{w_0}+\sqrt{1-\gamma'^2}\ket{\psi_0^\perp},
\]
where $\gamma'$ is at least the known value $\gamma$, $\ket{w_0}$ is the workspace of the guiding-state-generating algorithm, and $\ket{\psi_0^\perp}$ is some two-register state that has no support on $\ket{\psi_0}$ in its first register.
Applying $Q_{\eta_1}$ with an extra $\ket{0}$-qubit (which is intended to  flag whether the first register contains $\ket{\psi_0}$), we have a state
\begin{equation}\label{eq:A10decomposition}
A_1\ket{0}\ket{0} =Q_{\eta_1}(A\ket{0})\ket{0}=a_1\ket{\psi_0}\ket{w'_0}\ket{1}+b_1\ket{(\psi_0^\perp)'}\ket{1}+\sqrt{1-|a_1|^2-|b_1|^2}\ket{\chi_1}\ket{0}.
\end{equation}
where $a_1,b_1$ are nonnegative reals satisfying $a_1\geq\gamma(1-\eta_1)$ and $b_1\leq \eta_1$.

Each of the algorithms $A_2,A_3,\ldots$ adds the fresh auxiliary $\ket{0}$-qubits that it needs for its application of $Q_\eta$, and one more $\ket{0}$-qubit  that will be used as the new flag qubit by its application of $Q_{\eta}$.
The $Q_{\eta}$ will be controlled by the penultimate qubit of the state, which was the flag qubit of $A_\ell$. The superscript $c$ indicates this control.

The goal of these algorithms is to increase the weight of $\ket{\psi_0}$ in the first register, and to mark it with a 1 in the newly added qubit at the end of the state.
Note that $A_\ell$ involves 1 application of $Q_{\eta_\ell}$, 3 applications of $Q_{\eta_{\ell-1}},\ldots,3^{\ell-1}$ applications of $Q_{\eta_1}$. 
Since $Q_\eta$ uses $O(\log(1/\eta)/\delta)$ applications of $U$ and $U^{-1}$, the total number of applications if we recurse $L$ times is 
\begin{equation}\label{eq:cost of tildeA}
\sum_{\ell=1}^L  3^{L-\ell}O(\log(1/\eta_\ell)/\delta).
\end{equation}
We now want to choose $L$ and $\eta_1,\ldots,\eta_L$ in such a way that the first register of $A_L\ket{0}$ has $\Omega(1)$ overlap with $\ket{\psi_0}$, while keeping the complexity of Equation~\eqref{eq:cost of tildeA} to be $O(1/\gamma\delta)$. 

Let's analyze what happens when going from $A_\ell$ to $A_{\ell+1}$.
We split the state $A_\ell\ket{0}$ into three pieces like we did for $A_1\ket{0}$ in Equation~\eqref{eq:A10decomposition}: the part that has $\ket{\psi_0}$ in the first register and where the last qubit (the flag) is~1, the part where the first register is orthogonal to $\ket{\psi_0}$ and yet the last qubit is~1 (the ``false positive'' part of the state), and the part where the last qubit is~0.
For some nonnegative reals $a_\ell,b_\ell$, and normalized states $\ket{w_0}$, $\ket{\psi_0^\perp}$ (whose first register has no support on 
$\ket{\psi_0}$), and $\ket{\chi_\ell}$, we can write 
\[
A_\ell\ket{0}=a_\ell\ket{\psi_0}\ket{w_0}\ket{1}+b_\ell\ket{\psi_0^\perp}\ket{1}+\sqrt{1-a_\ell^2-b_\ell^2}\ket{\chi_\ell}\ket{0}.
\]
Define angle $\theta_\ell\in[0,\pi/2]$ such that $\sin(\theta_\ell)^2=a_\ell^2+b_\ell^2$, which is the squared norm of the part of the state ending in $\ket{1}$. Then $\cos(\theta_\ell)^2=1-a_\ell^2-b_\ell^2$ is the squared norm of the part ending in~$\ket{0}$.

$B=(A_\ell
(I-2\ketbra{0}{0})
A_\ell^{-1})\cdot (I\otimes Z)$ is the product of a reflection about the state ending with $\ket{1}$ followed by a reflection about $A_\ell\ket{0}$. 
This is one perfect amplitude amplification step, which amplifies the part of the state ending in~$\ket{1}$, increasing the angle from $\theta_\ell$ to $3\theta_\ell$. We obtain 
\[
BA_\ell\ket{0}=\frac{\sin(3\theta_\ell)}{\sin(\theta_\ell)}a_\ell\ket{\psi_0}\ket{w_0}\ket{1}+\frac{\sin(3\theta_\ell)}{\sin(\theta_\ell)}b_\ell\ket{\psi_0^\perp}\ket{1}+\cos(3\theta_\ell)\ket{\chi_\ell}\ket{0}.
\]
This step has increased the part of the state where the flag-qubit is~1 by a factor $\sin(3\theta_\ell)/\sin(\theta_\ell)$, which is approximately~3 as long as $\theta_\ell$ is small. This step amplifies both the part of the state with $\ket{\psi_0}$ in the first register, as intended, but equally amplifies the ``false positive'' state $\ket{\psi_0^\perp}\ket{1}$.

We now want to use $Q_{\eta_{\ell+1}}$ to ``push down'' the weight of this false positive state.
$A_{\ell+1}=Q^c_{\eta_{\ell+1}}(B\otimes I)(A_\ell\otimes I)$ applies $Q_{\eta_{\ell+1}}$ on the above state, conditioned on its last qubit, with an extra $\ket{0}$-qubit that should flag whether the first register is $\ket{\psi_0}$. 
This slightly reduces the amplitude on the part of the state that has $\ket{\psi_0}$ in the first register and the final qubit set to~1, but it greatly decreases the amplitude on $\ket{\psi_0^\perp}\ket{1}$. Hence this step greatly increases the relative weight of the state with $\ket{\psi_0}$ within the part of the state where the newly introduced flag-qubit is $\ket{1}$.
Write the new state (which has one qubit more than $A_\ell\ket{0}$) as
\[
A_{\ell+1}\ket{0}
=a_{\ell+1}\ket{\psi_0}\ket{w'_0}\ket{1}+b_{\ell+1}\ket{(\psi_0^\perp)'}\ket{1}+\sqrt{1-|a_{\ell+1}|^2-|b_{\ell+1}|^2}\ket{(\chi_{\ell+1})'}\ket{0}.
\]
We now want to show that $a_\ell$ grows by a factor nearly~3 in each recursive step, and that the ratio between $b_\ell$ and $a_\ell$ goes down significantly (the latter means that more and more of the flagged substate has the actual ground state $\ket{\psi_0}$ in its first register).
First, using the trigonometric identity $\sin(3\theta)/\sin(\theta)=3-4\sin(\theta)^2$, we have
\[
a_{\ell+1}\geq \frac{\sin(3\theta_\ell)}{\sin(\theta_\ell)}a_\ell(1-\eta_{\ell+1})\geq a_\ell(3-4(a_\ell^2+b_\ell^2))(1-\eta_{\ell+1}).
\]
Unfolding these inequalities, we have
\[
a_L\geq 3^{L-1} a_1(1-\frac{4}{3}\sum_{\ell=1}^{L-1}(a_\ell^2+b_\ell^2)-\sum_{\ell=1}^{L-1}\eta_{\ell+1}).
\]
Second, because $B$ keeps the ratio between $b_\ell$ and $a_\ell$ the same but $Q_{\eta_{\ell+1}}$ shrinks the former by at least a factor $\eta_{\ell+1}$ but the latter by at most $1-\eta_{\ell+1}$, we have
\[
\frac{b_{\ell+1}}{a_{\ell+1}}\leq \frac{b_\ell}{a_\ell}\cdot\frac{\eta_{\ell+1}}{1-\eta_{\ell+1}}.
\]
Unfolding this, using that we will choose $\eta_\ell$ to be $\leq 1/2$ for all $\ell$, and hence $1/(1-\eta_{\ell+1})\leq 2$, gives
\[
\frac{b_{L}}{a_{L}}\leq\frac{b_1}{a_1}\prod_{\ell=1}^{L-1} 2\eta_{\ell+1}.
\]
Recall that $a_1\geq\gamma(1-\eta_1)\geq \gamma/2$ and $b_1\leq\eta_1$, hence $b_1/a_1\leq 2\eta_1/\gamma$.

We will choose 
\[
L=\floor{\log_3(1/\gamma)}\mbox{ and }\eta_{\ell}=2^{-\ell}/100.
\]
Note that because $\sum_{\ell\geq 1}\ell\cdot 3^{-\ell}=O(1)$, the complexity of Equation~\eqref{eq:cost of tildeA} is then $O(3^L/\delta)=O(1/\gamma\delta)$, as promised. 

With these choices, we will quickly shrink $b_\ell$ relative to $a_\ell$: $b_\ell/a_\ell\leq (b_1/a_1)(1/100)^{\ell-1}2^{-\ell(\ell-1)/2}$, and $b_1/a_1\leq 2\eta_1/\gamma$.
We also have $\sum_{\ell=1}^{L-1}\eta_{\ell+1}\leq 1/100$. Working this out, if $a_1$ is not too much bigger than $\gamma$, then $A_L\ket{0}$ will have $a_L=\Omega(1)$, i.e., it will have an $\Omega(1)$ weight on the subspace where the first register is the ground state $\ket{\psi_0}$ and the flag-qubit is~1.
Because $a_1$ might be much larger than the lower bound $\gamma$, we can actually try out all recursion depths $\ell\in\{1,\ldots,L\}$. Since their costs scale like $3^\ell/\delta$, trying all is asymptotically not more expensive than only trying $L$. One of those runs (namely the $\ell$ where $\gamma 3^\ell$ is within a small constant factor of the actual unknown overlap) will produce a state with an $\Omega(1)$ weight on the subspace where the first register is the ground state $\ket{\psi_0}$.
Measuring the flag qubit for each of these states gives a state (possibly mixed) that has constant overlap (trace inner product) with $\ket{\psi_0}$.

\subsection{Increasing the overlap from a constant to $1-\eps$}\label{sec:error-reduction}

Now suppose we have an algorithm~$\cal A$, maybe obtained by the previous method, such that ${\cal A}\ket{0}=c\ket{\psi_0}+\sqrt{1-c^2}\ket{\psi_0^\perp}$ for some $c>1/100$.
We use $Q_\eps$ on the state ${\cal A}\ket{0}$ to distinguish eigenvalues $< \tilde{E}_0+1.5\delta$
and $>\tilde{E}_0+1.5\delta$, setting a flag qubit to~1 for the former. Then we measure the flag qubit. With constant probability we will see~1, in which case the first register is $\eps$-close to $\ket{\psi_0}$. Repeating this $O(1)$ times gives a success probability 0.99 of generating an $\eps$-approximation of $\ket{\psi_0}$. 
The cost is $O(1)$ applications of ${\cal A}$ (each involving $O(1/\gamma\delta)$ applications of $U$ and $U^{-1}$ if obtained by the method from the previous section) and $O(\log(1/\eps)/\delta)$ applications of $U$ and $U^{-1}$, so the total cost is $O((1/\gamma+\log(1/\eps))/\delta)$.

\section{An algorithm using transducers}\label{sec:transducerbasedproof}

An alternative optimal ground-state preparation algorithm can be derived using transducers. It {is arguably} a bit more involved than the one given in \cref{sec:HMWbasedproof}, but it follows a very natural strategy, and we expect the techniques we develop to be useful in other settings as well.
We first describe an algorithm that uses standard techniques, but does incur a suboptimal logarithmic factor.
We have access to $U$ and $U^\dagger$, and to a state preparation unitary $A$ (and $A^\dagger$) with $A\ket{0}=\ket{\psi}$, as described in \Cref{sec:setup}. Assume without loss of generality that $\abs{E_0} \leq \delta$, which we can always achieve if we know $E_0$ to precision $\delta$. Recall the setting we work in: we assume that the next eigenphase satisfies $E_1 \geq 2\delta$, and we can prepare a state $\ket{\psi}$ that has overlap at least $\gamma$ with the ground state.
Now, consider a real-valued function $f_\delta$ on $[-\pi,\pi)$ which is such that:
\begin{enumerate}
    \item $\abs{f_\delta(t)}$ is very small for $t \geq 2\delta$ (ideally zero, but $O(\gamma)$ will be good enough),
    \item $f_\delta(t)$ is larger than some constant for $\abs{t} \leq \delta$,
    \item\label{it:fourier} $f_\delta$ is a trigonometric polynomial of degree $J$, i.e. its Fourier series can be written as
    \begin{align*}
        f_\delta(t) = \sum_{j=-J}^J b_j e^{ijt},
    \end{align*}
    with normalization $\sum_j \abs{b_j} = 1$.
\end{enumerate}
Such functions exist with $J = O(\delta^{-1}\log(\gamma^{-1}))$, see e.g. \cite{linlin&tong:groundstateprep}.
By the normalization condition in \cref{it:fourier}, we can implement a linear combination of unitaries (LCU), using an ancilla register $\C^{2J+1}$
\begin{align}\label{eq:lcu-sketch}
    \ket{\psi}\ket{0} \mapsto \Bigl( \sum_j b_j U^j \ket{\psi} \Bigr)\ket{0} + \ket{\psi^\perp}
\end{align}
where $\ket{\psi^\perp}$ is orthogonal to $\ket{0}$ in the second register. We call the subspace where the second register is in the state $\ket{0}$ the `good' subspace.
If we apply this to an eigenstate $\ket{\psi_k}$ with eigenphase $E_k$, this becomes
\begin{align*}
    \ket{\psi_k}\ket{0} \mapsto f_\delta(E_k)\ket{\psi_k}\ket{0} + \ket{\psi^\perp},
\end{align*}
so by the properties of $f_\delta$, and our assumptions on the spectrum, on the good subspace, this implements an approximate projection onto the ground state space.
When applied to $\ket{\psi}$, the resulting amplitude is at least a constant times the overlap with the ground state, so $O(\gamma^{-1})$ rounds of amplitude amplification suffice to prepare the ground state with constant error, and each round uses $U$ and $U^\dagger$ each $J = O(\delta^{-1}\log(\gamma^{-1}))$ times, and the state preparation $A, A^\dagger$ $O(1)$ times.

The reason for the suboptimality of this approach is that we need to truncate the degree $J$. However, when we implement $\sum_j b_j U^j$, if for large $j$ the $b_j$ are small, this intuitively means we only use $U^j$ with a low weight, for which we might not like to pay full price. Transducers make this intuition precise, and allow us to implement \cref{eq:lcu-sketch} with complexity $\sum_j \abs{j} \abs{b_j}$, so it is even acceptable to have infinite Fourier expansions, as long as the coefficients have sufficient decay. This way of counting complexity costs is different from the usual one, but can be realized in quantum algorithms, by using transducers, as we explain below.

\subsection{Transducers}\label{sec:transducers}

In this section, preliminary to the main content of \Cref{sec:transducerbasedproof}, we review \emph{transducers}, a model of quantum computation introduced in \cite{belovs2024taming}. Roughly speaking, a transducer represents a bounded-error quantum algorithm, in a way that is very convenient for composition. In particular, a transducer can often represent an algorithm in an exact way, even when it is only possible to implement this algorithm with error. See, for example~\cite[Section~13]{belovs2024taming} or \cite{belovs2024purifier}. This allows bounded-error algorithms to be composed without log factors, overcoming the issue described in the previous section. 

\paragraph{Transducers definition.} Let $S$ be a unitary acting on a direct sum of spaces, ${\cal H}\oplus {\cal L}$. Then by \cite[Theorem~3.1]{belovs2024taming}, there exists a unitary $U_S$ on ${\cal H}$ such that for any $\ket{\xi}\in {\cal H}$, there exists a vector $\ket{v}\in {\cal L}$ such that
$$S(\ket{\xi}+\ket{v})=U_S\ket{\xi}+\ket{v}.$$
Moreover, for any $\ket{\xi}\in {\cal H}$ and $\ket{v}\in{\cal L}$, if $S(\ket{\xi}+\ket{v})=\ket{\tau}+\ket{v}$, for $\ket{\tau}\in{\cal H}$ then $\ket{\tau}=U_S\ket{\xi}$. We call such a vector $\ket{v}$ a \emph{catalyst} for $\ket{\xi}$. A catalyst is not unique, and may have any norm, but there is a unique smallest catalyst. 

The unitary $S$, along with the decomposition of its space into ${\cal H}$, which we call the \emph{public space}, and ${\cal L}$, which we call the \emph{private space}, is called a \emph{transducer}, and it can be seen to ``represent'' the computation $U_S$, which we call its \emph{transduction action}, and write $S:\ket\xi\rightsquigarrow\ket\tau$, or $\ket\xi\overset{S}{\rightsquigarrow}\ket\tau$, and say that $S$ \emph{transduces} $\ket\xi$ into $\ket\tau$, for $\ket{\tau}=U_S\ket{\xi}$. Indeed, we think of $U_S$ as some action we would like to perform, and $S$ is often an action that is much easier to implement. It is not obvious, but it turns out that by making sufficiently many calls to $S$, we can approximate the action $U_S$.

\begin{theorem}[Informal]\label{thm:informal}
    For a transducer $S$ and error parameter $\eps$, there is a quantum algorithm that implements $U_S$ with error $\eps$ using $K$ applications of $S$ and $O(K)$ other one- and two-qubit gates, as long as the input $\ket{\xi}$ has a catalyst $\ket{v}$ such that $K\geq 1+\frac{4}{\eps^2}\norm{\ket{v}}^2$.
\end{theorem}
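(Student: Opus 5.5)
The plan is to run $S$ repeatedly on the input plus a catalyst, in a way that lets the catalyst ``pay'' for the imperfection, as in Belovs' original argument \cite{belovs2024taming}. Work on $\mathcal H\oplus\mathcal L$, start from $\ket{\xi}$, and use a counter register $\ket{t}$, $t\in\{0,\dots,K-1\}$, prepared in the uniform superposition $\frac{1}{\sqrt K}\sum_t\ket t$. The circuit is a product of a few reflections/swaps and $K$ controlled applications of $S$. Each branch $t$ applies $S$ a number of times that depends on $t$, with the $\mathcal L$-part acting as a ``tape'' that is re-fed. We then uncompute the counter with the inverse Fourier/uniform preparation. The step count is clearly $K$ uses of $S$ and $O(K)$ gates for the counter arithmetic.

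The core calculation is a telescoping identity. Write $\ket{\xi}+\ket v$ with $S(\ket\xi+\ket v)=U_S\ket\xi+\ket v$. Consider the ideal vector $\ket{\Phi}=\frac{1}{\sqrt K}\sum_t \ket{t}\otimes(\ldots)$, in which every branch has already converted $\ket\xi$ into $U_S\ket\xi$ and the catalyst sits unchanged. Compare it with the actual vector produced from $\ket\xi$ without a catalyst. The discrepancy is concentrated in the boundary terms $t=0$ and $t=K-1$, where the catalyst must be ``created'' or ``destroyed'': the actual run is missing $+\ket v$ at one end and carries an extra $-\ket v$ at the other. All interior branches agree exactly because $S$ fixes the pattern $\ket\xi+\ket v\mapsto U_S\ket\xi+\ket v$. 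The distance between actual and ideal final states is then bounded by roughly $\frac{2\norm{\ket v}}{\sqrt K}$, up to the normalization $\sqrt{1+\norm{v}^2/K}$. Squaring gives error at most $\eps$ once $K\ge 1+4\norm{\ket v}^2/\eps^2$. After the counter is uncomputed, the public register holds $U_S\ket\xi$ up to $\eps$ and the private space is returned close to $0$.

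The main obstacle is designing the circuit so that interior branches match exactly. The natural first attempt is Belovs' construction: alternate $S$ with a cyclic shift on a ``time'' register, so that the private output of one step becomes the private input of the next. Equivalently, implement $S$ on $\mathcal H\oplus\mathcal L$, where $\mathcal L$ carries a time index and the public register at time $t$ is the input for time $t+1$. With the uniform superposition over starting times, the state $\frac1{\sqrt K}\sum_t\ket t(\ket\xi+\ket v)$ is an approximate fixed point up to the two boundary terms. Checking that the shift/counter bookkeeping produces exactly those two boundary errors, and no others, is the delicate part. Since this theorem is stated informally and attributed to \cite{belovs2024taming}, one could alternatively simply cite \cite[Theorem~4.9]{belovs2024taming} (or the corresponding implementation theorem there) for the precise version.
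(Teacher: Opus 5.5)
The paper gives no proof of this statement. It is an informal restatement of the implementation theorem of \cite{belovs2024taming}; the paper attributes the formal version, \Cref{thm:transducer-to-alg}, to Theorem~3.3 there. So citing is the paper's route, but cite Theorem~3.3 rather than~4.9.

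Your self-contained sketch, however, has a genuine gap exactly where the $1/\sqrt K$ is supposed to come from. Your ideal vector $\frac{1}{\sqrt K}\sum_t\ket t(\ket\xi+\ket v)$ carries one catalyst copy per branch, so the catalyst it uses has norm $\norm{\ket v}$. The catalyst-free input $\frac1{\sqrt K}\sum_t\ket t\ket\xi$ is therefore missing $\frac1{\sqrt K}\sum_t\ket t\ket v$ in \emph{every} branch. The linearity argument then bounds the error only by $2\norm{\ket v}$, independent of $K$, and nothing in your setup confines the discrepancy to two boundary terms of size $\norm{\ket v}/\sqrt K$. The scheduling you describe also fails. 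If the public register at time $t$ is the input at time $t+1$, or branch $t$ applies $S$ a $t$-dependent number of times to its public content, then $U_S\ket\xi$ is fed back into $S$. You would be implementing powers of $U_S$ rather than $U_S$.

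The missing idea is to process $K$ separate public slots one at a time while reusing a single catalyst.
\begin{itemize}
\item Take public space $\mathcal H\otimes\C^K$ with input $\ket\Xi=\frac1{\sqrt K}\sum_t\ket\xi\ket t$, and a single copy of $\mathcal L$.
\item Let the $s$-th application $S^{(s)}$ act as $S$ on $(\mathcal H\otimes\ket s)\oplus\mathcal L$ and as the identity on the other slots.
\item By linearity, $\ket v/\sqrt K$ is a catalyst for $\ket\xi/\sqrt K$. So $S^{(s)}$ turns slot $s$ from $\ket\xi$ into $U_S\ket\xi$ and returns $\ket v/\sqrt K$ to $\mathcal L$ for step $s+1$.
\item Hence $V=S^{(K-1)}\cdots S^{(0)}$ satisfies $V(\ket\Xi+\ket v/\sqrt K)=\ket T+\ket v/\sqrt K$, with $\ket T=\frac1{\sqrt K}\sum_t U_S\ket\xi\ket t$.
\item This gives $\norm{V\ket\Xi-\ket T}\leq2\norm{\ket v}/\sqrt K$. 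Undoing the uniform superposition on the slot register then gives error at most $\eps$ once $K\geq4\norm{\ket v}^2/\eps^2$, which the stated condition implies.
\end{itemize}

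Each slot passes through $S$ exactly once. Only the single catalyst copy, of norm $\norm{\ket v}/\sqrt K$, is carried through all $K$ applications, and that is the source of the $1/\sqrt K$. You may realize the shared $\mathcal L$ by tagging it with the current step and incrementing the tag after each application, as your ``time index on $\mathcal L$'' suggests. Even then, the state at step $s$ must contain exactly one tagged copy $\ket v\ket s/\sqrt K$, not one copy per branch. No normalization factor such as $\sqrt{1+\norm{\ket v}^2/K}$ enters, since only linearity and unitarity of $V$ are used.
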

Note that the catalyst $\ket{v}$ very much depends on the input $\ket{\xi}$, including potentially its size (squared norm) which can be any non-negative real number.

\paragraph{Transduction complexity.} As we see from the above theorem, the catalyst size $\norm{\ket{v}}^2$ governs the number of calls to $S$ needed to implement $U_S$, and we therefore define the \emph{transduction complexity} of $S$ with respect to a particular state, $W(S,\ket{\xi})$, as the minimum $\norm{\ket{v}}^2$ such that $\ket{v}$ is a catalyst for $\ket{\xi}$, i.e. $S(\ket{\xi}+\ket{v})=U_S\ket{\xi}+\ket{v}$. We let $W(S)$ be the supremum of $W(S,\ket{\xi})$ over all unit vectors $\ket{\xi}$. Note that this implies that for any unit vector $\ket{\xi}\in {\cal H}$, there exists a catalyst $\ket{v}\in {\cal L}$ such that
$$S(\ket{\xi}+\ket{v})=U_S\ket{\xi}+\ket{v}\quad\mbox{and}\quad \norm{\ket{v}}^2\leq W(S).$$
Moreover, for any vector $\ket{\xi}\in {\cal H}$, not necessarily normalized, there exists a vector $\ket{v}\in {\cal L}$ such that
$$S(\ket{\xi}+\ket{v})=U_S\ket{\xi}+\ket{v}\quad\mbox{and}\quad \norm{\ket{v}}^2\leq \norm{\ket{\xi}}^2 W(S).$$
{Since any catalyst upper bounds $W(S,\ket\xi)$, we will typically exhibit a particular catalyst rather than the smallest one.}

Looking at \Cref{thm:informal}, we can see that the two measures of complexity that matter for a transducer $S$ -- i.e. that determine the complexity of approximating $U_S$ -- are $W(S)$ (or $W(S,\ket{\xi})$ if we want to restrict our attention to particular states) and the complexity of implementing the unitary $S$. This latter measure, called the \emph{iteration time} of $S$, is a model-dependent quantity, whose precise value depends on which unitaries are allowed as elementary operations. This might be some specific universal gate set, but it might also include oracles calls to a potentially more complicated unitary that we treat as a black box -- or multiple such oracles, which can be handled by the single oracle case by letting $O$ apply one of several oracles $O_1,\dots,O_c$ controlled on the value in some auxiliary register. 

\paragraph{Oracles and canonical form.} If the implementation of $S$ makes black box calls to some unitary oracle $O$, we can write it as $S(O)$. This is often desirable, because transducers are particularly handy for algorithmic composition, and we may wish to later replace $O$ with another transducer to get a transducer for the composed behaviour. We won't go into this in more detail here, but we will see an example in \Cref{sec:aa-transducer-second}.

Any implementation of $S(O)$ that depends non-trivially on $O$ must make at least one call to $O$, so if we apply \Cref{thm:informal} with $K=\Theta(W(S))$, we would make at least $\Theta(W(S))$ calls to $O$, which may not be desirable. It turns out we can do better than this. We can define a notion of query complexity in which the query complexity of $S(O)$ to $O$ can be potentially less than 1. In order to define \emph{Las Vegas query complexity}, which was first defined in~\cite{belovs2023LasVegas}, we need $S(O)$ to be in \emph{canonical form}, meaning:
\begin{enumerate}
    \item $S(O)=S^\circ O$, for some oracle-independent unitary $S^\circ$ called the \emph{work unitary};
    \item the private space can be decomposed ${\cal L}={\cal L}^\bullet\oplus {\cal L}^\circ$, and $O$ acts trivially on ${\cal H}$ and ${\cal L}^\circ$. We call ${\cal L}^\bullet$ the \emph{query part} of the space.
\end{enumerate}
It is possible to modify any transducer to be in canonical form~\cite[Proposition~10.4]{belovs2024taming} (we give an example in \Cref{sec:aa-transducer-second}). For any canonical form transducer $S$ on ${\cal H}\oplus {\cal L}$, and input $\ket{\xi}\in {\cal H}$, we define the Las Vegas query complexity as:
$$L(S,O,\ket{\xi})=\norm{\Pi_{{\cal L}^\bullet}\ket{v}}^2,$$
where $\ket{v}$ is the chosen catalyst for $\ket{\xi}$. Then we let $L(S,O)$ be the supremum of the Las Vegas query complexity over all unit vectors $\ket{\xi}$.
As in \cite[Section~5.1]{belovs2024taming}, we will in fact usually fix a particular catalyst $\ket v$ for $\ket\xi$ and write $W(S,O,\ket\xi)=\norm{\ket v}^2$ and $L(S,O,\ket\xi)=\norm{\Pi_{{\cal L}^\bullet}\ket v}^2$ for that catalyst. \Cref{thm:transducer-to-alg} below holds whenever there is a single catalyst satisfying both bounds.

To motivate this notion, consider a classical Las Vegas algorithm that makes a query to an oracle $O$ with probability $p$. Then the expected number of calls to $O$ is $p$. Similarly, if we make a controlled call to $O$ with amplitude only $\sqrt{p}\leq 1$, then the Las Vegas query complexity is just $p$. What is perhaps surprising is that this optimistic way of counting queries can actually be realized in actual algorithms. In transducer composition, the transduction complexity of an inner transducer is not multiplied by the transduction complexity of the outer, $\norm{\ket{v}}^2$, but by the Las Vegas complexity, $\norm{\Pi_{{\cal L}^\bullet}\ket{v}}^2$~\cite[Proposition~9.11]{belovs2024taming}
(something similar happens even for non-canonical form transducers, though it is more complicated to describe). We also have the following potentially improved version of \Cref{thm:informal}, which follows from~\cite[Theorem~3.3]{belovs2024taming}:
\begin{theorem}\label{thm:transducer-to-alg}
    Let $S(O)=S^\circ O$ be a canonical form transducer, $\eps\in (0,1)$, and $W$ and $L$ non-negative real numbers. There exists an algorithm that uses $O(L/\eps^2)$ controlled calls to $O$, $O(1+W/\eps^2)$ controlled calls to $S^\circ$, and $O(1+W/\eps^2)$ additional one- and two-qubit gates; and on input $\ket{\xi}$, outputs a state $\ket{\tilde\tau}$ such that if $W(S(O),\ket{\xi})\leq W$ and $L(S,O,\ket{\xi})\leq L$, $\norm{U_S\ket{\xi}-\ket{\tilde\tau}}\leq \eps$.
\end{theorem}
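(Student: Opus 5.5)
We derive \Cref{thm:transducer-to-alg} from the general transducer-to-algorithm construction \cite[Theorem~3.3]{belovs2024taming}, which we restate here in the form we need. The plan is to run the standard ``transducer iteration'' algorithm. Fix an integer $K$ to be chosen later. Initialize an auxiliary counter register of dimension $K$ in the uniform superposition $\frac{1}{\sqrt K}\sum_{t=0}^{K-1}\ket{t}$. Then apply $S$ repeatedly to the public-plus-private register, but interleave the applications with the counter, so that on branch $t$ the input $\ket\xi$ has been pushed through $S$ a number of times depending on $t$. Finally, uncompute the counter by applying the inverse of the uniform-superposition preparation.

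The key identity is the catalyst equation $S(\ket\xi+\ket v)=U_S\ket\xi+\ket v$. Consider the idealized state in which each branch of the counter carries $\ket\xi+\ket v$ before the action and $U_S\ket\xi+\ket v$ after it. On this state the iterated $S$ acts exactly, and the counter uncomputes perfectly. The actual initial state differs from this idealized one only by the component $\ket v$. Because the counter is spread over $K$ branches, that component contributes only $O(\norm{\ket v}/\sqrt K)$ to the final error. Taking $K=\Theta(1+W/\eps^2)$ therefore gives error at most $\eps$, which accounts for the $O(1+W/\eps^2)$ calls to $S^\circ$. The counter arithmetic and the controlled swaps cost $O(1)$ gates per step amortized, which gives the $O(1+W/\eps^2)$ additional gates.

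For the query count we use canonical form, $S(O)=S^\circ O$, where $O$ acts nontrivially only on $\mathcal L^\bullet$. We do not call $O$ unconditionally at every step. Instead we call $O$ controlled on a flag marking membership in $\mathcal L^\bullet$. We then replace the exact count by an ``expected'' count, realized by applying $O$ only within a suitably truncated set of branches, as in \cite{belovs2023LasVegas}. In the idealized analysis the weight on $\mathcal L^\bullet$ at each step is $\norm{\Pi_{\mathcal L^\bullet}\ket v}^2\le L$ per unit of normalization. Summing over the $K$ steps and normalizing by the counter gives an expected $O(L/\eps^2)$ queries. Any truncation, i.e.\ skipping $O$ on branches where it would be applied beyond this budget, perturbs the final state by at most $O(\eps)$, by a Markov/Cauchy--Schwarz bound on the amplitude falling outside the budget.

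The main obstacle is this last step: turning the Las Vegas bound into a worst-case $O(L/\eps^2)$ bound on controlled calls. This requires rearranging the iteration so that oracle calls are scheduled only on the query part, and bounding the error from truncation. Since that argument is exactly the content of \cite[Theorem~3.3]{belovs2024taming}, we will ultimately cite it rather than reprove it. We only need to check that its hypotheses hold: a single catalyst satisfying both $W(S(O),\ket\xi)\le W$ and $L(S,O,\ket\xi)\le L$, with $S$ in canonical form.
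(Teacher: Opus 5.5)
Your proposal ends where the paper does: the paper gives no independent argument and simply invokes \cite[Theorem~3.3]{belovs2024taming}, and you correctly identify the one hypothesis worth checking, namely that a single catalyst witnesses both the $W$ and the $L$ bound. Your heuristic sketch before the citation is loose: if each counter branch carried its own copy of $\ket{v}$, the discrepancy would have norm $\norm{\ket{v}}$, not $\norm{\ket{v}}/\sqrt{K}$, since the $1/\sqrt{K}$ saving comes from reusing one shared catalyst $\ket{v}/\sqrt{K}$ across the $K$ sequential applications of $S$; but nothing rests on the sketch, so correctness is unaffected.
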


\paragraph{Composition.} Composing transducers is often much more efficient than composing quantum algorithms. One reason is the previously discussed Las Vegas query complexity -- if you only apply a subroutine to a small part of the state, the cost you pay for it is scaled down accordingly. One consequence of this is that if you are running a subroutine on a superposition of different inputs, and in many branches of the superposition you have terminated early and are doing nothing, then you do not pay for this -- you only pay the average cost.
Another reason is that transducers can often represent a computation exactly, even when we only know how to implement it with bounded error (this is not a contradiction, since the algorithms from \Cref{thm:informal,thm:transducer-to-alg} only approximate $U_S$). There are various general theorems about composing transducers~\cite[Section~9]{belovs2024taming}, 
but in this paper, we will show all compositions explicitly, which is possible since our compositions are not all that complicated, and hopefully makes the strange subject of transducer composition somewhat less mysterious.

\subsection{Filter function}\label{sec:filter}
The high-level idea is that we will apply a function $f$ to the unitary $U$ such that $f(U)$ only has support near the ground state energy. We now construct an explicit choice of function that has the right properties. This choice is not unique; it mostly matters that it has support contained in $[-2\delta,2\delta]$, and is sufficiently smooth.

We start from a function $g_\delta$, for $0 < \delta < \pi/2$ on $[-\pi,\pi)$ by
\begin{align}\label{eq:window1}
    g_\delta(t) \coloneqq \begin{cases}
        2\sqrt{\omega}\cos(\omega t) & \abs{t} \leq \delta,\\
        0 & \mbox{otherwise},
    \end{cases}
    \qquad\mbox{where }\omega = \frac{\pi}{2\delta}.
\end{align}
This function has been used before as a tapering function for phase estimation \cite{rendon2022effects}.
We then let
\begin{align}
    f_\delta(t) \coloneqq \frac{1}{2\pi}\int_{-\pi}^\pi g_\delta(t - s) g_\delta(s) \, ds.
\end{align}
From the definition, an easy calculation gives
\begin{equation}\label{eq:f-delta}
    f_\delta(t) = \begin{cases}
        (1 - \frac{\abs{t}}{2\delta})\cos(\omega t) + \frac{1}{\pi} \sin(\omega\abs{t}) & \abs{t} \leq 2\delta,\\
        0 & \mbox{otherwise}.
        \end{cases}
\end{equation}
This function is such that $f_\delta(t) \geq \frac{1}{\pi}$ for $\abs{t} \leq \delta$, and it is zero for $\abs{t} \geq 2\delta$.
We use the Fourier transform $\mathcal F: L^2([-\pi,\pi)) \to \ell^2(\Z)$, $f \mapsto \hat f$ with convention
\begin{equation}\label{eq:f-transform}
    \hat f[j] = \frac{1}{2\pi} \int_{-\pi}^\pi f(t) e^{-ijt} \, dt, \quad \text{ so } \quad f(t) = \sum_{j \in \Z} \hat f[j] e^{ijt} \, .
\end{equation}

We now record some basic properties of the Fourier transform of $g_\delta$.

\begin{lemma}\label{lem:fourier-g}
    Let $c_j = \hat g_\delta[j]$. Then for $j\neq\pm\omega$,
        $\displaystyle c_j=\frac{2\omega^{3/2}\cos(j\delta)}{\pi(\omega^2-j^2)}$, extended by continuity at $j=\pm\omega$.
    These Fourier coefficients satisfy
    \begin{enumerate}
        \item $\sum_j \abs{c_j}^2 = 1$,
        \item $\sum_j \abs{j} \abs{c_j}^2 \leq \omega$.
    \end{enumerate}
\end{lemma}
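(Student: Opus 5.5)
The plan is to compute $c_j$ directly from the definition and then derive both properties from Parseval's identity. Together with the explicit formula, this reduces the second property to an elementary integral.

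\textbf{Step 1: the formula for $c_j$.} By \eqref{eq:f-transform} and since $g_\delta$ is even,
\[
c_j=\frac{1}{2\pi}\int_{-\delta}^{\delta}2\sqrt{\omega}\cos(\omega t)\cos(jt)\,dt .
\]
Writing $\cos(\omega t)\cos(jt)=\tfrac12[\cos((\omega-j)t)+\cos((\omega+j)t)]$ and integrating gives
\[
c_j=\frac{\sqrt\omega}{\pi}\left[\frac{\sin((\omega-j)\delta)}{\omega-j}+\frac{\sin((\omega+j)\delta)}{\omega+j}\right].
\]
Since $\omega\delta=\pi/2$, we have $\sin(\pi/2\mp j\delta)=\cos(j\delta)$. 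Combining the two fractions then yields $\cos(j\delta)\cdot 2\omega/(\omega^2-j^2)$, which is the stated formula. At $j=\pm\omega$ the integral is continuous in $j$, so the value is the continuous extension.

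\textbf{Step 2: property 1.} Parseval's identity gives
\[
\sum_j|c_j|^2=\frac{1}{2\pi}\int_{-\pi}^{\pi}|g_\delta|^2
=\frac{1}{2\pi}\int_{-\delta}^{\delta}4\omega\cos^2(\omega t)\,dt
=\frac{1}{2\pi}\cdot 4\omega\cdot\delta=\frac{2\omega\delta}{\pi}=1 .
\]
Here we used that $\cos^2(\omega t)$ averages to $1/2$ over $[-\delta,\delta]$, because this interval is exactly half a period of $\cos^2(\omega t)$.

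\textbf{Step 3: property 2.} We bound $|j|\le \tfrac12(j^2/\omega+\omega)$ by AM--GM, so that
\[
\sum_j|j||c_j|^2\le \frac{1}{2\omega}\sum_j j^2|c_j|^2+\frac{\omega}{2}.
\]
It therefore suffices to show $\sum_j j^2|c_j|^2=\omega^2$. By Parseval, this sum equals $\frac{1}{2\pi}\int|g_\delta'|^2$. This requires $g_\delta$ to be absolutely continuous on the circle, which holds because $g_\delta$ vanishes at $\pm\delta$ (as $\cos(\pi/2)=0$), so $g_\delta'$ is an $L^2$ function with no delta terms. Then
\[
\frac{1}{2\pi}\int_{-\delta}^{\delta}4\omega^3\sin^2(\omega t)\,dt=\frac{1}{2\pi}\cdot 4\omega^3\delta=\omega^2 .
\]
Substituting gives $\sum_j|j||c_j|^2\le \omega/2+\omega/2=\omega$.

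The only delicate point is the justification of Parseval for the derivative. This reduces to the continuity of $g_\delta$ at the endpoints $\pm\delta$, which holds exactly because of the choice $\omega=\pi/(2\delta)$. Alternatively, one can verify directly that $j^2|c_j|^2=O(1/j^2)$ from the explicit formula and compute the sum, but the Parseval route avoids that computation.
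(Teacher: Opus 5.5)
Your proof is correct and follows the paper's argument: the same product-to-sum computation of $c_j$, then Parseval for $g_\delta$ and for $g_\delta'$. Your weighted AM--GM step is just the paper's Cauchy--Schwarz step in another form, and it gives the identical bound because $\sum_j|c_j|^2=1$ and $\sum_j j^2|c_j|^2=\omega^2$. One small slip: $[-\delta,\delta]$ is a full period of $\cos^2(\omega t)$ (whose period is $\pi/\omega=2\delta$), i.e.\ half a period of $\cos(\omega t)$, though the value $4\omega\delta$ you obtain for the integral is right.
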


\begin{proof}
    Since $g_\delta$ is real and even, $c_j = c_{-j}$ is real. This implies
    \begin{align*}
        c_j = \frac{1}{2\pi}\int_{-\pi}^\pi g_\delta(t)e^{-ijt}\,dt=\frac{\sqrt{\omega}}{\pi} \int_{-\delta}^{\delta}\cos(\omega t)\cos(jt)\,dt.
    \end{align*}
    The product-to-sum identity allows us to evaluate the integral as
    \begin{align*}
        \int_{-\delta}^{\delta}\cos(\omega t)\cos(jt)\,dt
        &= \frac{\sin((\omega-j)\delta)}{\omega-j}+\frac{\sin((\omega+j)\delta)}{\omega+j} \\
    &=\cos(j\delta)\left(\frac{1}{\omega-j}+\frac{1}{\omega+j}\right)
    =\frac{2\omega\cos(j\delta)}{\omega^2-j^2},
    \end{align*}
    where the middle step uses $\omega\delta=\pi/2$, so that $\sin((\omega\mp j)\delta)=\sin(\pi/2\mp j\delta)=\cos(j\delta)$. This yields the claimed expression for $c_j$.
    The Parseval identity states that
    \begin{align*}
        \sum_j \abs{c_j}^2 = \norm{\hat g_\delta}^2 = \frac{1}{2\pi} \norm{g_\delta}^2,
    \end{align*}
    and it is easy to see that
    \begin{align*}
        \norm{g_\delta}^2 = \int_{-\delta}^\delta 4 \omega \cos^2(\omega t) \, dt = 4 \omega \delta = 2\pi\, .
    \end{align*}
    Next, we note that $g_\delta$ is continuous and piece-wise differentiable, and $g_\delta'$ has Fourier coefficients $i j c_j$. Another application of Parseval yields
    \begin{align*}
        \sum_j \abs{j}^2 \abs{c_j}^2 = \frac{1}{2\pi} \norm{g_\delta'}^2 = \frac{1}{2\pi} \int_{-\delta}^\delta 4 \omega^3 \sin^2(\omega t) \, dt = \frac{1}{2\pi} 4 \omega^3 \delta = \omega^2.
    \end{align*}
    Finally, by Cauchy-Schwarz,
    \begin{align*}
        \sum_j \abs{j} \abs{c_j}^2 \leq \left(\sum_j \abs{c_j}^2\right)^{\frac12} \left(\sum_j \abs{j}^2 \abs{c_j}^2\right)^{\frac12} \leq \omega \, .
    \end{align*}
\end{proof}

We let $p_j = \hat f_\delta[j]$ denote the Fourier coefficients of $f_\delta$.
Since $f_\delta = \frac{1}{2\pi}g_\delta \ast g_\delta$ is a convolution, we have $p_j = c_j^2$. Thus, the following is a direct consequence of \cref{lem:fourier-g}.

\begin{lemma}\label{lem:fourier-f}
    Let $p_j = \hat f_\delta[j]$. Then $p_j = c_j^2\geq0$, and we have
    \begin{align*}
        \sum_j p_j = 1, \qquad \sum_j \abs{j} p_j \leq \omega=\frac{\pi}{2\delta}.
    \end{align*}
    In particular, $(p_j)_{j\in\Z}$ is a probability distribution.
\end{lemma}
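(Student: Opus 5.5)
The plan is to use the convolution theorem to identify $p_j$ with $c_j^2$, and then read off both sums from \cref{lem:fourier-g}. With the convention in \eqref{eq:f-transform} and the definition $f_\delta=\frac{1}{2\pi}\int g_\delta(t-s)g_\delta(s)\,ds$, I will compute
\[
\hat f_\delta[j]=\frac{1}{(2\pi)^2}\int_{-\pi}^{\pi}\int_{-\pi}^{\pi} g_\delta(t-s)g_\delta(s)e^{-ij(t-s)}e^{-ijs}\,ds\,dt .
\]
Here $g_\delta$ is viewed as $2\pi$-periodic, which is harmless because it is supported in $[-\delta,\delta]$ with $\delta<\pi/2$. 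Substituting $u=t-s$ and using Fubini (all functions are bounded on a compact domain) factors this integral as $\hat g_\delta[j]\cdot\hat g_\delta[j]=c_j^2$.

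Since $g_\delta$ is real and even, \cref{lem:fourier-g} tells us that the $c_j$ are real. Hence $p_j=c_j^2\ge 0$.

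The two sums then follow directly. First, $\sum_j p_j=\sum_j|c_j|^2=1$ by item 1 of \cref{lem:fourier-g}. Second, $\sum_j|j|p_j=\sum_j|j||c_j|^2\le\omega=\pi/(2\delta)$ by item 2. The first identity, together with nonnegativity, shows that $(p_j)$ is a probability distribution.

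As a sanity check, $\sum_j p_j=f_\delta(0)$ by pointwise Fourier inversion, which is valid since $f_\delta$ is continuous and $\sum_j|p_j|<\infty$. From \eqref{eq:f-delta} we indeed get $f_\delta(0)=1$.

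I expect no real obstacle. The only point needing care is the periodic-convolution bookkeeping: the normalization factor $\frac{1}{2\pi}$ in the definition of $f_\delta$ must exactly cancel the one extra $\frac{1}{2\pi}$ produced by the product of the two Fourier transforms, so that the result is $p_j=c_j^2$ with no stray constant.
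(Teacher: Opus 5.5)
Your proposal is correct and follows the same argument as the paper. You identify $p_j=c_j^2$ via the convolution theorem, use the reality of the $c_j$ to get $p_j\geq 0$, and read off both sums from \cref{lem:fourier-g}. Your explicit checks are exactly the bookkeeping the paper leaves implicit: that the $\frac{1}{2\pi}$ factor cancels, and that the periodic extension is harmless because $\delta<\pi/2$.
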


\subsection{A transducer for a linear combination of unitaries}\label{sec:lcu-transducer}

Let $p = (p_j)_{j \in \Z}$ be a probability distribution, and $U$ a unitary on a Hilbert space $\C^N$. We now construct a transducer for the LCU transformation on $\C^N \ot \ell^2(\Z)$
\begin{align}\label{eq:lcu-target}
    \ket{\psi}\ket{0} \mapsto \left(\sum_{j \in \Z} p_j U^j \right) \ket{\psi} \ket{0} + \ket{\psi^\perp}
\end{align}
for arbitrary $\ket{\psi} \in \C^N$, where $\ket{\psi^\perp}$ is orthogonal to $\ket{0}$ on the second register and need not be normalized. Specifically, in the remainder of this section, we prove the following:
\begin{theorem}\label{thm:LCU-transducer}
Let $P$ be a unitary on $\ell^2(\Z)$ that satisfies $P\ket{0} = \sum_j \sqrt{p_j} \ket{j}$, often called `Prepare'.
    There is a transducer $S_F$ acting on $\mathbb{C}^N\otimes \ell^2(\Z)\otimes \ell^2(\N_0)$ with public space ${\cal H}_F=\C^N\otimes \ell^2(\Z)\otimes\mathrm{span}\{\ket{0}\}$ such that:
    \begin{enumerate}
        \item For any $\ket{\psi}\in \C^N$, $\ket{\psi}\ket{0}\ket{0}\overset{S_F}{\rightsquigarrow }\left((\sum_{j\in\Z}p_j U^j)\ket{\psi}\ket{0}+\ket{\psi^\bot}\right)\ket{0}$ for some $\ket{\psi^\bot}$ orthogonal to $\C^N\otimes\mathrm{span}\{\ket{0}\}$, and $W(S_F,\ket{\psi}\ket{0}\ket{0})\leq \sum_j p_j \abs{j}$.
        \item $S_F$ can be implemented using one call to each of $P$ and $P^\dagger$, one controlled call to each of $U$ and $U^\dagger$ and one controlled incrementation of the last register modulo the value $j$ in the second register.
    \end{enumerate}
\end{theorem}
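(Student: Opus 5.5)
The plan is to build $S_F$ as a conjugation $S_F = (I\otimes P^\dagger\otimes I)\,T\,(I\otimes P\otimes I)$, where $T$ is a simple ``counter'' transducer that, controlled on the value $j$ in the second register, transduces $\ket{\psi}\ket{j}\ket{0}\rightsquigarrow U^j\ket{\psi}\ket{j}\ket{0}$ at cost about $\abs{j}$. The idea is that one call to $T$ performs only a single step of $U^{\mathrm{sign}(j)}$. The remaining $\abs{j}-1$ steps are ``stored'' in the catalyst, which lives on the private space where the third register (the counter) is nonzero.

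\textbf{Definition of $T$.} On $\ket{\phi}\ket{j}\ket{k}$, $T$ applies $U$ if $j>0$, $U^\dagger$ if $j<0$, and nothing if $j=0$. It then replaces $k$ by $k+1 \bmod \abs{j}$ (doing nothing when $j=0$). This uses one controlled $U$, one controlled $U^\dagger$, and one controlled incrementation, as required. It is unitary: on the invariant subspace where $j\neq 0$ and $k<\abs{j}$ it is a cyclic shift composed with unitaries, and on all remaining basis states it can be taken to act as the identity on the counter.

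\textbf{Catalyst for a fixed $j\neq 0$.} Write $s=\mathrm{sign}(j)$ and take the catalyst
\[
\ket{v_j}=\sum_{k=1}^{\abs{j}-1}U^{sk}\ket{\psi}\ket{j}\ket{k}.
\]
Applying $T$ to the public part gives $U^{s}\ket{\psi}\ket{j}\ket{0}\mapsto U^{s}\ket{\psi}\ket{j}\ket{1}$, since $T$ first applies $U^s$ and then moves the counter from $0$ to $1$. Each term $U^{sk}\ket{\psi}\ket{j}\ket{k}$ with $k\le\abs{j}-2$ maps to the term with $k+1$. The last term $k=\abs{j}-1$ wraps around to $U^{j}\ket{\psi}\ket{j}\ket{0}$. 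Hence
\[
T\bigl(\ket{\psi}\ket{j}\ket{0}+\ket{v_j}\bigr)=U^j\ket{\psi}\ket{j}\ket{0}+\ket{v_j},
\]
and $\norm{\ket{v_j}}^2=(\abs{j}-1)\norm{\ket{\psi}}^2$. For $j=0$ the action is the identity with empty catalyst.

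\textbf{Linearity and the conjugation by $P$.} By linearity, the input $(I\otimes P\otimes I)\ket{\psi}\ket{0}\ket{0}=\sum_j\sqrt{p_j}\ket{\psi}\ket{j}\ket{0}$ has catalyst $\ket{v}=\sum_j\sqrt{p_j}\ket{v_j}$. Since the $\ket{v_j}$ live on orthogonal $\ket{j}$ sectors, $\norm{\ket{v}}^2\le\sum_j p_j\abs{j}$. Because $P$ acts only on the second register, it preserves the public/private decomposition given by the counter. Setting $\ket{v'}=(I\otimes P^\dagger\otimes I)\ket{v}$, we get
\[
S_F(\ket{\xi}+\ket{v'})=P^\dagger U_T P\ket{\xi}+\ket{v'},
\]
with $\norm{\ket{v'}}=\norm{\ket{v}}$. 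The transduced output is $P^\dagger\sum_j\sqrt{p_j}\,U^j\ket{\psi}\ket{j}\ket{0}$. Its component along $\ket{0}$ in the second register is $\sum_j\sqrt{p_j}\sqrt{p_j}\,U^j\ket{\psi}=\sum_j p_jU^j\ket{\psi}$, because $\bra{0}P^\dagger\ket{j}=\sqrt{p_j}$. The remainder is orthogonal to $\C^N\otimes\mathrm{span}\{\ket{0}\}$, which gives item 1.

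\textbf{Main obstacle.} The step I expect to need the most care is checking the transducer bookkeeping rather than the algebra. Specifically, I must check that the conjugated catalyst stays in the private space, i.e., that $P$ does not mix counter values. I must also settle the infinite-dimensional conventions: $\ell^2(\N_0)$ for the counter, the case $j=0$, and how $T$ acts outside $k<\abs{j}$ so that it is a genuine unitary. I will then confirm that the transduction action identified this way coincides with $U_{S_F}$, using the uniqueness statement of \cite[Theorem~3.1]{belovs2024taming}.
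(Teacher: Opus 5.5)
Your proposal is correct and is essentially the paper's construction: it uses the same cyclic-counter Select unitary, the same history-state catalyst $\sum_j\sqrt{p_j}\sum_{k=1}^{\abs{j}-1}U^{\mathrm{sgn}(j)k}\ket{\psi}\ket{j}\ket{k}$, and the same conjugation by Prepare. The only difference is that the paper applies the preparation unitary controlled on the counter being $\ket{0}$, i.e.\ only on the public space, so Prepare leaves the catalyst untouched; you instead apply $P$ unconditionally and compensate with the rotated catalyst $(I\otimes P^\dagger\otimes I)\ket{v}$, which is equally valid, stays in the private space, and has the same norm.
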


The transducer is a modification of a standard construction, which we briefly recall. Assume that the sum only runs to $\abs{j} \leq J$, and use $\C^{2J+1}$ with basis $\ket{j}$ for $j = -J, \dots, J$ instead of $\ell^2(\Z)$. Consider the unitary
\begin{align*}
    \mathsf{Sel} = \sum_{j = -J}^J U^j \ot \ketbra{j}{j},
\end{align*}
called `Select'. Then $(I \ot P^\dagger)\mathsf{Sel}(I \ot P)$ implements the transformation in \cref{eq:lcu-target}.
Note that it uses $U$ and $U^\dagger$ $J$ times each. However, higher powers $j$ may occur with low weight $p_j$. It is straightforward to make a transducer that uses a single controlled query to $U$ and to $U^\dagger$, and has transduction complexity bounded by $\sum_j p_j \abs{j}$.

We build this transducer on the space $\C^N \ot \ell^2(\Z) \ot \ell^2(\N_0)$, where the public and private spaces are
\begin{align*}
    \H = \C^N \ot \ell^2(\Z) \ot \mathrm{span}\{\ket{0}\} \quad \text{ and } \quad \L = \C^N \ot \ell^2(\Z) \ot \mathrm{span}\{\ket{k}, \, k \geq 1\}\, .
\end{align*}
Let $Q$ be a unitary on $\ell^2(\Z)$ that maps $Q\ket{0} = \sum_j \sqrt{p_j}\ket{j}$, and let $P$ be the unitary on $\H \oplus \L$ that applies $Q$ controlled on the last register being in the state $\ket{0}$ (i.e. it is applied on the public space). Let $\mathrm{sgn}(j)$ be the sign of $j$, and equal to 0 if $j=0$.
Let
\begin{align*}
    \mathsf{Sel} = \sum_{j \in \Z} \Bigl(\sum_{k = 0}^{\abs{j}-1} U^{\mathrm{sgn}(j)} \ot \ketbra{j}{j} \ot \ketbra{k+1 \text{ mod } \abs{j}}{k}\;+\;\sum_{k\geq\abs{j}} I\ot\ketbra jj\ot\ketbra kk\Bigr),
\end{align*}
where the $j=0$ term should be interpreted as $I \ot \ketbra{0}{0} \ot I_{\ell^2(\N_0)}$.
Similarly, in sums of the form $\sum_{k=0}^{\abs j-1}$ below, the term $j=0$ is interpreted as the single term $k=0$.
We define
\begin{align}\label{eq:transducer-lcu}
    S = P^\dagger \, \mathsf{Sel} P
\end{align}
and will show that this is a transducer for the transformation in \cref{eq:lcu-target}. This proves \cref{thm:LCU-transducer} with $S_F\defeq S$.

\begin{lemma}\label{lem:LCU-transducer}
    The unitary $S$ in \cref{eq:transducer-lcu} transduces
    \begin{align}\label{eq:transducer-lcu-2}
        \ket{\psi}\ket{0}\ket{0} \rightsquigarrow \left(\sum_{j \in \Z} p_j U^j \right) \ket{\psi} \ket{0}\ket{0} + \ket{\psi^\perp}
    \end{align}
    where $\ket{\psi^\perp}$ is orthogonal to $\ket{0}$ on the second register and need not be normalized. It has transduction complexity
    \begin{align*}
        W(S, \ket{\psi}\ket{0}\ket{0}) \leq \sum_{j \in \Z} p_j \abs{j} \,.
    \end{align*}
    Additionally, $S^\dagger$ transduces
    \begin{align}\label{eq:transducer-lcu-2-dagger}
        \ket{\psi}\ket{0}\ket{0} \rightsquigarrow \left(\sum_{j \in \Z} p_j U^{-j} \right) \ket{\psi} \ket{0}\ket{0} + \ket{\psi'^\perp}
    \end{align}
    where $\ket{\psi'^\perp}$ is orthogonal to $\ket{0}$ on the second register, and has the same transduction complexity.
\end{lemma}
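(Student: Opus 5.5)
We exhibit an explicit catalyst and check the identity $S(\ket{\xi}+\ket{v})=\ket{\tau}+\ket{v}$ with $\ket{\xi}=\ket{\psi}\ket{0}\ket{0}$. By the uniqueness statement in the transducer definition, this identifies the transduction action on $\ket{\xi}$. It also gives $W(S,\ket{\xi})\leq\norm{\ket v}^2$.

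Because $S=P^\dagger\mathsf{Sel}P$ and $P$ acts only on the public space, it is convenient to work in the ``rotated'' frame. We look for $\ket{v}\in\L$ with $\mathsf{Sel}(P\ket{\xi}+\ket{v})=P\ket{\tau}+\ket{v}$; since $P$ is the identity on $\L$, this is equivalent to the original identity. We have $P\ket{\xi}=\sum_j\sqrt{p_j}\ket{\psi}\ket{j}\ket{0}$. Each branch $j$ evolves independently under $\mathsf{Sel}$: on $\ket{j}$ the last register cycles through $0\to1\to\dots\to\abs j-1\to0$, applying $U^{\mathrm{sgn}(j)}$ at every step, and it is stationary for $k\geq\abs j$. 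So for $j\neq0$ the natural catalyst is
\[
\ket{v}=\sum_{j\neq 0}\sqrt{p_j}\sum_{k=1}^{\abs j-1}U^{k\,\mathrm{sgn}(j)}\ket{\psi}\ket{j}\ket{k}.
\]
With this choice, $\mathsf{Sel}$ shifts the $k$-th term to the $(k+1)$-th term, multiplying by $U^{\mathrm{sgn}(j)}$. The input term $k=0$ feeds $k=1$, and the last term $k=\abs j-1$ wraps to $\ket{0}$ with $U^{j}\ket\psi$. Hence
\[
\mathsf{Sel}(P\ket\xi+\ket v)=\sum_j\sqrt{p_j}\,U^j\ket\psi\ket j\ket0+\ket v.
\]
The $j=0$ and $\abs j=1$ branches need a one-line check, since they contribute no catalyst terms.

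Applying $P^\dagger$, which acts only on the public part, we get $\ket\tau=P^\dagger\sum_j\sqrt{p_j}U^j\ket\psi\ket j\ket0$. We then project onto $\ket{0}$ in the second register, using $\bra0Q^\dagger\ket j=\sqrt{p_j}$ (the amplitudes are real). This gives component $\sum_jp_jU^j\ket\psi\ket0\ket0$, and the remainder is $\ket{\psi^\perp}$, orthogonal to $\ket0$ in the second register. The norm bound is
\[
\norm{\ket v}^2=\sum_{j\neq0}p_j(\abs j-1)\leq\sum_jp_j\abs j,
\]
using unitarity of $U$.

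For $S^\dagger=P^\dagger\mathsf{Sel}^\dagger P$, the same argument applies with $\mathsf{Sel}^\dagger$, which runs the cycle backwards ($k\to k-1$ mod $\abs j$) applying $U^{-\mathrm{sgn}(j)}$. We take the catalyst
\[
\ket{v'}=\sum_{j\neq0}\sqrt{p_j}\sum_{k=1}^{\abs j-1}U^{-(\abs j-k)\mathrm{sgn}(j)}\ket\psi\ket j\ket k,
\]
which has the same norm; the output is then $\sum_jp_jU^{-j}\ket\psi\ket0\ket0$ plus an orthogonal part. The main obstacle is purely bookkeeping: getting the index and sign conventions in the cyclic shift right, including the $k=\abs j-1$ wraparound and the degenerate $j=0$ convention, so that the catalyst reproduces itself exactly.
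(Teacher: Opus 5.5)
Your proposal is correct and follows the paper's proof: you use the same history-state catalyst $\sum_{j}\sqrt{p_j}\sum_{k=1}^{\abs{j}-1}U^{\mathrm{sgn}(j)k}\ket{\psi}\ket{j}\ket{k}$, the same cyclic-shift computation through $P$, $\mathsf{Sel}$ and $P^\dagger$, and the same norm count $\sum_{j\neq0}p_j(\abs{j}-1)$. For $S^\dagger$ your reversed catalyst $\sum_{j}\sqrt{p_j}\sum_{k=1}^{\abs{j}-1}U^{-\mathrm{sgn}(j)(\abs{j}-k)}\ket{\psi}\ket{j}\ket{k}$ is exactly the one the paper uses.
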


\begin{proof}
    The catalyst is the following history-like state
    \begin{align*}
        \ket{v} = \sum_{j \in \Z} \sqrt{p_j} \sum_{k=1}^{\abs{j}-1} U^{\mathrm{sgn}(j)k} \ket{\psi} \ket{j} \ket{k}.
    \end{align*}
    Note that $\ket{v} \in \L$, and $\norm{\ket v}^2=\sum_{j\neq0}p_j(\abs j-1)\norm{\ket\psi}^2\leq\sum_j p_j\abs j$ for a unit vector $\ket\psi$.
    Abbreviate the source and target states in \cref{eq:transducer-lcu-2} as $\ket{s}$ and $\ket{t}$, then we need to show that
    \begin{align*}
        S(\ket{s} + \ket{v}) = \ket{t} + \ket{v}.
    \end{align*}
    First of all, recall that $P$ only acts on the public space, so $P(\ket{s} + \ket{v}) = \sum_j \sqrt{p_j}\ket{\psi}\ket{j}\ket{0} + \ket{v}$. We then see that
    \begin{align*}
        \ket{s} + \ket{v} &\mapsto_{P} \sum_j \sqrt{p_j}\ket{\psi}\ket{j}\ket{0} + \sum_{j \in \Z} \sqrt{p_j} \sum_{k=1}^{\abs{j}-1} U^{\mathrm{sgn}(j)k} \ket{\psi} \ket{j} \ket{k} \\
        &= \sum_{j \in \Z} \sqrt{p_j} \sum_{k=0}^{\abs{j}-1} U^{\mathrm{sgn}(j)k} \ket{\psi} \ket{j} \ket{k}\\
        &\mapsto_{\mathsf{Sel}} \sum_{j \in \Z} \sqrt{p_j} \sum_{k=0}^{\abs{j}-1} U^{\mathrm{sgn}(j)}U^{\mathrm{sgn}(j)k} \ket{\psi} \ket{j} \ket{k+1 \text{ mod } \abs{j}} \\
        &= \sum_{j \in \Z} \sqrt{p_j} U^{j} \ket{\psi} \ket{j} \ket{0} + \underbrace{\sum_{j \in \Z} \sqrt{p_j} \sum_{k=1}^{\abs{j}-1} U^{\mathrm{sgn}(j)k} \ket{\psi} \ket{j} \ket{k}}_{\ket{v}} \\
        &\mapsto_{P^\dagger} \left(\sum_{j \in \Z} p_j U^j \right) \ket{\psi} \ket{0}\ket{0} + \ket{\psi^\perp} + \ket{v} = \ket{t} + \ket{v}
    \end{align*}
    again using in the last step that $P^\dagger$ only acts nontrivially on $\H$, and that $\bra0Q^\dagger\ket j=\sqrt{p_j}$.
    The claim about $S^\dagger$ follows from a similar calculation, using $\ket{v'}=\sum_{j\in\Z}\sqrt{p_j}\sum_{k=1}^{\abs{j}-1}U^{-\mathrm{sgn}(j)(\abs{j}-k)}\ket{\psi}\ket{j}\ket{k}$ as a catalyst.
\end{proof}

Note that if $p$ only has support on $\{-J, -J+1, \dots, J-1, J\}$, we can truncate $\ell^2(\Z)$ and $\ell^2(\N_0)$ to $\C^{2J+1}$ and $\C^{J}$ without error. However, the transducer of \cref{thm:LCU-transducer} is sensible also for infinite expansions, as long as the first moment $\sum_j p_j \abs{j}$ is finite, which is not the case for the usual LCU construction.

While not needed now, it is easy to modify the above construction to implement $\sum_j q_j U^j$ where $q_j \in \C$ and $\sum_j \abs{q_j}=1$, simply by adding the phases in $\mathsf{Sel}$.

The idea is to apply the LCU transducer with $U = e^{iH}$, and the $p_j$ from \cref{sec:filter}.
Note that in that case,
\begin{align*}
    \sum_{j \in \Z} p_j U^j = \sum_{j \in \Z} p_j e^{i j H} = f_\delta(H)
\end{align*}
so we are applying a filter on the spectrum of the Hamiltonian.
\cref{thm:LCU-transducer} and \cref{lem:fourier-f} directly yield:

\begin{corollary}\label{cor:infinite-filter-transducer}
        There is a transducer $S_F$ acting on $\mathbb{C}^N\otimes \ell^2(\Z)\otimes \ell^2(\N_0)$ with public space ${\cal H}_F=\C^N\otimes \ell^2(\Z)\otimes\mathrm{span}\{\ket{0}\}$ such that:
    \begin{enumerate}
        \item For any eigenstate $\ket{\psi_k}\in \C^N$ of $H$ with energy $E_k$, $\ket{\psi_k}\ket{0}\ket{0}\overset{S_F}{\rightsquigarrow }\left(f_\delta(E_k) \ket{\psi_k}\ket{0}+\ket{\psi_k^\bot}\right)\ket{0}$ for some $\ket{\psi_k^\bot}$ orthogonal to $\C^N\otimes\mathrm{span}\{\ket{0}\}$, and $W(S_F,\ket{\psi_k}\ket{0}\ket{0}) \leq \frac{\pi}{2\delta}$.
        \item $S_F$ can be implemented using one controlled call to each of $U = e^{iH}$ and $U^\dagger=e^{-iH}$.
    \end{enumerate}
\end{corollary}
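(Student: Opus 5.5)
We obtain the corollary by instantiating \cref{thm:LCU-transducer} with the filter distribution $p_j=\hat f_\delta[j]=c_j^2$ from \cref{sec:filter}, and we fix $S_F\defeq P^\dagger\,\mathsf{Sel}\,P$ as in \cref{eq:transducer-lcu}. By \cref{lem:fourier-f}, $(p_j)_{j\in\Z}$ is a genuine probability distribution. So a `Prepare' unitary $P$ with $P\ket{0}=\sum_j\sqrt{p_j}\ket{j}=\sum_j \abs{c_j}\ket{j}$ exists on $\ell^2(\Z)$, and the hypotheses of \cref{thm:LCU-transducer} are met. Existence is all the corollary asks for; the gate cost of $P$ is a separate question and is not part of this statement.

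For item 1, take an eigenstate $\ket{\psi_k}$ with $U\ket{\psi_k}=e^{iE_k}\ket{\psi_k}$, which means $U=e^{iH}$ with $H\ket{\psi_k}=E_k\ket{\psi_k}$. By \cref{lem:LCU-transducer}, $S_F$ transduces $\ket{\psi_k}\ket{0}\ket{0}$ to $(\sum_j p_jU^j)\ket{\psi_k}\ket{0}\ket{0}+\ket{\psi^\perp}$. We then evaluate
\[
\sum_j p_j U^j\ket{\psi_k}=\Bigl(\sum_j p_j e^{ijE_k}\Bigr)\ket{\psi_k}=f_\delta(E_k)\ket{\psi_k}.
\]
This uses the Fourier inversion in \cref{eq:f-transform}, and we must check that the series converges pointwise to $f_\delta$. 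It does, because $\sum_j p_j=1<\infty$ gives absolute convergence, and $f_\delta$ is continuous: it is a convolution of two bounded functions, and it is also explicit in \cref{eq:f-delta}. Two further points need a little care. First, $E_k$ should lie in $[-\pi,\pi)$, which holds after the shift making $\abs{E_0}\le\delta$ (the shift is by at most $\pi$, and we can reduce modulo $2\pi$ since $f_\delta$ is treated as $2\pi$-periodic). Second, the junk term $\ket{\psi^\perp}$ lies in the last-register-$\ket 0$ space, with its second register orthogonal to $\ket0$. That is exactly the form $\ket{\psi_k^\perp}\ket0$ claimed in the corollary.

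For the complexity bound, \cref{lem:LCU-transducer} gives $W(S_F,\ket{\psi_k}\ket0\ket0)\le\sum_j p_j\abs j$, and \cref{lem:fourier-f} bounds this by $\omega=\pi/(2\delta)$. Item 2 is read directly from \cref{thm:LCU-transducer}(2): $S_F$ uses one controlled $U$ and one controlled $U^\dagger$, plus $P$, $P^\dagger$ and the controlled modular incrementer. The corollary counts only the calls to $U$ and $U^\dagger$.

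There is no real obstacle here. The only step that is not purely bookkeeping is justifying that $\sum_jp_je^{ijE_k}=f_\delta(E_k)$ holds pointwise rather than merely in $L^2$, and absolute summability of $(p_j)$ handles it.
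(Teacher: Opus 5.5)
Your proposal is correct and follows the paper's route: the paper obtains the corollary directly by instantiating \cref{thm:LCU-transducer} with $p_j=\hat f_\delta[j]$, using the identity $\sum_j p_jU^j=f_\delta(H)$, and using \cref{lem:fourier-f} to bound $\sum_j\abs{j}p_j\le\pi/(2\delta)$. Your extra checks fill in details the paper leaves implicit: pointwise convergence of the Fourier series via absolute summability and continuity of $f_\delta$, $2\pi$-periodicity, and the junk term lying in the last-register-$\ket{0}$ space.
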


If we only care about the query complexity of the ground-state preparation algorithm (in terms of queries to $U$, $A$ and their inverses), we can work directly with an infinite expansion, and \Cref{cor:infinite-filter-transducer} suffices.
However, if we care about efficient implementations, using small finite space overhead, we do need to truncate the Fourier series at some finite degree $J$. Generically, implementing $P$ could require $\Theta(J)$ gates, but in our case we can explicitly construct the amplitudes, since they are the Fourier coefficients of a simple function.

\begin{theorem}\label{thm:finite-lcu-transducer}
    Choose $\delta$, $J$ such that $(2J+1)\delta/\pi$ is integer. Then for $J = O(\delta^{-3/2} \eta^{-1})$ there exists a transducer $S_F$ acting on $\mathbb{C}^N\otimes \C^{2J+1} \otimes \C^J$ with public space ${\cal H}_F=\C^N\otimes \C^{2J+1} \otimes\mathrm{span}\{\ket{0}\}$ and a function $\tilde f_\delta$, satisfying $\norm{\tilde f_\delta(H) - f_\delta(H)}_{\infty} \leq \eta$ such that:
    \begin{enumerate}
        \item For any $\ket{\psi}\in \C^N$, $\ket{\psi}\ket{0}\ket{0}\overset{S_F}{\rightsquigarrow } \left(\tilde f_\delta(H)\ket{\psi}\ket{0}+\ket{\psi^\bot}\right)\ket{0}$ for some $\ket{\psi^\bot}$ orthogonal to $\C^N\otimes\mathrm{span}\{\ket{0}\}$, and $W(S_F,\ket{\psi}\ket{0}\ket{0}) = O(\delta^{-1})$ and $W(S_F^\dagger,\ket{\psi}\ket{0}\ket{0}) = O(\delta^{-1})$.
        \item $S_F$ can be implemented using one controlled call to each of $U = e^{iH}$ and $U^\dagger = e^{-iH}$ and $O((\log J)^2)$ other operations and $O(\log J)$ ancilla qubits.
    \end{enumerate}
\end{theorem}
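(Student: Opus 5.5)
The plan is to truncate the LCU of \cref{thm:LCU-transducer} to degree $J$ and build the truncated Prepare unitary explicitly. Let $\tilde p_j = \tilde c_j^2/Z$ for $\abs{j}\leq J$ and $\tilde p_j=0$ otherwise. Here $\tilde c_j = c_j$ for $\abs{j}\le J$, and $Z=\sum_{\abs{j}\le J} c_j^2$ normalizes. Set $\tilde f_\delta(t)=\sum_{\abs{j}\le J}\tilde p_j e^{ijt}$. Applying \cref{thm:LCU-transducer} with $p$ replaced by $\tilde p$ gives the first item directly. The registers truncate to $\C^{2J+1}$ and $\C^J$ without error, as remarked after \cref{lem:LCU-transducer}. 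The complexity bound is $\sum_j\tilde p_j\abs{j}\le Z^{-1}\sum_j p_j\abs{j}\le \omega/Z=O(\delta^{-1})$ by \cref{lem:fourier-f}, as long as $Z\ge 1/2$. The same catalyst argument in \cref{lem:LCU-transducer} handles $S_F^\dagger$.

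For the approximation error, since $H$ is Hermitian, $\norm{\tilde f_\delta(H)-f_\delta(H)}_\infty\le\sup_t\abs{\tilde f_\delta(t)-f_\delta(t)}$. This is at most $\sum_{\abs{j}>J}p_j+\abs{1-Z^{-1}}Z\le 2\sum_{\abs{j}>J}p_j$. By \cref{lem:fourier-g}, $\abs{c_j}\le 2\omega^{3/2}/(\pi(j^2-\omega^2))=O(\omega^{3/2}/j^2)$ for $\abs{j}\ge 2\omega$, so $p_j=O(\omega^3/j^4)$. The tail sum is then $O(\omega^3/J^3)$. This is far below $\eta$ once $J=O(\delta^{-3/2}\eta^{-1})$, which is much more than enough; I would simply take the stated $J$, which also makes $Z\ge1/2$.

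The main obstacle is the second item: implementing Prepare, $\ket0\mapsto\sum_{\abs{j}\le J}\tilde c_j\ket j/\sqrt Z$, with only $O(\log^2 J)$ gates rather than $\Theta(J)$. The integrality condition on $(2J+1)\delta/\pi$ points to the intended route. Consider $g_\delta$ sampled on the grid $t_m=2\pi m/(2J+1)$. Because $(2J+1)\delta/\pi$ is an integer, the support $[-\delta,\delta]$ aligns with grid points, so the sampled window is a $\cos$ profile on a contiguous set of $M\approx(2J+1)\delta/\pi$ points. Its discrete Fourier transform has coefficients given by an exact closed form analogous to $c_j$, a Dirichlet-type kernel version that agrees with $c_j$ up to aliasing error. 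I would therefore define $\tilde c_j$ as these DFT coefficients rather than the truncated $c_j$.

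Preparing $\sum_m \cos(\omega t_m)\ket m$ over a contiguous range of $M$ points is easy. First prepare a uniform superposition over the $M$ points, which costs $O(\log J)$ gates, for example by amplitude-amplified or exact uniform-superposition techniques. Then apply the cosine weighting using the identity $\cos x=(e^{ix}+e^{-ix})/2$: an ancilla-controlled phase ramp $e^{\pm i\omega t_m}$ costs $O(\log J)$ single-qubit phase rotations, and the ancilla is uncomputed with a Hadamard. Because the window vanishes at its endpoints, the amplitude of the good ancilla outcome is a fixed constant, and one round of exact amplitude amplification makes it deterministic. Finally, the quantum Fourier transform on $\C^{2J+1}$ costs $O(\log^2 J)$ gates. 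One point still needs care: the prepared amplitudes are then $\tilde c_j$, and $\tilde p_j=\tilde c_j^2$ needs both $\tilde c_j$ real and $\sum_j \abs j \tilde c_j^2 = O(\omega)$. I would get the first from symmetry of the window. For the second, I would use the discrete Parseval/derivative argument of \cref{lem:fourier-g}, with finite differences in place of $g_\delta'$. The error bound then becomes $\sup_t\abs{\sum\tilde p_je^{ijt}-f_\delta(t)}$, which I would control through the aliasing bound $\abs{\tilde c_j-c_j}\le\sum_{r\ne0}\abs{c_{j+r(2J+1)}}$. Applying the earlier decay estimate to this sum is where the $\delta^{-3/2}\eta^{-1}$ scaling should emerge.
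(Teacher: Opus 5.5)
Your final construction (uniform superposition over the grid points in $[-\delta,\delta)$, the Hadamard--phase-ramp--Hadamard trick with exact amplitude amplification, a QFT, and error control via the aliasing identity $\tilde c_j=\sum_{r}c_{j+r(2J+1)}$) is the paper's proof, and it supersedes your truncate-and-renormalize first paragraph. The remaining differences are minor: you bound $\sum_j\abs j\,\tilde c_j^2$ by a discrete Parseval/finite-difference argument (which works and gives $\sum_j j^2\tilde c_j^2=O(\omega^2)$ directly), where the paper uses $\abs{\tilde c_j}^2\le 2\abs{c_j}^2+2\abs{\tilde c_j-c_j}^2$ plus the aliasing estimate; you do not need $\tilde c_j$ real, since $P^\dagger$ cancels arbitrary phases and the LCU always yields $\sum_j\abs{\tilde c_j}^2U^j$; and the good-outcome weight is exactly $1/2$ not because the window vanishes at its endpoints but because $\sum_{a}\cos(2\omega t_a)=0$ over the aligned grid, which is where the integrality condition is really used.
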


\begin{proof}
    The construction of $S_F$ is the same as in \Cref{thm:LCU-transducer}, except that we work with truncated Fourier coefficients.
    We first observe that if we want to implement $\sum_j p_j U^j$, for $p_j \geq 0$, it is also acceptable to have $P\ket{0} = \sum_j e^{i\phi_j} \sqrt{p_j}\ket{j}$ for some arbitrary phases, as these will cancel with the application of $P^\dagger$. It therefore will suffice to prepare (an approximation of) the state $\sum_j c_j \ket{j}$, and we do so by using that the $c_j$ are the Fourier coefficients of a simple function.
    We use the following construction:
    \begin{enumerate}
        \item Identify $\C^{2J+1}$ with a uniform grid of  $2J+1$ points $t_a$ on the unit circle $[-\pi,\pi)$. Prepare a uniform superposition over the subset $A_\delta$ of $(2J+1)\delta/\pi$ points in the interval $[-\delta,\delta)$.
        \item Take one ancilla qubit, and apply $\mathsf{Had} \, V \, \mathsf{Had} $, where $\mathsf{Had}$ is a Hadamard on the ancilla, and $V$ the unitary that applies $e^{\pm i \omega t_a}$, with the ancilla controlling the sign. This prepares a state
        \begin{align*}
            \frac{1}{\sqrt{\abs{A_\delta}}}\ket{0}\Bigl(\sum_{a \in A_\delta} \cos(\omega t_a) \ket{a} \Bigr) + \frac{i}{\sqrt{\abs{A_\delta}}} \ket{1}\Bigl(\sum_{a \in A_\delta} \sin(\omega t_a) \ket{a} \Bigr).
        \end{align*}
        This state has squared amplitude exactly 1/2 on the state with $\ket{0}$ on the ancilla, so exact amplitude amplification yields the state
        \begin{align*}
            \ket{\phi} = \sqrt{\frac{2}{\abs{A_\delta}}} \sum_{a \in A_\delta} \cos(\omega t_a) \ket{a} = \sum_{a} \frac{g_\delta(t_a)}{\sqrt{2J+1}} \ket{a}
        \end{align*}
        \item Apply the quantum Fourier transform to get a state $\sum_{j=-J}^J \tilde c_j \ket{j}.$        
    \end{enumerate}
    First of all, this can be done in $O(\log(J)^2)$ steps and $O(\log J)$ ancilla qubits (the dominant cost being the quantum Fourier transform, for an exact implementation choose $J$ to be a power of two).
    If we let $\tilde f_\delta(t) = \sum_j \abs{\tilde c_j}^2 e^{ijt}$, then this construction gives an exact transducer for $\sum_{j=-J}^J \abs{\tilde c_j}^2 U^j = \tilde f_\delta(H)$. We now show this is a good approximation for large enough $J$.
    By basic Fourier analysis, since the coefficients of $\ket{\phi}$ consist precisely of the function $g_\delta$ sampled at a uniform grid, we have
    \begin{align*}
        \tilde c_j = \sum_{r \in \Z} c_{j + r(2J+1)}
    \end{align*}
    for $j = -J, -J+1, \dots, J-1,J$.
    Assume $J > 2\omega$.
    By \Cref{lem:fourier-g}, for $|j| > 2\omega$, we have $c_j = O(\delta^{-3/2} j^{-2})$, and in particular (letting $\tilde c_j = 0$ for $\abs{j} > J$) we have error $\abs{\tilde c_j - c_j} = O(\delta^{-3/2}J^{-2})$ for $\abs{j} \leq J$ and $\abs{\tilde c_j - c_j} = O(\delta^{-3/2}j^{-2})$ for $\abs{j} > J$.
    We conclude that
    \begin{align*}
        \abs{\tilde f_\delta(t) - f_\delta(t)} &\leq \sum_{j \in \Z} \abs{ \tilde{c_j}^2 - c_j^2} \leq 2 \sum_{j \in \Z} \abs{\tilde c_j - c_j} \\
        &\leq 2 \sum_{\abs{j} \leq J} \abs{\tilde c_j - c_j} + 2 \sum_{\abs{j} > J} \abs{\tilde c_j - c_j}\\
        &= O(\delta^{-3/2}J^{-2} \cdot J) + O\Bigl(\delta^{-3/2}\sum_{j=J+1}^{\infty} j^{-2}\Bigr) = O(\delta^{-3/2}J^{-1}).
    \end{align*}
    Thus, for $J = O(\delta^{-3/2} \eta^{-1})$, we can achieve $\norm{\tilde f_\delta(H) - f_\delta(H)}_{\infty} \leq \eta$.
    It remains to bound $W(S_F,\ket{\psi}\ket{0}\ket{0})$.
    This follows from
    \begin{align*}
        W(S_F,\ket{\psi}\ket{0}\ket{0}) &\leq \sum_{\abs j\leq J}\abs j\abs{\tilde c_j}^2 \leq 2\sum_{\abs j\leq J}\abs j\abs{c_j}^2 + 2\sum_{\abs j\leq J}\abs j\abs{\tilde c_j-c_j} ^2 \\
        &\leq 2\omega + O(\delta^{-3}J^{-2})=O(\delta^{-1}),
    \end{align*}
using
$\abs{a}^2\leq2\abs{b}^2+2\abs{a-b}^2$, \Cref{lem:fourier-g}, the bound
on $\abs{\tilde c_j-c_j}$ above, and $J\geq\delta^{-1}$.
    This establishes the bound on $W(S_F,\ket{\psi}\ket{0}\ket{0})$; the same bound holds for $W(S_F^\dagger,\ket{\psi}\ket{0}\ket{0})$, as in \Cref{lem:LCU-transducer}.
\end{proof}

As a side comment, we note that we could have taken a smoother function than $f_\delta$, which would have led to faster decay of the Fourier coefficients, and could reduce the required $J$. However, this does not affect the asymptotic complexity, as $J$ only appears as a logarithmic factor in the number of ancilla qubits and additional operations.

\subsection{Transducers for amplitude amplification}\label{sec:aa-transducers}

In this section, we give a transducer for amplitude amplification, which is the following problem.

\begin{definition}[Amplitude amplification]
    Given an oracle $A'$ on ${\cal V}$ such that $A'\ket{0}=\ket{\pi}$, and an oracle ${R}=2\Pi_M-I$ for some orthogonal projector $\Pi_M$ on ${\cal V}$, output $\ket{\pi_M}:=\frac{1}{\norm{\Pi_M\ket{\pi}}}\Pi_M\ket{\pi}$.
\end{definition}
Throughout this section, we assume $\mu\in (0,1/2]$, where
$$\mu\defeq\norm{\Pi_M\ket\pi}^2,\quad\mbox{so}\quad\ket{\pi_M}=\frac{\Pi_M\ket\pi}{\sqrt\mu},\quad\mbox{and define}\quad \ket{\pi_{\bar M}}\defeq\frac{(I-\Pi_M)\ket\pi}{\sqrt{1-\mu}}.$$

We first give a transducer for amplitude amplification that uses $O(1)$ controlled calls to $R$, $A'$, and its inverse, and has transduction complexity $O(1/\sqrt{\mu})$. We prove the following theorem in \Cref{sec:aa-transducer-first}.
\begin{theorem}\label{thm:aa-transducer}
There is a transducer $S_{\rm AA}$ on ${\cal F}\ot{\cal C}\ot{\cal V}$, where ${\cal F}$ is a flag qubit and ${\cal C}=\mathrm{span}\{\ket\ell:\ell\in\N_0\}$, such that:
    \begin{enumerate}
        \item its public space is $(\mathrm{span}\{\ket{0}_{\cal F}\ket{0}_{\cal C}\}\ot{\cal V})\oplus(\mathrm{span}\{\ket1_{\cal F}\}\ot{\cal C}\ot{\cal V})$;
        \item {whenever $\mu>0$, }it has transduction action $S_{\rm AA}:\ket{0}_{\cal F}\ket{0}_{\cal C}\ket{\pi}\rightsquigarrow\ket{1}_{\cal F}\ket{\lambda}_{\cal C}\ket{\pi_M}$ for a unit vector $\ket\lambda\in{\cal C}$ that depends only on $\mu$;
        \item {$W(S_{\rm AA},\ket0\ket0\ket\pi)=O(1/\sqrt\mu)$;}
        \item one application of $S_{\rm AA}$ uses one controlled call to each of $A'$ and ${A'}^\dagger$, and $O(1)$ controlled calls to $R$.
    \end{enumerate}
\end{theorem}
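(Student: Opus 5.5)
The plan is to build $S_{\rm AA}$ as a ``leaky'' Grover walk, with the counter register ${\cal C}$ recording how many Grover iterations have been applied. Let $G\defeq A'(2\ketbra00-I){A'}^\dagger R$ be the standard Grover iterate. On $\mathrm{span}\{\ket{\pi_M},\ket{\pi_{\bar M}}\}$ it acts as a rotation by $2\theta$, where $\sin\theta=\sqrt\mu$. I also want a \emph{leak} operation $\Lambda_\ell$. Controlled on the counter value $\ell$ and on the flag being $0$, it rotates the flag by an angle $\beta_\ell$, but only on the marked subspace. So it maps $\ket0_{\cal F}\Pi_M\ket\phi$ to $\cos\beta_\ell\ket0\Pi_M\ket\phi+\sin\beta_\ell\ket1\Pi_M\ket\phi$ and leaves $\ket0(I-\Pi_M)\ket\phi$ unchanged. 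I would implement this with $O(1)$ controlled calls to $R$: use an ancilla in $\ket+$, apply a controlled $R$ to write $\Pi_M$ into the ancilla, do the ancilla-controlled flag rotation, then uncompute with a second controlled $R$. The transducer is then
\[
S_{\rm AA}\;=\;\text{increment of }{\cal C}\text{ (on flag }0)\cdot\Lambda\cdot G\quad(\text{on flag }0),
\]
acting as the identity on flag $1$. Only one $G$ is needed per application, which gives the stated count of one controlled call each to $A'$ and ${A'}^\dagger$ and $O(1)$ controlled calls to $R$. The counter must be incremented only on the flag-$0$ branch after the leak, so that leaked amplitude stays at counter value $\ell$.

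Next comes the catalyst, as in \Cref{lem:LCU-transducer}. Set $\ket{\phi_0}=\ket\pi$, and let $\ket{\phi_{\ell+1}}=(\text{flag-}0\text{ part of }\Lambda_\ell G\ket{\phi_\ell})$; it stays in the two-dimensional plane. The leaked part at step $\ell$ is $\sin\beta_\ell\,\Pi_M G\ket{\phi_\ell}$, which is proportional to $\ket{\pi_M}$ since $\Pi_M$ maps the plane onto $\mathrm{span}\{\ket{\pi_M}\}$. Write it as $\lambda_\ell\ket{\pi_M}$. Take the catalyst
\[
\ket v=\sum_{\ell\ge1}\ket0_{\cal F}\ket\ell_{\cal C}\ket{\phi_\ell},
\]
which lies in the private space. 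A telescoping calculation gives $S_{\rm AA}(\ket0\ket0\ket\pi+\ket v)=\ket1\big(\sum_\ell\lambda_\ell\ket\ell\big)\ket{\pi_M}+\ket v$. Unitarity together with $\norm{\phi_\ell}\to0$ makes $\ket\lambda=\sum_\ell\lambda_\ell\ket\ell$ a unit vector, and it depends only on $\mu$. The transduction complexity is therefore $W=\sum_{\ell\ge1}\norm{\phi_\ell}^2$, and the whole task reduces to a two-dimensional damped-rotation estimate.

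The main obstacle is choosing a schedule $\beta_\ell$ that does not depend on $\mu$, since $\mu$ is unknown, and yet gives $\sum_\ell\norm{\phi_\ell}^2=O(1/\theta)$ for every $\mu\in(0,1/2]$. The schedule must also drive $\norm{\phi_\ell}\to0$ for every such $\mu$. My first attempt would be $\sin^2\beta_\ell=\min(1,c/\ell)$, or a variant that grows in dyadic blocks. For $\ell\ll1/\theta$ the marked amplitude is $O(\ell\theta)$, so the leak removes only $O(\ell\theta^2)$ of squared norm per step; the norm stays bounded by $1$ and this phase contributes $O(1/\theta)$ to $W$. Once $\ell\gtrsim1/\theta$, the rotation moves $\Omega(1)$ of the remaining weight into the marked subspace within every $O(1/\theta)$ steps, and the leak there removes about $c/\ell$ of it per step. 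I would group steps into blocks of length $\Theta(1/\theta)$. In block $b$ I would aim to show that the norm squared shrinks by a factor $\rho_b$ with $\sum_b\prod_{b'\le b}\rho_{b'}=O(1)$, which makes the later phase contribute $O(1/\theta)$ as well. If the $1/\ell$ schedule decays too slowly, I would fall back to a doubling schedule: strong leaks at times near powers of two, tuned so that some scale $2^k\approx1/\theta$ catches the state near $\ket{\pi_M}$. Summing the geometric contributions across scales then gives the bound, and verifying that geometric summation is the step I expect to require the most care.
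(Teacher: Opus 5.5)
Your construction is the paper's: a leak on the marked subspace, one Grover iterate and a counter increment on the flag-$0$ branch, the history-state catalyst $\sum_{\ell\ge1}\ket0\ket\ell\ket{\phi_\ell}$, and the telescoping identity. The paper leaks before $G$ rather than after, which does not matter. That algebra and item 4 are fine.

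The gap is that both the existence of the catalyst and item 3 hinge on the bound $\sum_\ell\norm{\ket{\phi_\ell}}^2=O(1/\sqrt\mu)$ for a $\mu$-independent schedule. Item 2 needs this too, including $\norm{\ket{\phi_\ell}}\to0$ so that $\ket\lambda$ is a unit vector. This bound is the paper's \Cref{lem:S_AA-complexity}, and your proposal only plans it. The block heuristic you sketch cannot deliver it as stated. Take $\sin^2\beta_\ell=\min(1,c/\ell)$ and let $\vartheta$ be the per-step rotation angle ($\vartheta=2\theta$ in your notation). Once $c/\ell$ is small compared with $\vartheta$, the state spends on average half its weight in the marked subspace. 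So the squared norm decays at rate about $c/(2\ell)$ per step, and $\norm{\ket{\phi_\ell}}^2\asymp(T/\ell)^{c/2}$ with $T=\Theta(1/\sqrt\mu)$. This is summable to $O(T)$ only when $c>2$.

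In your language, $\rho_b\approx1-\kappa/b$, so $\prod_{b'\le b}\rho_{b'}\approx b^{-\kappa}$, and you need the sharp constant $\kappa>1$. An unquantified ``$\Omega(1)$ of the weight is marked'' cannot give that, and you would also have to control how the damping distorts the rotation. The paper handles both with a Lyapunov-type quadratic form:
\begin{itemize}
\item For the one-step map $M(a)=R_\vartheta\,\mathrm{diag}(a,1)$, the matrix $P(a)=\begin{pmatrix}a&b\\ b&1\end{pmatrix}$ with $b=(a-1)\cot\vartheta/2$ satisfies $M(a)^TP(a)M(a)=aP(a)$ exactly.
\item $P$ is within constant factors of $I$ once $\ell\ge T$.
\item Because $a_\ell$ is constant on the blocks $\{4^j,\dots,4^{j+1}-1\}$ with $1-a_\ell=\Theta(1/\ell)$, $P$ changes only at block boundaries. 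This gives decay by a factor $1/16$ per block and $\norm{\ket{\pi_\ell}}^2\le128(T/\ell)^2$.
\end{itemize}
The paper remarks that your $1/\ell$ schedule with $c=5$ also works.

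Your doubling fallback is described with the wrong mechanism. A single scale $2^k\approx1/\theta$ need not catch the state near $\ket{\pi_M}$. Later dyadic scales can individually retain almost all remaining weight, whenever $2^k\vartheta$ is close to a multiple of $\pi$. So ``some scale catches it, then sum geometrically'' is not an argument.

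The fallback can be rescued cheaply, though. Leak \emph{all} marked amplitude exactly at $\ell\in\{0\}\cup\{2^k:k\ge0\}$ and nowhere else. Then the flag-$0$ state lies on the unmarked axis after each leak. Its amplitude after the leak at $2^k$ is therefore at most $\bigl|\prod_{j=0}^{k-1}\cos(2^j\vartheta)\bigr|=\abs{\sin(2^k\vartheta)}/(2^k\sin\vartheta)$. Since $\sin^2\vartheta=4\mu(1-\mu)\ge2\mu$, this gives $\norm{\ket{\phi_\ell}}^2\le2/(\mu\ell^2)$, hence $W=O(1/\sqrt\mu)$. That would be a shorter route than the paper's quadratic-form analysis. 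But neither it nor any other complete decay estimate appears in your proposal, so as written the proof of items 2 and 3 is missing.
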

Our transducer differs from the one in \cite[Theorem~6.1]{apers2026elfs} in that it uses fewer qubits. The transducer of \cite{apers2026elfs} is based on a coherent version of standard Las Vegas amplitude amplification, which runs standard amplitude amplification repeatedly (logarithmically many times in expectation), with geometrically decreasing guesses of $\mu$. Making this process coherent requires a distinct register for every repetition, resulting in a multiplicative log-factor overhead in the number of qubits. Our transducer's only qubit overhead is an additive overhead from the counter register ${\cal C}$. Admittedly, this register is infinite, but we can truncate it to get the following construction, for which there is also an efficient implementation:
\begin{theorem}\label{thm:aa-truncated-intro}
    For any $\eta\in (0,1)$, there exists a positive integer $D=O(1/(\eta\sqrt{\mu}))$ and a transducer $S_{\rm AA}^{[D]}$ on ${\cal F}\otimes {\cal C}^{[D]}\otimes {\cal V}$, where ${\cal F}$ is a flag qubit and ${\cal C}^{[D]}=\mathrm{span}\{\ket{0},\dots,\ket{D-1}\}$, such that:
    \begin{enumerate}
        \item its public space is $(\mathrm{span}\{\ket{0}\ket{0}\}\otimes {\cal V})\oplus (\mathrm{span}\{\ket{1}\}\otimes {\cal C}^{[D]}\otimes{\cal V})$;
        \item whenever $\mu>0$, it has transduction action $S_{\rm AA}^{[D]}:\ket{0}\ket{0}\ket{\pi}\rightsquigarrow\ket{\tau}_{{\cal FC}^{[D]}{\cal V}}$ for $\ket{\tau}$ such that $\norm{\ket{\tau}-\ket{1}\ket{\lambda'}\ket{\pi_M}}\leq \eta$ for some unit vector $\ket{\lambda'}\in \mathbb{C}^{[D]}$ that depends only on $\mu$ and $D$;
        \item $W(S_{\rm AA}^{[D]},\ket{0}\ket{0}\ket{\pi})=O(1/\sqrt{\mu})$;
        \item One application of $S_{\rm AA}^{[D]}$ uses one controlled call to each of $A'$ and ${A'}^\dagger$, $O(1)$ controlled calls to $R$, and $O(\log \frac{1}{\eta\sqrt{\mu}}+\log\dim{\cal V})$ additional one- and two-qubit gates.
    \end{enumerate}
\end{theorem}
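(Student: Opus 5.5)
We obtain $S_{\rm AA}^{[D]}$ by truncating the counter register of the transducer $S_{\rm AA}$ from \Cref{thm:aa-transducer}. We have not yet seen the inside of that transducer, so we assume its natural structure. The counter $\ket{\ell}_{\cal C}$ records the number of amplitude-amplification rounds performed. At counter value $\ell$, the transducer applies one round of the Grover iterate $G=A'(2\ketbra00-I){A'}^\dagger R$, using one call each to $A'$ and ${A'}^\dagger$ and $O(1)$ calls to $R$. It then ``leaks'' a fraction of the marked amplitude into the public flagged subspace $\ket{1}_{\cal F}\ket{\ell}_{\cal C}$, according to the leak schedule. The final state is $\ket{1}\ket{\lambda}\ket{\pi_M}$, where $\lambda_\ell$ is the amplitude leaked at step $\ell$ and $\sum_\ell|\lambda_\ell|^2=1$.

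The catalyst of $S_{\rm AA}$ is the history state: the (private) state remaining after $\ell$ steps, summed over $\ell$. By item 3 of \Cref{thm:aa-transducer}, its squared norm is $\sum_\ell r_\ell^2=O(1/\sqrt\mu)$, where $r_\ell$ is the unleaked amplitude after step $\ell$.

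Step 1 is to define $S_{\rm AA}^{[D]}$ on ${\cal F}\otimes{\cal C}^{[D]}\otimes{\cal V}$. It acts as $S_{\rm AA}$ on counter values $\ell<D-1$. At $\ell=D-1$, the counter incrementation is replaced by a forced transition: all remaining private amplitude is moved into the public flagged space at $\ket{1}\ket{D-1}$, for instance by a controlled swap of the flag qubit. The resulting map is still unitary because the counter increment is now performed modulo $D$ and restricted to the private branch.

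Step 2 is to take the truncated history state $\ket{v^{[D]}}=\sum_{\ell<D}(\text{private part at step }\ell)$ as catalyst. Then $S_{\rm AA}^{[D]}(\ket0\ket0\ket\pi+\ket{v^{[D]}})=\ket\tau+\ket{v^{[D]}}$. Here $\ket\tau$ agrees with $\ket1\ket\lambda\ket{\pi_M}$ on counters $\ell<D-1$. It differs only in the last slot, where the residual unleaked state (of norm $r_{D-1}$) is dumped, and in the omitted tail $\sum_{\ell\geq D}|\lambda_\ell|^2$. Taking $\ket{\lambda'}$ to be $\lambda$ restricted to $\ell<D$ and normalized, and using that $\ket{\pi_M}$ does not depend on $\ell$, gives
\[
\norm{\ket\tau-\ket1\ket{\lambda'}\ket{\pi_M}}=O\Bigl(r_{D-1}+\Bigl(\sum_{\ell\geq D-1}|\lambda_\ell|^2\Bigr)^{1/2}\Bigr).
\]
Since $W(S^{[D]}_{\rm AA})\leq \norm{\ket{v^{[D]}}}^2\leq\sum_\ell r_\ell^2=O(1/\sqrt\mu)$, item 3 is immediate.

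Step 3 is the main obstacle: bounding the tail so that $D=O(1/(\eta\sqrt\mu))$ suffices. The leaked amplitudes satisfy $|\lambda_\ell|^2\leq r_{\ell-1}^2-r_\ell^2$, so the tail is controlled by $r_{D-1}^2$, and $r_\ell$ is nonincreasing. Because $\sum_\ell r_\ell^2=O(1/\sqrt\mu)$, monotonicity gives $D\, r_{D-1}^2\leq\sum_{\ell<D}r_\ell^2=O(1/\sqrt\mu)$, hence $r_{D-1}=O(\mu^{-1/4}D^{-1/2})$. This Markov-type bound needs $D=O(1/(\eta^2\sqrt\mu))$, which is too weak by a factor of $1/\eta$. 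To obtain $1/\eta$, I would use the explicit leak schedule: if the residual decays like $r_\ell\lesssim 1/(\ell\sqrt[4]{\mu})$ up to constants, then $D=O(1/(\eta\sqrt\mu))$ gives $r_{D-1}\leq\eta$. I expect to verify this decay directly from the schedule's recurrence. If the schedule only yields the Markov bound, the fallback is to raise the leak rate once $\ell\gtrsim1/\sqrt\mu$, which changes the catalyst norm by only a constant factor.

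Step 4 is gate counting. One application uses the calls listed in \Cref{thm:aa-transducer}, plus the controlled increment modulo $D$ and the comparison with $D-1$ on a $\lceil\log D\rceil$-qubit register. These cost $O(\log D)=O(\log\frac{1}{\eta\sqrt\mu})$ gates. The reflection $2\ketbra00-I$ on ${\cal V}$ costs $O(\log\dim{\cal V})$ gates. Finally, the leak rotations depend on $\ell$ and $\mu$; they are controlled rotations, which I take to be computable within the same budget, as the statement asserts.
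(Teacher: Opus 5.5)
\textbf{Overall structure.} Your architecture is the paper's: truncate the counter, take $\sum_{\ell=1}^{D-1}\ket0\ket\ell\ket{\pi_\ell}$ as catalyst, and bound the error by the residual norm. The paper's truncation is simpler than yours: it just increments modulo $D$, so the residual $\ket0\ket0\ket{\pi_D}$ wraps into the public space. Your dump, placed where the increment was, would be a flag flip controlled on $\ell=D-1$. That flip would also un-flag the just-leaked term $\lambda_{D-1}\ket1\ket{D-1}\ket{\pi_M}$, so it has to replace the leak at $\ell=D-1$ instead.

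\textbf{The main gap: the tail bound.} The real gap is the step you yourself flag as the main obstacle. The statement of \Cref{thm:aa-transducer} plus monotonicity only gives $\norm{\ket{\pi_D}}^2=O(1/(D\sqrt\mu))$, i.e.\ $D=O(1/(\eta^2\sqrt\mu))$. What is missing is the pointwise decay \eqref{eq:main-thing-to-prove}: $\norm{\ket{\pi_\ell}}^2\le 128(T/\ell)^2$ for $\ell\ge T$, with $T<160/\sqrt\mu$. The paper proves this inside the proof of \Cref{lem:S_AA-complexity} and then uses it directly to get error $16T/D$.

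Proving it needs a genuine idea. Write one step as $M(a)=R_\theta\,\mathrm{diag}(a,1)$. There is a quadratic form $P(a)$ with $M(a)^TP(a)M(a)=aP(a)$, and it is comparable to the identity once $1-a_\ell\le 2\sqrt\mu/C$. Hence $q_\ell=\bra{\pi_\ell}P\ket{\pi_\ell}$ shrinks by exactly $a_\ell$ per step inside a block, and loses only a factor $1+10/C$ at block boundaries. It therefore drops by a factor $1/16$ over each block $B_j$ of length $3\cdot 4^j$, which is $(T/\ell)^2$ decay.

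Your guessed rate $r_\ell\lesssim 1/(\ell\mu^{1/4})$ has the wrong $\mu$-dependence. It would give catalyst size $\sum_\ell r_\ell^2=O(\mu^{-1/4})$, and via \Cref{thm:transducer-to-alg} amplitude amplification with $o(1/\sqrt\mu)$ queries, contradicting the search lower bound. The right target is $O(1/(\ell\sqrt\mu))$.

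Your fallback of leaking harder after $\ell\approx 1/\sqrt\mu$ can backfire. Once $1-a\gg\sqrt\mu$ the process is overdamped, because unmarked amplitude is rotated into $M$ only at rate $\theta\approx 2\sqrt\mu$ per step. For example, with $a=0$ the residual norm shrinks only by a factor $\cos\theta=1-2\mu$ per step.

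\textbf{Second gap: Item 4.} Saying the $\ell$-dependent leak rotations fit in the budget ``as the statement asserts'' is circular. A general schedule needs reversible arithmetic on $\ell$; for instance, the smoother schedule of \Cref{rem:old-schedule} costs $O(\log^2 D)$ gates per step. The paper's schedule is independent of $\mu$ and constant on the blocks $B_j=\{4^j,\dots,4^{j+1}-1\}$. So only $O(\log D)$ fixed single-qubit rotations occur, selected by the bits $[\ell\ge 4^j]$ at total cost $O(\log D)$.

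Given such a schedule, your Markov argument already proves everything except the stated size of $D$, since $\log\frac{1}{\eta^2\sqrt\mu}=O(\log\frac{1}{\eta\sqrt\mu})$. The missing piece is precisely the $(T/\ell)^2$ tail estimate.
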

\noindent This is proven in \Cref{sec:aa-transducer-first}, where it is restated as \Cref{thm:aa-truncated}.

\paragraph{A canonical transducer for amplitude amplification.}
In our specific setting, we will have $A'=U_FA$, where $A$ generates a guiding state, and $U_F$ applies a filter to dampen all but the ground state, and we have a transducer $S_F$ for $U_F$.  We want to make the transducer canonical in the oracle $A$ so that we can apply \Cref{thm:transducer-to-alg} to get a quantum algorithm that makes fewer calls to $A$. We can compose transducers and make them canonical using \cite[Section~9]{belovs2024taming} and \cite[Section~10.4]{belovs2024taming} respectively, but we make the construction explicit in \Cref{sec:aa-transducer-second}, where we prove \Cref{thm:aa-canonical}, stated shortly.

It is simple to see that if $S_F$ is a transducer for $U_F$, then $S_F^\dagger$ is a transducer for $U_F^\dagger$: if $S_F(\ket\xi+\ket v)=U_F\ket\xi+\ket v$, then $S_F^\dagger(U_F\ket\xi+\ket v)=\ket\xi+\ket v$. Our analysis will only depend on catalysts of $S_F^\dagger$ for inputs in the two-dimensional space ${\cal K}\defeq\mathrm{span}\{\ket{\pi_M},\ket{\pi_{\overline M}}\}$. Accordingly, define
\begin{equation}\label{eq:W-K}
    W(S_F^\dagger,{\cal K})\defeq\sup\left\{W(S_F^\dagger,\ket y):\ket y\in{\cal K},\ \norm{\ket y}=1\right\}.
\end{equation}
Note that $W(S_F^\dagger,{\cal K})$ can be much smaller than $W(S_F)$, which is a supremum over the whole public space of $S_F$.

\begin{theorem}\label{thm:aa-canonical}
Let $A'=U_F A$ for unitaries $A$ and $U_F$. Suppose $S_F$ is a transducer for $U_F$, let $W(S_F^\dagger,{\cal K})$ be as in \eqref{eq:W-K}, and let $\eta\in (0,1)$. There exists a positive integer $D=O(1/(\eta\sqrt{\mu}))$ and a transducer $S_{\rm AA}^{[D]}(A)$, canonical with respect to the oracle that applies $A$ or $A^\dagger$, on ${\cal F}\otimes {\cal C}^{[D]}\otimes {\cal T}\otimes {\cal V}$, where ${\cal F}$ and ${\cal C}^{[D]}$ are as in \Cref{thm:aa-truncated-intro}, and ${\cal T}=\mathrm{span}\{\ket{0},\dots,\ket{4}\}$, such that:
\begin{enumerate}
    \item its public space is $(\mathrm{span}\{\ket{0}\ket{0}\ket{0}\}\otimes {\cal V})\oplus (\mathrm{span}\{\ket{1}\}\otimes {\cal C}^{[D]}\otimes {\cal T}\otimes {\cal V})$;
    \item whenever $\mu>0$, it has transduction action $S_{\rm AA}^{[D]}(A):\ket{0}\ket{0}\ket{0}\ket{\pi}\rightsquigarrow\ket{\tau}_{{\cal FC}^{[D]}{\cal TV}}$ such that $\norm{\ket{\tau}-\ket{1}\ket{\lambda'}\ket{0}\ket{\pi_M}}\leq \eta$ for some unit vector $\ket{\lambda'}\in \mathbb{C}^{[D]}$ that depends only on $\mu$ and $D$;
    \item $W(S_{\rm AA}^{[D]}{(A)},\ket{0}\ket{0}\ket{0}\ket{\pi})=O((1+W(S_F^\dagger,{\cal K}))/\sqrt{\mu})$ and $L(S_{\rm AA}^{[D]}(A),A,\ket{0}\ket{0}\ket{0}\ket{\pi})=O(1/\sqrt{\mu})$;
    \item One application of the work unitary of $S_{\rm AA}^{[D]}(A)$ uses one controlled call to each of $S_F$ and $S_F^\dagger$, $O(1)$ controlled calls to $R$, and $O(\log \frac{1}{\eta\sqrt{\mu}}+\log\dim{\cal V})$ additional one- and two-qubit gates.
\end{enumerate}
\end{theorem}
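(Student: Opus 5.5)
We take the truncated amplitude-amplification transducer $S_{\rm AA}^{[D]}$ of \Cref{thm:aa-truncated-intro} with $A'=U_FA$, and replace each call to $U_F$ and $U_F^\dagger$ by $S_F$ and $S_F^\dagger$, making the private spaces of these inner transducers part of the outer private space. The register ${\cal T}$ is a small step counter that splits the work unitary into stages. The stages are: apply $A$ (the query stage); apply the work of $S_F$; apply $R$ and the counter logic of $S_{\rm AA}^{[D]}$; apply $S_F^\dagger$; apply $A^\dagger$ (again a query). The $A$ and $A^\dagger$ calls are applied only on the states labelled by the relevant value of ${\cal T}$, and those states form the query part ${\cal L}^\bullet$. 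This is the standard canonicalization of \cite[Proposition~10.4]{belovs2024taming}, done by hand: $S=S^\circ\cdot O$, where $O$ applies $A$ or $A^\dagger$ depending on ${\cal T}$ and acts trivially elsewhere. The public space is left exactly as in \Cref{thm:aa-truncated-intro}, tensored with $\ket0_{\cal T}$.

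For correctness and catalyst size we start from the catalyst $\ket{v}$ of $S_{\rm AA}^{[D]}$ for input $\ket0\ket0\ket\pi$, whose squared norm is $O(1/\sqrt\mu)$ by \Cref{thm:aa-truncated-intro}. We then write each component of $\ket{v}$ on which $A'$ or ${A'}^\dagger$ acts as a history over the ${\cal T}$ stages: the component, $A$ applied to it, and so on. Each ${\cal T}$-copy is a unitary image of the original component, so it has the same norm. Each copy on which $A$ or $A^\dagger$ is applied lies in ${\cal L}^\bullet$, and there are $O(1)$ such copies, which gives $L=O(1/\sqrt\mu)$. The step that applies $S_F^\dagger$ acts on inputs of the form ${A'}\ket{\cdot}$ restricted to the amplitude-amplification dynamics. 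The two-dimensional invariant subspace of amplitude amplification is spanned by $\ket{\pi_M},\ket{\pi_{\bar M}}$, so every vector fed to $S_F^\dagger$ lies in ${\cal K}$. For such a vector $\ket y$ we add a catalyst of squared norm at most $\norm{\ket y}^2 W(S_F^\dagger,{\cal K})$. The forward call to $S_F$ is handled the same way: if $S_F^\dagger$ has catalyst $\ket{v'}$ mapping $U_F\ket\xi+\ket{v'}\mapsto\ket\xi+\ket{v'}$, then $S_F$ maps $\ket\xi+\ket{v'}\mapsto U_F\ket\xi+\ket{v'}$, so the same catalyst works and its size is again controlled by $W(S_F^\dagger,{\cal K})$ on outputs in ${\cal K}$. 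Summing over all components gives a total catalyst norm of $O(\norm{\ket v}^2(1+W(S_F^\dagger,{\cal K}))+1/\sqrt\mu)=O((1+W(S_F^\dagger,{\cal K}))/\sqrt\mu)$. The transduction action is unchanged, because the composed catalyst reproduces exactly the action of $A'$ on every branch, so the $\eta$-closeness to $\ket1\ket{\lambda'}\ket0\ket{\pi_M}$ is inherited directly.

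The gate count follows from counting per stage. The work unitary contains one $S_F$, one $S_F^\dagger$, $O(1)$ calls to $R$, the counter logic of $S_{\rm AA}^{[D]}$ with $O(\log D+\log\dim{\cal V})$ gates, and $O(1)$ gates acting on ${\cal T}$.

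The main obstacle will be checking that every vector on which $U_F^\dagger$ is invoked inside the catalyst of $S_{\rm AA}^{[D]}$ really lies in ${\cal K}$ and is not merely in the span of $A\ket0$-type vectors. This requires opening up the explicit catalyst from \Cref{sec:aa-transducer-first}. I would first verify that all ${\cal V}$-parts of that catalyst lie in $\mathrm{span}\{\ket\pi,\ket{\pi_M}\}={\cal K}$, or in $A'^\dagger{\cal K}$, depending on where the call sits, and I would order the stages so that $S_F^\dagger$ is always applied to a ${\cal K}$-vector.
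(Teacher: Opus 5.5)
Your proposal is correct and is essentially the paper's proof. The paper also splits $G$ over ${\cal T}$: the leak, $R$ and $S_F^\dagger$ at ${\cal T}=0$; $A^\dagger$ at ${\cal T}=1$; $I-2\ketbra00$ at ${\cal T}=2$; $A$ at ${\cal T}=3$; $S_F$ at ${\cal T}=4$. It feeds $S_F^\dagger$ the ${\cal K}$-vector $R(a_\ell\Pi_M+\Pi_{\overline{M}})\ket{\pi_\ell}$, reuses an $S_F^\dagger$-catalyst for $\ket{\pi_{\ell+1}}\in{\cal K}$ as the $S_F$-catalyst, and reads off $L=O(1/\sqrt\mu)$ from the ${\cal T}\in\{1,3\}$ part of the history-state catalyst.

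When you write it out, fix the stage list in your first paragraph. It has no stage for $I-2\ketbra00$ between $A^\dagger$ and $A$, and without it the cycle composes to $R$ alone. It should also begin at the public value ${\cal T}=0$ with the leak, $R$ and $S_F^\dagger$, not with $A$. Finally, make the ${\cal C}$ and ${\cal T}$ increments controlled on the ${\cal V}$ summand, as the paper does, so that the ${\cal L}_F$-catalysts stay put.
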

\noindent This theorem is restated in \Cref{sec:aa-transducer-second} as \Cref{thm:aa-canonical-truncated}.

The construction in \Cref{sec:aa-transducer-second} is very similar to {the one in} \Cref{sec:aa-transducer-first}, and while the main result of this paper does not depend on the construction in \Cref{sec:aa-transducer-first}, it is a good warm-up, and may also be of independent interest.

\subsubsection{Transducer for amplitude amplification: general case}\label{sec:aa-transducer-first}

Our transducer for amplitude amplification is comparable to a quantum algorithm for amplitude amplification that peels off some of the marked state at every step, before applying the next Grover iterate. Such an algorithm was studied, for example, in~\cite{Mizel2008dampedSearch}.

\begin{figure}[t]
    \centering
\begin{tikzpicture}[x=2.35cm,y=1.15cm,
    site/.style={circle,fill=black,inner sep=1.4pt},
    lab/.style={font=\small}]
    \node (e0) at (-.5,0) {};
    \node[site] (s0) at (0,0) {};
    \node (e1) at (.5,0) {};
    \node[site] (s1) at (1,0) {};
    \node (e2) at (1.5,0) {};
    \node[site] (s2) at (2,0) {};
    \node (e3) at (2.5,0) {};
    \node[site] (s3) at (3,0) {};
    \node[lab] at (4.5,0) {$\cdots$};
    \draw[->] (-1,0)--(s0)
    node[midway,below,lab] {$\ket{0}_{\cal F}\ket{0}_{\cal C}\ket{\pi}$};
    \draw[->] (s0)--(s1)
    node[midway,below,lab] {$\ket{0}_{\cal F}\ket{1}_{\cal C}\ket{\pi_1}$};
    \draw[->] (s1)--(s2) node[midway,below,lab] {$\ket{0}_{\cal F}\ket{2}_{\cal C}\ket{\pi_2}$};
    \draw[->] (s2)--(s3) node[midway,below,lab] {$\ket{0}_{\cal F}\ket{3}_{\cal C}\ket{\pi_3}$};
    \draw[->] (s3)--(4,0)
    node[midway,below,lab] {$\ket{0}_{\cal F}\ket{4}_{\cal C}\ket{\pi_4}$};
    \draw[->] (s0)--(0,2)
    node[midway,rotate=90,above,lab] {$\lambda_0\ket{1}_{\cal F}\ket{0}_{\cal C}\ket{\pi_M}$};
    \draw[->] (s1)--(1,2)
    node[midway,rotate=90,above,lab] {$\lambda_1\ket{1}_{\cal F}\ket{1}_{\cal C}\ket{\pi_M}$};
    \draw[->] (s2)--(2,2)
    node[midway,rotate=90,above,lab] {$\lambda_2\ket{1}_{\cal F}\ket{2}_{\cal C}\ket{\pi_M}$};
    \draw[->] (s3)--(3,2)
    node[midway,rotate=90,above,lab] {$\lambda_3\ket{1}_{\cal F}\ket{3}_{\cal C}\ket{\pi_M}$};
\end{tikzpicture}
    \caption{Visualization of the amplitude amplification transducer, which can be implemented as a quantum walk on this infinite comb. We implement it more directly by the process in \Cref{fig:amp-amp-transducer}.}
    \label{fig:aa-comb}
\end{figure}

Our transducer can be implemented as a quantum walk on a graph like the one in \Cref{fig:aa-comb}, or as the process described in \Cref{fig:amp-amp-transducer}, which can be visualized similarly -- we will analyze this process. Specifically, the transducer acts on ${\cal F}\otimes {\cal C}\otimes {\cal V}$, where ${\cal F}$ is a single-qubit flag register, and ${\cal C}=\mathrm{span}\{\ket{\ell}:\ell\in\mathbb{N}_0\}$ is an infinite counter. In one step of the transducer, it peels off a small piece of the marked part of the state via the following \emph{leak unitary}, $W_{\rm leak}({a})${ for $a\in[0,1]$}, which sets the flag to 1 (done) on a portion of the marked part of the state:
\begin{equation}\label{eq:W-leak}
    W_{\rm leak}({a})\ket{0} = {a}\ket{0}+{\sqrt{1-a^2}}\ket{1}.
\end{equation}
If we apply this to any state $\ket{0}\ket{\psi}=\ket{0}(\alpha\ket{\pi_M}+\beta\ket{\pi_{\overline{M}}})$ controlled on being in the marked subspace, we get:
\begin{equation}\label{eq:controlled-leak-action}
\begin{split}
    \ket{0}\ket{\psi}\mapsto{}& \alpha W_{\rm leak}(a)\ket{0}\ket{\pi_M}+\beta\ket{0}\ket{\pi_{\overline{M}}}\\
    ={}& {a}\alpha\ket{0}\ket{\pi_M}+{\sqrt{1-a^2}}\alpha\ket{1}\ket{\pi_M}+\beta\ket{0}\ket{\pi_{\overline{M}}}\\
    ={}& \ket{0}({a}\Pi_M+\Pi_{\overline{M}})\ket{\psi}+{\sqrt{1-a^2}}\alpha\ket{1}\ket{\pi_M}.
\end{split}
\end{equation}
The precise amount of leakage depends on the value of the counter, via the \emph{block index} $j(\ell)\defeq\lfloor\log_4\ell\rfloor$ for $\ell\geq1$, and $j(0)\defeq0$. We use the schedule
\begin{equation}\label{eq:schedule}
    a_\ell\defeq\frac{1-4^{-j(\ell)}}{1+4^{-j(\ell)}},\qquad\mbox{so that}\qquad \sqrt{1-a_\ell^2}=\frac{2\cdot 2^{-j(\ell)}}{1+4^{-j(\ell)}}.
\end{equation}
The schedule is constant on each block $B_j\defeq\{4^j,\dots,4^{j+1}-1\}$. Note that $a_\ell=0$ for $\ell\leq3$, so during the first four steps \emph{all} marked amplitude leaks. For $\ell\in B_j$ we have $4^{-j}\in(\frac1\ell,\frac4\ell]$, and hence $1-a_\ell=\frac{2\cdot4^{-j}}{1+4^{-j}}\in(\frac1\ell,\frac8\ell]$: roughly a $\Theta(1/\ell)$ fraction of the marked amplitude leaks at step $\ell$.

\begin{remark}\label{rem:old-schedule}
A smoother schedule of $a_\ell=\max\{\sqrt{1-5/\ell},0\}$ is also possible. \Cref{lem:S_AA-complexity} also holds for that schedule, with a similar proof.
The main drawback of that schedule is the implementation cost. The required reversible fixed-point arithmetic costs $O(\log^2D)$ gates per step for a ${\log D}$-bit number. The schedule \eqref{eq:schedule} needs only $O(\log D)$ gates, and is implemented exactly (see \Cref{thm:aa-truncated}).
\end{remark}

After leaking a portion of the marked part of the state, one Grover iterate, defined
\begin{equation}\label{eq:grover-iterate}
    G\defeq\big(2\ketbra\pi\pi-I\big)\big(I-2\Pi_M\big)=A'(I-2\ketbra00){A'}^\dagger {R},
\end{equation}
is applied to slightly boost the marked part of the state again, and the counter is incremented. Intuitively, we can think of applying this process repeatedly, and measuring the flag register each time. If we ever measure a 1, we know we have the marked state, otherwise we can continue. The number of steps we take is potentially unbounded, but finite in expectation, as we will essentially see in our analysis of the transducer.

\begin{figure}
\hrule\vspace{3pt}
\noindent\textbf{The transducer $S_{\rm AA}$ for amplitude amplification}
\vspace{3pt}\hrule\vspace{1.5pt}\hrule\vspace{6pt}
\noindent \textbf{Public Space:} ${\cal H}=\mathrm{span}\{\ket{0}_{\cal F}\ket{0}_{\cal C}\}\otimes{\cal V}\oplus \mathrm{span}\{\ket{1}_{\cal F}\}\otimes {\cal C}\otimes {\cal V}$\\
\textbf{Private Space:} ${\cal L}=\mathrm{span}\{\ket{0}_{\cal F}\ket{\ell}_{\cal C}:\ell\geq 1\}\otimes{\cal V}$
\vspace{4pt}\hrule
\begin{enumerate}
    \item Controlled on ${\cal V}$ in the marked space, using the value $\ell$ in ${\cal C}$, apply $W_{\rm leak}({a_\ell})$ to ${\cal F}$ (see \Cref{eq:W-leak}).
    \item Controlled {on} ${\cal F}=0$, apply the Grover iterate $G=A'(I-2\ket{0}\bra{0}){A'}^\dagger R$ to ${\cal V}$.
    \item Controlled on ${\cal F}=0$, increment ${\cal C}$.
\end{enumerate}
\hrule
\caption{We can see by inspection that the transducer uses one controlled call to each of $A'$ and ${A'}^\dagger$ (Step 2), and $O(1)$ controlled calls to $R$ (Steps 1 and 2).}\label{fig:amp-amp-transducer}
\end{figure}

Define a series of vectors representing the remaining unflagged state after $\ell$ applications of the process:
\begin{equation}\label{eq:pi-ell}\ket{\pi_0}\defeq\ket{\pi},
\quad\mbox{ and for all }\ell\geq 0,
\quad
\ket{\pi_{\ell+1}} \defeq G ({a_{\ell}}\Pi_M+\Pi_{\overline{M}})\ket{\pi_{\ell}},
\end{equation}
where $G$ is as in \Cref{eq:grover-iterate}. Note that while $\ket{\pi_0}$ is a unit vector, $\norm{\ket{\pi_{\ell}}}$ will generally be strictly smaller than $\norm{\ket{\pi_{\ell-1}}}$.
Let $\alpha^\ell\defeq\braket{\pi_M}{\pi_{\ell}}$ and $\beta^\ell\defeq\braket{\pi_{\overline{M}}}{\pi_{\ell}}$ so that
$$\ket{\pi_{\ell}}=\alpha^{\ell}\ket{\pi_M}+{\beta^{\ell}}\ket{\pi_{\overline{M}}}.$$
Since $\alpha^0=\sqrt\mu$ and $\beta^0=\sqrt{1-\mu}$ are real and $G$ acts as a real rotation on $\mathrm{span}\{\ket{\pi_M},\ket{\pi_{\overline M}}\}$, all $\alpha^\ell,\beta^\ell$ are real.
Define, for $\ell\geq 0$:
$$\lambda_\ell\defeq \alpha^{\ell}{\sqrt{1-a_{\ell}^2}},$$
which represents the amount of marked amplitude leaked at step $\ell$. More concretely, the transducer $S_{\rm AA}$ acts as follows on the state $\ket{0}_{\cal F}\ket{\ell}_{\cal C}\ket{\pi_{\ell}}_{\cal V}$:
\begin{equation}
\begin{split}
    S_{\rm AA}:\ket{0}_{\cal F}\ket{\ell}_{\cal C}\ket{\pi_{\ell}}_{\cal V} &\overset{{\rm Step 1}}{\longmapsto}
    \ket{0}_{\cal F}\ket{\ell}_{\cal C}({a_{\ell}}\Pi_M+\Pi_{\overline{M}})\ket{\pi_{\ell}}
    +\alpha^\ell{\sqrt{1-a_{\ell}^2}}\ket{1}_{\cal F}\ket{\ell}_{\cal C}\ket{\pi_M} \qquad \mbox{by \eqref{eq:controlled-leak-action}}\\
    &= \ket{0}_{\cal F}\ket{\ell}_{\cal C}({a_{\ell}}\Pi_M+\Pi_{\overline{M}})\ket{\pi_{\ell}}+\lambda_\ell\ket{1}_{\cal F}\ket{\ell}_{\cal C}\ket{\pi_M}\\
    &\overset{{\rm Step 2}}{\longmapsto}
    \ket{0}_{\cal F}\ket{\ell}_{\cal C}G({a_{\ell}}\Pi_M+\Pi_{\overline{M}})\ket{\pi_{\ell}}+\lambda_\ell\ket{1}_{\cal F}\ket{\ell}_{\cal C}\ket{\pi_M}\\
    &=\ket{0}_{\cal F}\ket{\ell}_{\cal C} \ket{\pi_{\ell+1}}+\lambda_\ell\ket{1}_{\cal F}\ket{\ell}_{\cal C}\ket{\pi_M}\\
    &\overset{{\rm Step 3}}{\longmapsto}
    \ket{0}_{\cal F}\ket{\ell+1}_{\cal C} \ket{\pi_{\ell+1}}+\lambda_\ell\ket{1}_{\cal F}\ket{\ell}_{\cal C}\ket{\pi_M}.
\end{split}\label{eq:S_AA-action}
\end{equation}

To see that the transducer has the correct action, we define the catalyst:
\begin{equation}\label{eq:S_AA-catalyst}
\ket{v}\defeq \sum_{\ell=1}^\infty \ket{0}_{\cal F}\ket{\ell}_{\cal C}\ket{\pi_{\ell}},
\end{equation}
which converges as long as $\mu>0$:
\begin{lemma}\label{lem:S_AA-complexity}
Let $\ket{v}$ be as in \Cref{eq:S_AA-catalyst}. Then $\norm{\ket{v}}^2=\sum_{\ell=1}^\infty\norm{\ket{\pi_{\ell}}}^2=O(1/\sqrt{\mu})$.
\end{lemma}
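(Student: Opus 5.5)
The plan is to work in the invariant two-dimensional real subspace ${\cal K}=\mathrm{span}\{\ket{\pi_M},\ket{\pi_{\overline M}}\}$. There I write $\ket{\pi_\ell}=r_\ell(\sin\varphi_\ell\ket{\pi_M}+\cos\varphi_\ell\ket{\pi_{\overline M}})$ and let $\theta\in(0,\pi/4]$ satisfy $\sin\theta=\sqrt\mu$. On ${\cal K}$ the Grover iterate $G$ is a rotation by $2\theta$. The leak map $a_\ell\Pi_M+\Pi_{\overline M}$ only shrinks the marked coordinate, so
\[
r_{\ell+1}^2=r_\ell^2-(1-a_\ell^2)(\alpha^\ell)^2=r_\ell^2\bigl(1-(1-a_\ell^2)\sin^2\varphi_\ell\bigr).
\]
It follows that $r_\ell$ is non-increasing with $r_0=1$. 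The goal is to show that $r_\ell^2$ decays fast enough across the blocks $B_j$ that $\sum_\ell r_\ell^2=O(1/\theta)=O(1/\sqrt\mu)$.

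The first step handles the early part of the sum. Fix a constant $C$ and let $j_0$ be the least $j$ with $4^{j}\geq C/\theta$. All $\ell<4^{j_0}=O(1/\theta)$ together contribute at most $O(1/\theta)$, since $r_\ell\leq1$.

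The second step is a per-block decay estimate. For $j\geq j_0$ and $\ell\in B_j$, the leak strength is constant: $1-a_\ell^2=\frac{4\cdot4^{-j}}{(1+4^{-j})^2}\geq 4^{-j}\cdot c_0$ with $c_0$ close to $4$. Iterating the recursion over the block gives
\[
r_{4^{j+1}}^2\leq r_{4^j}^2\exp\Bigl(-(1-a^2)\sum_{\ell\in B_j}\sin^2\varphi_\ell\Bigr).
\]
So it suffices to show $\sum_{\ell\in B_j}\sin^2\varphi_\ell\geq \kappa\cdot 3\cdot 4^j$ for some constant $\kappa$ with $3c_0\kappa>\ln 4+\Omega(1)$. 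Then each block shrinks $r^2$ by a factor $\rho<1/4$. The contribution of block $j$ is at most $3\cdot4^j r_{4^j}^2\leq 3\cdot 4^j\rho^{\,j-j_0}$, which forms a geometric series dominated by its first term $O(4^{j_0})=O(1/\theta)$.

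The main obstacle is this averaging bound for $\sin^2\varphi_\ell$ over a block. Without leakage, $\varphi_\ell$ advances by exactly $2\theta$ per step. A block then contains $3\cdot4^j\theta/\pi\geq 3C/\pi$ full half-turns, and $\sin^2$ averages to $1/2$ up to an $O(1/C)$ boundary error. With leakage, each step additionally moves $\varphi$ towards the $\ket{\pi_{\overline M}}$ axis. The leak shrinks $\tan\varphi$ by the factor $a_\ell$, so it changes the angle by at most $O(1-a_\ell)=O(4^{-j})$, which is $O(\theta/C)$ per step. Hence $\varphi_{\ell+1}-\varphi_\ell\in[2\theta(1-O(1/C)),2\theta]$. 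The angle still sweeps monotonically (mod $\pi$) at essentially the unperturbed speed, and it spends a fraction at least $\tfrac12-O(1/C)$ of the steps in the region $\varphi\bmod\pi\in[\pi/4,3\pi/4]$, where $\sin^2\varphi\geq1/2$. This gives $\kappa\approx1/4$ and $3c_0\kappa\approx3>\ln4$ with room to spare. If I cannot make this perturbation argument clean, I would try instead to compare the $\sin^2$ sum with the unperturbed rotation via a Riemann-sum bound on $\int\sin^2$, applied one half-turn at a time.
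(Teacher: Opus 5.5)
\textbf{Verdict.} The skeleton is correct and differs from the paper's route. The gap is in the averaging step, which is the crux. Your argument for it contains one harmless slip and one claim that is false when $\theta$ is of order one; a segment-wise version of your own fallback repairs it.

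\textbf{What is already correct.} The exact recursion $r_{\ell+1}^2=r_\ell^2\bigl(1-(1-a_\ell^2)\sin^2\varphi_\ell\bigr)$ holds. So do the trivial bound for $\ell<4^{j_0}$, and the reduction to a per-block bound $\sum_{\ell\in B_j}\sin^2\varphi_\ell\geq\kappa\abs{B_j}$ followed by a geometric series. You only need $\kappa$ comfortably above $\ln 4/12\approx 0.12$.

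\textbf{The harmless slip.} The leak does not always slow the angle. Replacing $\tan\varphi$ by $a\tan\varphi$ pushes $\varphi$ toward the nearest multiple of $\pi$. That is backward when $\varphi\bmod\pi\in(0,\pi/2)$ but \emph{forward} when $\varphi\bmod\pi\in(\pi/2,\pi)$. So the correct statement is $\abs{\varphi_{\ell+1}-\varphi_\ell-2\theta}\leq\arctan\frac{1-a_\ell}{2\sqrt{a_\ell}}=O(\theta/C)$, not an upper bound of $2\theta$.

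\textbf{The false claim.} You assert that the orbit spends a fraction at least $\frac12-O(1/C)$ of its steps in $\varphi\bmod\pi\in[\pi/4,3\pi/4]$. This is only a small-$\theta$ heuristic. Take $\mu=1/4$, so the step is $2\theta=\pi/3$ and the unperturbed orbit mod $\pi$ has period $3$. For a starting phase just below $\pi/4$, only one of every three iterates lies in that interval, however large $C$ is.

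You also cannot compare the perturbed orbit with a single unperturbed rotation over a whole block. The per-step drift $O(4^{-j})$ can accumulate to order one over the $3\cdot 4^j$ steps of $B_j$.

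\textbf{The repair.} Average $\sin^2$ directly, one half-turn at a time as in your fallback. Use the exact trigonometric sum rather than a Riemann sum, since the latter is inaccurate when the steps are coarse. With $m\defeq\lfloor \pi/(2\theta)\rfloor\geq 2$,
\[
\sum_{k=0}^{m-1}\sin^2(x+2\theta k)=\frac m2-\frac{\sin(2\theta m)}{2\sin(2\theta)}\cos\bigl(2x+2\theta(m-1)\bigr)\geq\frac{m-1}{2}.
\]
The inequality holds because $2\theta m\in(\pi-2\theta,\pi]$ gives $0\leq\sin(2\theta m)\leq\sin(2\theta)$.

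Within one segment of $m$ steps the accumulated drift is $O(m\theta/C)=O(1/C)$. Since $\sin^2$ is $1$-Lipschitz, each segment averages at least $\frac14-O(1/C)$. A block contains $\Omega(C)$ full segments, so $\kappa\geq\frac14-O(1/C)$. Then $3c_0\kappa\approx 3>\ln 4$, as you wanted.

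\textbf{Comparison with the paper.} The paper avoids equidistribution entirely. It uses the quadratic form $P(a)$ satisfying $M(a)^TP(a)M(a)=aP(a)$, an exact Lyapunov function for a fixed leak rate. In effect it performs your averaging in closed form: the off-diagonal entry $b(a)=(a-1)\cot\theta/2$ (the paper's $\theta$ is twice yours) absorbs the oscillating part of the decay. Consequently $q_\ell$ contracts by exactly $a_\ell$ inside a block and loses only a factor $(1+10/C)$ at block boundaries, uniformly in $\mu\in(0,1/2]$.

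Your approach makes the mechanism more visible: the marked fraction averages to about $1/2$, which is why a $\Theta(1/\ell)$ leak is critically damped. The cost is the segment argument above. The paper's approach also gives directly the pointwise bound $\norm{\ket{\pi_\ell}}^2\leq 128(T/\ell)^2$, which it reuses for the truncation error in the theorem that follows.
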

The proof of \Cref{lem:S_AA-complexity} is given shortly. We first prove that $\ket{v}$ is in fact a catalyst for the claimed action of $S_{\rm AA}$.
\begin{lemma}\label{lem:S_AA-action} Let $\ket{\lambda}\defeq \sum_{\ell=0}^\infty\lambda_\ell\ket{\ell}$. Then
    $$S_{\rm AA}(\ket{0}_{\cal F}\ket{0}_{\cal C}\ket{\pi}_{\cal V} +\ket{v}) =\ket{1}_{\cal F}\ket{\lambda}_{\cal C}\ket{\pi_M}_{\cal V} +\ket{v}.$$
\end{lemma}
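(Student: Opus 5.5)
The plan is a direct computation, by linearity, of $S_{\rm AA}$ applied to $\ket{0}_{\cal F}\ket{0}_{\cal C}\ket{\pi}+\ket{v}$. Since $\ket{\pi_0}=\ket{\pi}$, the input can be written uniformly as
\[
\sum_{\ell=0}^\infty\ket{0}_{\cal F}\ket{\ell}_{\cal C}\ket{\pi_\ell}_{\cal V}.
\]
We apply the per-term action \eqref{eq:S_AA-action} to each summand.

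By \eqref{eq:S_AA-action}, the term with counter $\ell$ maps to
\[
\ket{0}_{\cal F}\ket{\ell+1}_{\cal C}\ket{\pi_{\ell+1}}+\lambda_\ell\ket{1}_{\cal F}\ket{\ell}_{\cal C}\ket{\pi_M}.
\]
Summing over $\ell\geq0$, the unflagged parts give
\[
\sum_{\ell\geq 0}\ket{0}\ket{\ell+1}\ket{\pi_{\ell+1}}=\sum_{\ell\geq1}\ket{0}\ket{\ell}\ket{\pi_\ell}=\ket{v},
\]
which is a reindexing. The flagged parts give
\[
\ket{1}_{\cal F}\Bigl(\sum_{\ell\ge0}\lambda_\ell\ket{\ell}\Bigr)\ket{\pi_M}=\ket{1}\ket{\lambda}\ket{\pi_M}.
\]
Together these give the claimed identity.

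The only subtlety is justifying the infinite-sum manipulation. $S_{\rm AA}$ is unitary and hence bounded and continuous, so it commutes with norm-convergent sums. The input sum converges because $\ket{v}$ does, which is \Cref{lem:S_AA-complexity} (assumed, with $\mu>0$). The output unflagged sum is the same series reindexed, and the flagged sum converges because it is the image of a convergent series under a bounded map. One can also check directly that $\sum_\ell\lambda_\ell^2<\infty$: since $\lambda_\ell\le|\alpha^\ell|\le\norm{\ket{\pi_\ell}}$, this follows from $\sum_\ell\norm{\ket{\pi_\ell}}^2<\infty$.

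It remains to confirm that $\ket{\lambda}$ is a unit vector and depends only on $\mu$, as \Cref{thm:aa-transducer} requires. Unit norm follows from unitarity: the input and output have equal norm, the cross terms vanish because of orthogonality across the ${\cal F}$ and ${\cal C}$ registers, and so $1+\norm{\ket v}^2=\norm{\ket\lambda}^2+\norm{\ket v}^2$. Dependence only on $\mu$ holds because the recursion for $(\alpha^\ell,\beta^\ell)$ is a fixed real $2\times2$ dynamics on ${\cal K}$ with initial vector $(\sqrt\mu,\sqrt{1-\mu})$ and fixed schedule $a_\ell$.

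I expect no real obstacle here, since everything follows from the per-term action already computed. The genuine work, convergence of $\ket{v}$, is deferred to \Cref{lem:S_AA-complexity}.
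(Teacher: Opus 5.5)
Your proposal is correct and matches the paper's proof. Like the paper, you rewrite the input as $\sum_{\ell\geq0}\ket{0}_{\cal F}\ket{\ell}_{\cal C}\ket{\pi_\ell}$, apply the per-step action \eqref{eq:S_AA-action} to each term, and reindex the unflagged part back into $\ket{v}$. Your justification for exchanging $S_{\rm AA}$ with the infinite sum is sound, and so is your norm-comparison argument that $\norm{\ket\lambda}=1$; the paper obtains the unit norm separately in \Cref{rem:lambda-unit} by telescoping $\norm{\ket{\pi_\ell}}^2=\norm{\ket{\pi_{\ell+1}}}^2+\lambda_\ell^2$.
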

\begin{proof}
A simple calculation shows:
    \begin{align*}
        \ket{0}_{\cal F}\ket{0}_{\cal C}\ket{\pi}_{\cal V} +\ket{v}
        &= \sum_{\ell=0}^{\infty}\ket{0}_{\cal F}\ket{\ell}_{\cal C}\ket{\pi_{\ell}}_{\cal V}\\
        &\overset{S_{\rm AA}}{\mapsto}
        \sum_{\ell=0}^{\infty}\left(\ket{0}\ket{\ell+1}\ket{\pi_{\ell+1}}+\lambda_\ell\ket{1}\ket{\ell}\ket{\pi_M}\right)
        & \mbox{by \eqref{eq:S_AA-action}}\\
        &= \sum_{\ell=1}^{\infty}\ket{0}\ket{\ell}\ket{\pi_\ell}+\sum_{\ell=0}^\infty\lambda_\ell\ket{1}\ket{\ell}\ket{\pi_M}
        =\ket{v}+\ket{1}\ket{\lambda}\ket{\pi_M}.& \qedhere
    \end{align*}
\end{proof}

\begin{remark}\label{rem:lambda-unit}
    By unitarity of $W_{\rm leak}$ and $G$, $\norm{\ket{\pi_\ell}}^2=\norm{\ket{\pi_{\ell+1}}}^2+\lambda_\ell^2$ for every $\ell$, and \Cref{lem:S_AA-complexity} gives $\norm{\ket{\pi_\ell}}\to0$. Hence $\sum_{\ell\geq0}\lambda_\ell^2=\norm{\ket{\pi_0}}^2=1$: the output $\ket1\ket\lambda\ket{\pi_M}$ is a unit vector, as it must be for a transduction action. The vector $\ket\lambda$ depends only on $\mu$.
\end{remark}

\noindent We now prove \Cref{lem:S_AA-complexity}, upper bounding the catalyst size, $\norm{\ket{v}}^2$.

\begin{proof}[Proof of \Cref{lem:S_AA-complexity}]
Throughout this proof, we interpret the vectors $\ket{\pi_\ell}$ as real 2-dimensional vectors in $\mathrm{span}\{\ket{\pi_M},\ket{\pi_{\overline{M}}}\}$. The Grover iterate is well known to map this space to itself, acting on it as
$$R_\theta=\begin{pmatrix}\cos\theta&\sin\theta\\-\sin\theta&\cos\theta\end{pmatrix},
\quad\mbox{where}\quad \sin\frac\theta2=\sqrt\mu,\ \theta\in(0,\pi/2],$$
using $\mu\leq\frac12$. By the double angle identities, $\sin\theta=2\sqrt{\mu(1-\mu)}$ and $\cos\theta=1-2\mu\in[0,1)$. Since $1-2\mu\leq\sqrt{1-\mu}$,
\begin{equation}\label{eq:cot-bound}
    0\leq\cot\theta=\frac{\cos\theta}{\sin\theta}=\frac{1-2\mu}{2\sqrt{\mu(1-\mu)}}\leq\frac{1}{2\sqrt\mu}.
\end{equation}
Note that for any $T=O(1/\sqrt{\mu})$,
$$\sum_{\ell=1}^T\norm{\ket{\pi_\ell}}^2=O(1/\sqrt{\mu}),$$
since for all $\ell\geq 1$, $\norm{\ket{\pi_\ell}}^2\leq \norm{\ket{\pi_0}}^2=1$. Thus, to upper bound the infinite sum, we only need to show that $\norm{\ket{\pi_\ell}}^2$ decreases sufficiently quickly in the $\ell>T$ regime. We will instead define measure $q_\ell$ based on an alternative norm of $\ket{\pi_\ell}$, and show that this is a good approximation of $\norm{\ket{\pi_\ell}}^2$, and it decreases quickly.

To this end, define, for $a\in[0,1]$,
$$M(a)\defeq R_\theta\begin{pmatrix}a&0\\0&1\end{pmatrix},
\qquad
P(a)\defeq\begin{pmatrix}a&b(a)\\b(a)&1\end{pmatrix},
\quad\mbox{where}\quad b(a)\defeq\frac{(a-1)\cot\theta}{2}.$$
By \eqref{eq:pi-ell}, $\ket{\pi_{\ell+1}}=M(a_\ell)\ket{\pi_\ell}$, and a direct calculation gives
\begin{equation}\label{eq:P-relation}
    \forall a\in[0,1],\quad M(a)^TP(a)M(a)=aP(a).
\end{equation}
Let $C\defeq40$, let $J$ be the smallest integer with $4^J\geq C/\sqrt\mu$, and let $T\defeq4^J$. Since $C/\sqrt\mu>1$, we have $J\geq1$ and $T<4C/\sqrt\mu=160/\sqrt\mu$. For $j\geq J$, write $s_j\defeq4^{-j}\leq\sqrt\mu/C$, let $a^{(j)}$ be the common value of $a_\ell$ for $\ell\in B_j$, and let $P_j\defeq P(a^{(j)})$. For $\ell\geq T$, let $q_\ell\defeq\bra{\pi_\ell}P_{j(\ell)}\ket{\pi_\ell}$.

\begin{claim}\label{claim:P-bounds}
For all $j\geq J$, $\frac14I\leq P_j\leq 2I$ and $\norm{P_{j+1}-P_j}\leq\frac{5}{2C}$, and thus, for all $\ell\geq T$ $\frac14\norm{\ket{\pi_\ell}}^2\leq q_\ell\leq2\norm{\ket{\pi_\ell}}^2$.
\end{claim}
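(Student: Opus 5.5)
We bound the $2\times2$ matrix $P_j=P(a^{(j)})$ entrywise, using the explicit schedule \eqref{eq:schedule} together with the cotangent bound \eqref{eq:cot-bound}.

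\textbf{Diagonal and off-diagonal entries.} For $j\geq J$ write $s_j=4^{-j}\leq\sqrt\mu/C$. Then
$$1-a^{(j)}=\frac{2s_j}{1+s_j}\in[s_j,2s_j],$$
so $a^{(j)}\in[1-2s_j,1]$. Since $C=40$ and $\mu\leq1/2$, we have $a^{(j)}\geq 1-2/C=0.95$. For the off-diagonal entry,
$$|b(a^{(j)})|=\frac{(1-a^{(j)})\cot\theta}{2}\leq s_j\cdot\frac{1}{2\sqrt\mu}\leq\frac{1}{2C}=\frac1{80},$$
using \eqref{eq:cot-bound}.

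\textbf{Eigenvalue bounds.} The eigenvalues of the symmetric matrix $P_j$ lie within $|b|$ of its diagonal entries $a^{(j)}$ and $1$ (Gershgorin). They therefore lie in $[0.95-1/80,\,1+1/80]$, which sits inside $[\frac14,2]$. This gives $\frac14 I\leq P_j\leq 2I$.

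\textbf{Consecutive differences.} $P_{j+1}-P_j$ is symmetric, with diagonal entries $(a^{(j+1)}-a^{(j)},\,0)$ and off-diagonal entry $b(a^{(j+1)})-b(a^{(j)})$. The diagonal term satisfies $|a^{(j+1)}-a^{(j)}|\leq 2s_j\leq 2/C$, since both values lie in $[1-2s_j,1]$. The off-diagonal term is bounded by
$$\frac{\cot\theta}{2}\,|a^{(j+1)}-a^{(j)}|\leq \frac{1}{4\sqrt\mu}\cdot 2s_j\leq\frac{1}{2C}.$$
The operator norm of a symmetric $2\times2$ matrix is at most the maximum over rows of the sum of absolute values of that row. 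Hence
$$\norm{P_{j+1}-P_j}\leq \frac{2}{C}+\frac{1}{2C}=\frac{5}{2C}.$$

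\textbf{Consequence for $q_\ell$.} The bound on $q_\ell$ follows immediately. For $\ell\geq T=4^J$ we have $j(\ell)\geq J$. The vector $\ket{\pi_\ell}$ is real and lies in the two-dimensional space, so
$$q_\ell=\bra{\pi_\ell}P_{j(\ell)}\ket{\pi_\ell}\in\left[\tfrac14\norm{\ket{\pi_\ell}}^2,\;2\norm{\ket{\pi_\ell}}^2\right].$$

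The only delicate point is tracking the constants: the factor $\cot\theta\leq 1/(2\sqrt\mu)$ must be cancelled by $s_j\leq\sqrt\mu/C$, and this is exactly what the choice of $J$ guarantees. Everything else is elementary.
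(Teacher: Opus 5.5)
Your proof is correct and follows the paper's argument essentially step for step. You use the same bounds $1-a^{(j)}\leq 2s_j$ and $|b(a^{(j)})|\leq s_j/(2\sqrt\mu)\leq 1/(2C)$, and the same diagonal-minus-off-diagonal (Gershgorin-type) eigenvalue bound; the only cosmetic differences are that you obtain $P_j\leq 2I$ from Gershgorin rather than from $\mathrm{Tr}(P_j)\leq 2$ together with $P_j\geq 0$, and that you bound $\norm{P_{j+1}-P_j}$ by the maximal row sum rather than by factoring out the scalar $a^{(j+1)}-a^{(j)}$.
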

\begin{proof}
We have $1-a^{(j)}=\frac{2s_j}{1+s_j}\leq 2s_j$, so by \eqref{eq:cot-bound}, $\abs{b(a^{(j)})}\leq s_j\cot\theta\leq\frac{s_j}{2\sqrt\mu}\leq\frac{1}{2C}$. The smallest eigenvalue of a symmetric $2\times2$ matrix of the form $P(a)$ is at least $\min\{a,1\}-\abs{b(a)}$, so it is at least $1-2s_j-\frac1{2C}\geq1-\frac{5}{2C}\geq\frac14$, using $s_j\leq\frac1C$. Since $\mathrm{Tr}(P_j)=a^{(j)}+1\leq2$ and $P_j\geq0$, we also have $P_j\leq 2I$. For the second statement,
$$P_{j+1}-P_j=\big(a^{(j+1)}-a^{(j)}\big)\begin{pmatrix}1&\frac{\cot\theta}{2}\\\frac{\cot\theta}{2}&0\end{pmatrix},$$
with $0\leq a^{(j+1)}-a^{(j)}\leq1-a^{(j)}\leq2s_j$. The matrix in the above expression has norm at most $1+\frac{\cot\theta}{2}\leq1+\frac{1}{4\sqrt\mu}$, so $\norm{P_{j+1}-P_j}\leq 2s_j+\frac{s_j}{2\sqrt\mu}\leq\frac2C+\frac1{2C}=\frac{5}{2C}$.
\end{proof}

The previous claim shows that $q_\ell$ is a reasonably good proxy for $\norm{\ket{\pi_\ell}}^2$. Next we show that it decreases quickly.

\begin{claim}\label{claim:q-step}
Let $\ell\geq T$. If $\ell$ and $\ell+1$ are in the same block, then $q_{\ell+1}=a_\ell q_\ell$. Otherwise, $q_{\ell+1}\leq\left(1+\frac{10}{C}\right)a_\ell q_\ell$.
\end{claim}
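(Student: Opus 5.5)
The plan is to use the matrix identity \eqref{eq:P-relation} directly. By \eqref{eq:pi-ell} we have $\ket{\pi_{\ell+1}}=M(a_\ell)\ket{\pi_\ell}$, where we view these as real 2-vectors. Expanding the quadratic form gives
\[
\bra{\pi_{\ell+1}}P(a_\ell)\ket{\pi_{\ell+1}}=\bra{\pi_\ell}M(a_\ell)^TP(a_\ell)M(a_\ell)\ket{\pi_\ell}=a_\ell\bra{\pi_\ell}P(a_\ell)\ket{\pi_\ell}.
\]
For $\ell\geq T$ we have $j(\ell)\geq J$, and $P(a_\ell)=P_{j(\ell)}$. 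So the right-hand side equals $a_\ell q_\ell$.

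\emph{Same block.} If $\ell$ and $\ell+1$ lie in the same block, then $j(\ell+1)=j(\ell)$. The left-hand side is then exactly $q_{\ell+1}$, so $q_{\ell+1}=a_\ell q_\ell$ and the first case is done.

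\emph{Block boundary.} Otherwise $j(\ell+1)=j(\ell)+1=:j+1$, and we compare the two quadratic forms. Write
\[
q_{\ell+1}=\bra{\pi_{\ell+1}}P_{j}\ket{\pi_{\ell+1}}+\bra{\pi_{\ell+1}}(P_{j+1}-P_j)\ket{\pi_{\ell+1}}.
\]
The first term equals $a_\ell q_\ell$ by the identity above. By \Cref{claim:P-bounds}, the second term is at most $\frac{5}{2C}\norm{\ket{\pi_{\ell+1}}}^2$. Also by \Cref{claim:P-bounds}, $P_j\geq\frac14 I$, so
\[
\norm{\ket{\pi_{\ell+1}}}^2\leq 4\bra{\pi_{\ell+1}}P_j\ket{\pi_{\ell+1}}=4a_\ell q_\ell.
\]
Combining, $q_{\ell+1}\leq a_\ell q_\ell+\frac{5}{2C}\cdot4a_\ell q_\ell=(1+\frac{10}{C})a_\ell q_\ell$, which is the second case.

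The main point to get right is the bound on the perturbation term. It has to be controlled by $\norm{\ket{\pi_{\ell+1}}}^2$ expressed through the \emph{old} form $P_j$, not by $\norm{\ket{\pi_\ell}}^2$. Only this route produces the factor $a_\ell$ on the error term, rather than a bound in terms of $q_\ell$ alone. It uses the lower bound $P_j\geq\frac14 I$ from \Cref{claim:P-bounds}, which holds for all $j\geq J$, and it uses no positivity of $P_{j+1}-P_j$. Everything else is the routine identity \eqref{eq:P-relation}, which is taken as given.
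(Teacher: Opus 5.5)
Your proof is correct and matches the paper's argument essentially line by line. In the same-block case you apply \eqref{eq:P-relation} with $P(a_\ell)=P_{j(\ell)}$. At a block boundary you split off $P_{j+1}-P_j$, bound that term by $\frac{5}{2C}\norm{\ket{\pi_{\ell+1}}}^2$, and use $P_j\geq\frac14 I$ to get $\norm{\ket{\pi_{\ell+1}}}^2\leq 4\bra{\pi_{\ell+1}}P_j\ket{\pi_{\ell+1}}=4a_\ell q_\ell$, exactly as the paper does.
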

\begin{proof}
Let $j=j(\ell)$. First suppose $j(\ell+1)=j$, so $\ell$ and $\ell+1$ are in the same block. Using $\ket{\pi_{\ell+1}}=M(a_{\ell})\ket{\pi_\ell}$ and $a_{\ell}=a_{\ell+1}=a^{(j)}$, we have:
$$q_{\ell+1}=\bra{\pi_{\ell+1}}P_j\ket{\pi_{\ell+1}}=\bra{\pi_\ell}M(a_\ell)^TP(a_\ell)M(a_\ell)\ket{\pi_\ell}=
a_\ell\bra{\pi_\ell}P(a_\ell)\ket{\pi_{\ell}}
=a_\ell q_\ell,$$
by \eqref{eq:P-relation}. This proves the first case. In the second case, $j(\ell+1)=j+1$, and by \Cref{claim:P-bounds},
$$q_{\ell+1}\leq\bra{\pi_{\ell+1}}P_j\ket{\pi_{\ell+1}}+\norm{P_{j+1}-P_j}\norm{\ket{\pi_{\ell+1}}}^2\leq a_\ell q_\ell+\frac{5}{2C}\cdot4\bra{\pi_{\ell+1}}P_j\ket{\pi_{\ell+1}}=\left(1+\frac{10}{C}\right)a_\ell q_\ell.\qedhere$$
\end{proof}

Now fix $j\geq J$. The block $B_j$ has $3\cdot4^j=3/s_j$ elements, and $a^{(j)}=1-\frac{2s_j}{1+s_j}\leq1-s_j$. Applying \Cref{claim:q-step} for all $\ell\in B_j$,
$$q_{4^{j+1}}\leq\left(1+\frac{10}{C}\right)(1-s_j)^{3/s_j}q_{4^j}\leq\frac{5}{4}e^{-3}q_{4^j}\leq\frac{1}{16}q_{4^j}.$$
Moreover, $q_\ell$ is non-increasing within each block. So for $\ell\in B_{J+m}$ with $m\geq0$, we have $q_\ell\leq16^{-m}q_T\leq 2\cdot16^{-m}$. Since $\ell<4^{J+m+1}=4^{m+1}T$, we have $16^{-m}<16(T/\ell)^2$, and therefore
\begin{equation}\label{eq:main-thing-to-prove}
    \forall\ell\geq T,\quad\norm{\ket{\pi_\ell}}^2\leq4q_\ell\leq128\left(\frac{T}{\ell}\right)^2.
\end{equation}
Using $\norm{\ket{\pi_\ell}}\leq1$ for $\ell<T$ and $\sum_{\ell\geq T}\ell^{-2}\leq\frac{1}{T^2}+\frac1T\leq\frac2T$, we conclude\footnote{This constant in the final bound is very loose, and it would be interesting to give sharper bounds (given that a large constant overhead could kill a log factor improvement in practice).}
$$\norm{\ket v}^2=\sum_{\ell=1}^{T-1}\norm{\ket{\pi_\ell}}^2+\sum_{\ell\geq T}\norm{\ket{\pi_\ell}}^2\leq T+256\,T=257\,T<\frac{257\cdot160}{\sqrt\mu}=O\left(\frac1{\sqrt\mu}\right).$$
\end{proof}

This concludes the proof of \Cref{thm:aa-transducer}. We finish this section by showing that it is possible to truncate the transducer construction of \Cref{fig:amp-amp-transducer} in such a way that it has an efficient implementation, thus establishing \Cref{thm:aa-truncated-intro}.
For any integer $D\geq 1$, let $S_{\rm AA}^{[D]}$ be obtained from $S_{\rm AA}$ by replacing ${\cal C}$ with
$${\cal C}^{[D]}=\mathrm{span}\{\ket{0},\dots,\ket{D-1}\},$$
and the incrementation in Step~3 by incrementation modulo $D$. Its public space is $$\left(\mathrm{span}\left\{\ket{0}_{\cal F}\ket{0}_{{\cal C}^{[D]}}\right\}\otimes{\cal V}\right)\oplus\left(\mathrm{span}\{\ket{1}_{\cal F}\}\otimes{{\cal C}^{[D]}}\otimes{\cal V}\right),$$
and its private space is
$$\mathrm{span}\left\{\ket{0}_{\cal F}\ket{\ell}_{{\cal C}^{[D]}}:1\leq\ell\leq D-1\right\}\otimes{\cal V}.$$
\begin{theorem}\label{thm:aa-truncated}
Let $S_{\rm AA}^{[D]}$ be the transducer described above.
Let $T<160/\sqrt\mu$ be as in the proof of \cref{lem:S_AA-complexity}. Then:
\begin{enumerate}
    \item $S_{\rm AA}^{[D]}$ transduces $\ket{0}_{\cal F}\ket{0}_{{\cal C}^{[D]}}\ket{\pi}$ into a state $\ket{\tau}$ such that
    $$\norm{\ket{\tau}-\ket{1}\ket{\lambda'}\ket{\pi_M}}\leq \frac{16\,T}{D},$$
    for some unit vector $\ket{\lambda'}\in{\cal C}^{[D]}$ that depends only on $\mu$ and $D$;
    \item $W(S_{\rm AA}^{[D]},\ket{0}\ket{0}\ket{\pi})=O(1/\sqrt{\mu})$;
    \item $S_{\rm AA}^{[D]}$ can be implemented using one controlled call to each of $A'$ and ${A'}^\dagger$, $O(1)$ controlled calls to $R$, and $O(\log D+\log\dim{\cal V})$ additional one- and two-qubit gates.
\end{enumerate}
In particular, for any $\eta\in(0,1)$, taking $D=\ceil{16T/\eta}=O(1/(\eta\sqrt{\mu}))$ makes the error in Item~1 at most $\eta$, and the counter register then has $\ceil{\log D}=O(\log(1/\mu)+\log(1/\eta))$ qubits.
\end{theorem}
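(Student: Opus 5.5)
The plan is to reuse the untruncated analysis with a truncated catalyst. Define $\ket{v^{[D]}}\defeq\sum_{\ell=1}^{D-1}\ket{0}_{\cal F}\ket{\ell}_{{\cal C}^{[D]}}\ket{\pi_\ell}$, where the $\ket{\pi_\ell}$ are exactly the vectors of \eqref{eq:pi-ell}. This vector lies in the private space of $S_{\rm AA}^{[D]}$. Its squared norm is a partial sum of the series bounded in \Cref{lem:S_AA-complexity}, so once we show it is a catalyst (up to the stated error), Item~2 follows immediately with the same bound $257\,T=O(1/\sqrt\mu)$.

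For Item~1, redo the computation of \Cref{lem:S_AA-action} on $\ket0\ket0\ket\pi+\ket{v^{[D]}}=\sum_{\ell=0}^{D-1}\ket0\ket\ell\ket{\pi_\ell}$. By \eqref{eq:S_AA-action}, every term with $\ell\leq D-2$ behaves exactly as before. The only change is at $\ell=D-1$: incrementation modulo $D$ sends $\ket{0}\ket{D-1}\ket{\pi_D}$ to $\ket0\ket0\ket{\pi_D}$, and this lands in the public space. We therefore get
\[
S_{\rm AA}^{[D]}\big(\ket0\ket0\ket\pi+\ket{v^{[D]}}\big)=\ket{v^{[D]}}+\sum_{\ell=0}^{D-1}\lambda_\ell\ket1\ket\ell\ket{\pi_M}+\ket0\ket0\ket{\pi_D}.
\]
Since the part remaining in the private space is exactly $\ket{v^{[D]}}$, the uniqueness statement in the transducer definition implies that $\ket\tau$ equals the public part. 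Set $\ket{\lambda'}\defeq\ket{\lambda^{[D]}}/\norm{\ket{\lambda^{[D]}}}$ with $\ket{\lambda^{[D]}}=\sum_{\ell<D}\lambda_\ell\ket\ell$; it depends only on $\mu$ and $D$. By \Cref{rem:lambda-unit}, $\norm{\ket{\lambda^{[D]}}}^2=1-\norm{\ket{\pi_D}}^2$. The error then splits into $\norm{\ket{\pi_D}}$ plus $1-\norm{\ket{\lambda^{[D]}}}\leq\norm{\ket{\pi_D}}^2\leq\norm{\ket{\pi_D}}$, so it is at most $2\norm{\ket{\pi_D}}$. By \eqref{eq:main-thing-to-prove}, for $D\geq T$ this is at most $2\sqrt{128}\,T/D\leq 16\,T/D$ after constant bookkeeping ($2\sqrt{128}\approx22.6$, so I would tighten via $\norm{\ket{\pi_D}}^2\leq 4q_D$ together with the exact block decay). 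If the constant does not close, I would state a slightly larger constant, which is harmless for the $D=O(1/(\eta\sqrt\mu))$ conclusion. For $D<T$ the bound $16T/D>1$ is trivial against error at most $2$, and here too a constant adjustment may be needed. The last sentence of the theorem then follows by plugging in $D=\ceil{16T/\eta}$ and $T<160/\sqrt\mu$.

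For Item~3 I would count gates step by step. Step~3 is a controlled increment modulo $D$, which costs $O(\log D)$ gates. Step~2 is the Grover iterate, which uses one call each to $A'$ and ${A'}^\dagger$, one call to $R$, and a reflection about $\ket0$ costing $O(\log\dim{\cal V})$ gates. Step~1 is the main point. Mark the subspace with $R$ and an ancilla, which takes $O(1)$ calls to $R$ (compute and uncompute). Then apply a rotation by $a_\ell$ that depends only on $j(\ell)=\floor{\log_4\ell}$, i.e. on the position of the leading nonzero base-4 digit of $\ell$. Because $a^{(j)}=\frac{1-4^{-j}}{1+4^{-j}}=\cos(2\arctan 2^{-j})$, the rotation is a single-qubit $Y$-rotation by angle $2\arctan 2^{-j}$. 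For each $j\leq\log_4 D$ we apply it controlled on the indicator that the leading nonzero base-4 digit is at position $j$. Computing this one-hot indicator with a prefix-OR sweep costs $O(\log D)$ gates, plus $O(\log D)$ controlled fixed-angle rotations. The main obstacle is this implementation step, since it must be exact and linear in $\log D$. A secondary delicacy is the constant $16$ in the error bound.
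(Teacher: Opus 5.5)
Your proposal follows the paper's proof step for step: the same truncated catalyst $\ket{v^{[D]}}$, the same observation that the wrapped-around term $\ket0\ket0\ket{\pi_D}$ lands in the public space so the transduction is exact, and the same prefix-OR implementation of the block-dependent leak. The one place it falls short is the error bound, where it does not reach the stated constant $16$.

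\textbf{The error bound.} You apply the triangle inequality and get $\norm{\ket{\pi_D}}+(1-\sqrt{1-p})\leq 2\sqrt{128}\,T/D$. Neither the unspecified block-decay tightening nor enlarging the constant proves Item~1 as stated. The missing observation is that the two error pieces are orthogonal, because they carry different values of the flag. With $p\defeq\norm{\ket{\pi_D}}^2$,
\[
\ket\tau-\ket1\ket{\lambda'}\ket{\pi_M}=\ket1\bigl(\ket{\lambda^{[D]}}-\ket{\lambda'}\bigr)\ket{\pi_M}+\ket0\ket0\ket{\pi_D}.
\]
By Pythagoras, the squared error is exactly $(1-\sqrt{1-p})^2+p\leq p^2+p\leq 2p$. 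Then \eqref{eq:main-thing-to-prove} gives $2p\leq 256(T/D)^2$, so the error is at most $16T/D$. The case $D<T$ also needs no adjustment: the error is at most $\sqrt2$, while $16T/D>16$.

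\textbf{Item~3.} Since $j(0)=0$ and $a_0=0$, the block-$0$ indicator must be $[\ell<4]$, which includes $\ell=0$. It cannot be ``leading nonzero base-$4$ digit at position $0$'', or no leak would be applied at $\ell=0$. The paper handles this by setting $z_0\defeq1$, so that $e_0=\neg z_1$.
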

\begin{proof}
We first bound the cost of the three steps in \Cref{fig:amp-amp-transducer}.
\paragraph{Step 1.} The counter value $\ell\in\{0,\dots,D-1\}$ is stored in $n\defeq\ceil{\log D}$ bits $\ell_{n-1}\dots\ell_0$. For $j=\ceil{n/2},\dots,1$, compute into ancillas the bits $z_j\defeq[\ell\geq4^j]=z_{j+1}\vee\ell_{2j+1}\vee\ell_{2j}$ (with $z_{\ceil{n/2}+1}\defeq0$ and $z_0\defeq1$), using $O(1)$ gates each. Then $e_j\defeq z_j\wedge\neg z_{j+1}$ equals $1$ if and only if $j(\ell)=j$. By \eqref{eq:schedule},
there are $\lceil n/2\rceil$ possibilities for $W_{\rm leak}(a_\ell)$, $W_1,\dots,W_{n/2}$, depending on to which block $\ell$ belongs. For each $j$, we apply the fixed single-qubit gate $W_j$ to ${\cal F}$, controlled on being in the marked subspace (checked via control calls to $R$), and $e_j=1$. Finally, uncompute the $z_j$. Exactly one $e_j$ is $1$, so this applies $W_{\rm leak}(a_\ell)$ exactly, controlled on the marked subspace, using $O(\log D)$ gates.

\paragraph{Step 2.} This step is unchanged, and uses one controlled call to each of $A'$, ${A'}^\dagger$ and $R$, as well as $I-2\ket{0}\bra{0}$, also acting on ${\cal V}$, which can be implemented in $O(\log\dim{\cal V})$ gates.

\paragraph{Step 3.} The increment modulo $D$ can be implemented in $O(\log D)$ gates.

\bigskip

\noindent This proves Item~3.

\paragraph{Transduction action.} Let $\ket{\pi_\ell}$ and and $\lambda_\ell$ be as in \cref{sec:aa-transducer-first}. Let
$$\ket{v^{[D]}}\defeq\sum_{\ell=1}^{D-1}\ket{0}_{\cal F}\ket{\ell}_{{\cal C}^{[D]}}\ket{\pi_\ell},$$
which lies in the private space. Exactly as in \eqref{eq:S_AA-action}, for each $\ell\in\{0,\dots,D-1\}$,
$$S_{\rm AA}^{[D]}:\ket{0}_{\cal F}\ket{\ell}\ket{\pi_\ell}\mapsto\ket{0}_{\cal F}\ket{\ell+1\bmod D}\ket{\pi_{\ell+1}}+\lambda_\ell\ket{1}_{\cal F}\ket{\ell}\ket{\pi_M}.$$
Summing over $\ell$ as in the proof of \cref{lem:S_AA-action}, the only difference is that the term with $\ell=D-1$ wraps around to $\ket{\ell+1 \mod D}_{{\cal C}^{[D]}}=\ket{0}_{{\cal C}^{[D]}}$:
$$S_{\rm AA}^{[D]}\left(\ket{0}\ket{0}\ket{\pi}+\ket{v^{[D]}}\right)=\underbrace{\ket{1}_{\cal F}\ket{\lambda^{[D]}}_{{\cal C}^{[D]}}\ket{\pi_M}+\ket{0}_{\cal F}\ket{0}_{{\cal C}^{[D]}}\ket{\pi_D}}_{\eqqcolon\ket{\tau}}+\ket{v^{[D]}},\;\mbox{where }\ket{\lambda^{[D]}}\defeq\sum_{\ell=0}^{D-1}\lambda_\ell\ket{\ell}.$$
The wrapped-around term $\ket{0}_{\cal F}\ket{0}_{{\cal C}^{[D]}}\ket{\pi_D}$ lies in the public space. So this is an exact transduction $\ket{0}_{\cal F}\ket{0}_{\cal C}\ket{\pi}\overset{S_{\rm AA}^{[D]}}{\rightsquigarrow}\ket{\tau^{[D]}}$ with catalyst $\ket{v^{[D]}}$, and $\norm{\ket{v^{[D]}}}^2\leq\norm{\ket v}^2=O(1/\sqrt\mu)$ by \cref{lem:S_AA-complexity}. This proves Item~2.

\paragraph{Error.} By \cref{rem:lambda-unit}, $\norm{\ket{\lambda^{[D]}}}^2=1-p$, where $p\defeq\norm{\ket{\pi_D}}^2<1$. Let $\ket{\lambda'}\defeq\ket{\lambda^{[D]}}/\sqrt{1-p}$, a unit vector that depends only on $\mu$ and $D$. Then
$$\norm{\ket{\tau}-\ket{1}\ket{\lambda'}\ket{\pi_M}}^2=\left(1-\sqrt{1-p}\right)^2+p\leq p^2+p\leq 2p.$$
If $D\geq T$, then \eqref{eq:main-thing-to-prove} gives $2p\leq 256(T/D)^2$, so the error is at most $16T/D$. If $D<T$, the bound holds trivially, since $1<16T/D$. This proves Item~1.
\end{proof}

\subsubsection{Canonical transducer in $A$}\label{sec:aa-transducer-second}

In this section, we use a standard technique for making a transducer canonical (see \cite[{Section~10}]{belovs2024taming}).
Suppose $A'=U_FA$ for some unitary $U_F$ (for us it will correspond to the filter LCU from \Cref{sec:lcu-transducer}), so that $\ket\pi=U_FA\ket0_{\cal V}$. Then we can break the Grover iterate $G=U_F A (I-2\ket{0}\bra{0}) A^\dagger U_F^\dagger R$ up into its parts, some of which are oracle calls to $A$. We will make a transducer for amplitude amplification that is similar to the transducer in \Cref{sec:aa-transducer-first}, except that it is canonical in $A$. We will also consider not necessarily having calls to $U_F$, but instead, calls to a transducer $S_F$ on ${\cal V}\oplus {\cal L}_F$, with public space ${\cal V}$, with transduction action $U_F$. This is strictly more general, with the special case being recovered when ${\cal L}_F=\{0\}$, in which case $S_F=U_F$. Define unitaries on ${\cal V}\oplus{\cal L}_F$:
\begin{equation}
    G_0=S_F^\dagger (R_{\cal V}\oplus I_{{\cal L}_F}),\; G_1=A^\dagger_{\cal V}\oplus I_{{\cal L}_F},\; G_2=(I-2\ket{0}\bra{0})_{\cal V}\oplus I_{{\cal L}_F},\; G_3=A_{\cal V}\oplus I_{{\cal L}_F}, \; G_4=S_F.
\end{equation}
In the special case $S_F=U_F$, $S_G\defeq G_4G_3G_2G_1G_0=G$. In the general case this is not true, but $S_G$ would be a transducer for $G$ if $G_0$ and $G_4$ were each applied to ${\cal V}$ and orthogonal copies of the private space ${\cal L}_F$ \cite[Propositions~9.1 and~9.9]{belovs2024taming}, which is essentially what happens in our construction below via the register ${\cal T}$. The proof of this is basically contained in \mbox{\Cref{lem:G-catalyst}}.

The canonical transducer will be the process in \Cref{fig:amp-amp-transducer-2}, which acts on one additional register, ${\cal T}=\mathrm{span}\{\ket0,\dots,\ket4\}$, which keeps a count of where we are in the process of applying the unitaries that make up $S_G$. Thus, instead of applying $S_G$ in one go, we break it up into parts, only applying the relevant part, and then incrementing the counter ${\cal T}$. In this way, in those applications of the transducer where we apply $A$ or $A^\dagger$, it's the first thing we apply, and it's not applied to the public space.

\begin{figure}
\hrule\vspace{3pt}
\noindent\textbf{The canonical transducer $S_{\rm AA}(A)$ for amplitude amplification}
\vspace{3pt}\hrule\vspace{1.5pt}\hrule\vspace{6pt}
\noindent\begin{tabular}{@{}l@{\ }l@{}}
\textbf{Public Space:} & ${\cal H}=(\mathrm{span}\{\ket{0}_{\cal F}\ket{0}_{\cal C}\ket{0}_{\cal T}\}\otimes{\cal V})\oplus (\mathrm{span}\{\ket{1}_{\cal F}\}\otimes {\cal C}\otimes {\cal T}\otimes {\cal V})$\\
\textbf{Private Space:} & ${\cal L}^{\bullet}=\mathrm{span}\{\ket{0}_{\cal F}\ket{\ell}_{\cal C}\ket{t}_{\cal T}:\ell\geq 0,t\in\{1,3\}\}\otimes{\cal V}$\\
 & ${\cal L}^{\circ}=(\mathrm{span}\{\ket{0}_{\cal F}\ket{\ell}_{\cal C}\ket{t}_{\cal T}:{(\ell\geq 1,t=0)\mbox{ or }(\ell\geq0,t\in\{2,4\})}\}\otimes{\cal V})$\\
 & $\phantom{{\cal L}^{\circ}}\oplus ({\cal F}\otimes {\cal C}\otimes {\cal T}\otimes {\cal L}_F)$
\end{tabular}
\vspace{4pt}\hrule
\begin{enumerate}
    \item Controlled on {being} in the marked subspace of ${\cal V}$ in the last register, if ${\cal T}=0$, use the value $\ell$ in ${\cal C}$ to apply $W_{\rm leak}({a_\ell})$ to ${\cal F}$ (see \Cref{eq:W-leak}).
    \item If ${\cal F}=0$: if ${\cal T}=1$ apply $A^\dagger$ to ${\cal V}$ and if ${\cal T}=3$ apply $A$ to ${\cal V}$.
    \item If ${\cal F}=0$: if ${\cal T}=t$ for $t\in\{0,2,4\}$, apply $G_t$ to ${\cal V}\oplus {\cal L}_F$.
    \item If ${\cal F}={\cal T}=0$, controlled on being in ${\cal V}$ in the last register, increment ${\cal C}$.
    \item If ${\cal F}=0$, controlled on being in ${\cal V}$ in the last register, increment ${\cal T}$ modulo 5.
\end{enumerate}
\hrule
\caption{The canonical transducer $S_{\rm AA}(A)$. We can see that it is canonical by noting that steps 1 and 2 commute: Step 1 acts non-trivially only on states with ${\cal T}=0$, and Step 2 only on states with ${\cal T}\in\{1,3\}$. We could have just as well combined Steps 2 and 3 into a single step that controls on ${\cal F}=0$ and applies $G_t$ where $t$ is the value in ${\cal T}$, but we wrote it this way to make it clearer it's canonical.}\label{fig:amp-amp-transducer-2}
\end{figure}

\paragraph{Catalyst.} Towards defining a catalyst, let $\ket{\pi_\ell}$ be as in \Cref{eq:pi-ell}, and define:
\begin{equation}
    \ket{\pi_\ell^0}\defeq \ket{\pi_\ell},
    \;\;
    \ket{\pi_\ell^1}\defeq U_F^\dagger {R}(a_{\ell}\Pi_M+\Pi_{\overline{M}})\ket{\pi_\ell},
    \;\;\mbox{for }t\in\{2,3,4\},\,
    \ket{\pi_\ell^t}\defeq G_{t-1}\ket{\pi_{\ell}^{t-1}},
    \;\;
    \ket{\pi_\ell^5}\defeq U_F\ket{\pi_{\ell}^4}.
\end{equation}
Note that we have
$$\ket{\pi_{\ell}^5}=G(a_{\ell}\Pi_M+\Pi_{\overline{M}})\ket{\pi_\ell}=\ket{\pi_{\ell+1}}.$$

\begin{lemma}\label{lem:G-catalyst}
    For any $\ell\geq 0$, there is a catalyst $\ket{v^{\ell}}\in {\cal F}\otimes {\cal C}\otimes \mathrm{span}\{\ket{0}_{\cal T},\ket{4}_{\cal T}\}\otimes {\cal L}_F$ such that $\norm{\ket{v^{\ell}}}^2\leq {2}\norm{\ket{\pi_{\ell+1}}}^2W(S_F^\dagger,{\cal K})$, and
    \begin{align*}
        &S_{\rm AA}(A)\left(\ket{0}_{\cal F}\ket{\ell}_{\cal C}\ket{0}_{\cal T}\ket{\pi_\ell}+\ket{0}_{\cal F}\ket{\ell+1}_{\cal C}\sum_{t=1}^4\ket{t}_{\cal T}\ket{\pi_{\ell}^t}_{\cal V}+\ket{v^{\ell}}\right)\\
        ={}&\ket{0}_{\cal F}\ket{\ell+1}_{\cal C}\sum_{t=1}^5\ket{t}_{\cal T}\ket{\pi_{\ell}^t}+\lambda_\ell\ket{1}_{\cal F}\ket{\ell}_{\cal C}\ket{0}_{\cal T}\ket{\pi_M}+\ket{v^{\ell}},
    \end{align*}
    {where we identify $\ket5_{\cal T}\ket{\pi_\ell^5}$ with $\ket0_{\cal T}\ket{\pi_{\ell+1}}$.}
    Moreover, for $\ell\neq \ell'$, $\braket{v^{\ell}}{v^{\ell'}}=0$.
\end{lemma}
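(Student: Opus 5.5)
We build the catalyst $\ket{v^\ell}$ from the two places where the filter transducer is used, and then check the action of $S_{\rm AA}(A)$ one ${\cal T}$-slice at a time. The filter is used at ${\cal T}=0$, where $G_0=S_F^\dagger(R\oplus I)$, and at ${\cal T}=4$, where $G_4=S_F$.

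\textbf{Choice of catalysts.} Let $\ket{y_\ell}\defeq R(a_\ell\Pi_M+\Pi_{\overline M})\ket{\pi_\ell}$. This vector lies in ${\cal K}$, since $R$ acts as $\pm1$ on $\ket{\pi_M}$ and $\ket{\pi_{\overline M}}$. We rewrite it as $\ket{y_\ell}=U_F\ket{\pi_\ell^1}$. By the remark that $S_F^\dagger$ transduces $U_F^\dagger$, there is $\ket{u_\ell}\in{\cal L}_F$ with
\[
S_F^\dagger(\ket{y_\ell}+\ket{u_\ell})=\ket{\pi_\ell^1}+\ket{u_\ell},\qquad \norm{\ket{u_\ell}}^2\le \norm{\ket{y_\ell}}^2 W(S_F^\dagger,{\cal K}).
\]
For ${\cal T}=4$ we use the reversal trick. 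We have $\ket{\pi_{\ell+1}}=U_F\ket{\pi^4_\ell}$, and $\ket{\pi_{\ell+1}}\in{\cal K}$, because $G$ preserves ${\cal K}$. So a catalyst $\ket{w_\ell}$ for $S_F^\dagger$ on input $\ket{\pi_{\ell+1}}$ gives $S_F^\dagger(\ket{\pi_{\ell+1}}+\ket{w_\ell})=\ket{\pi^4_\ell}+\ket{w_\ell}$. Applying $S_F$ to both sides yields $S_F(\ket{\pi_\ell^4}+\ket{w_\ell})=\ket{\pi_{\ell+1}}+\ket{w_\ell}$, with $\norm{\ket{w_\ell}}^2\le\norm{\ket{\pi_{\ell+1}}}^2W(S_F^\dagger,{\cal K})$.

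We then set
\[
\ket{v^\ell}\defeq\ket0_{\cal F}\ket{\ell}_{\cal C}\ket0_{\cal T}\ket{u_\ell}+\ket0_{\cal F}\ket{\ell+1}_{\cal C}\ket4_{\cal T}\ket{w_\ell}.
\]
Distinct $\ell$ give orthogonal vectors: the ${\cal T}=0$ parts differ in ${\cal C}$, the ${\cal T}=4$ parts differ in ${\cal C}$, and the two kinds differ in ${\cal T}$. For the norm bound, $\norm{\ket{y_\ell}}=\norm{(a_\ell\Pi_M+\Pi_{\overline M})\ket{\pi_\ell}}$, and this equals $\norm{\ket{\pi_{\ell+1}}}$ because $G$ is unitary. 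Hence $\norm{\ket{v^\ell}}^2\le 2\norm{\ket{\pi_{\ell+1}}}^2W(S_F^\dagger,{\cal K})$.

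\textbf{Checking the action, slice by slice.}
\begin{itemize}
\item On ${\cal T}=0$, ${\cal C}=\ell$: Step 1 leaks. By \eqref{eq:controlled-leak-action} it produces $\lambda_\ell\ket1\ket\ell\ket0\ket{\pi_M}$ plus the unflagged part $(a_\ell\Pi_M+\Pi_{\overline M})\ket{\pi_\ell}$. Step 1 does not touch $\ket{u_\ell}$, since that vector is not in ${\cal V}$. Step 3 applies $G_0$ to the unflagged part together with $\ket{u_\ell}$, giving $\ket{\pi_\ell^1}+\ket{u_\ell}$. Step 4 increments ${\cal C}$ only on the ${\cal V}$-component, and Step 5 moves ${\cal T}$ to $1$ only on the ${\cal V}$-component. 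The result is $\ket0\ket{\ell+1}\ket1\ket{\pi^1_\ell}+\ket0\ket\ell\ket0\ket{u_\ell}$.
\item On ${\cal T}=1,2,3$ with ${\cal C}=\ell+1$: the steps apply $A^\dagger$, $G_2$ and $A$ respectively, and then increment ${\cal T}$. By definition this maps $\ket{t}\ket{\pi^t_\ell}$ to $\ket{t+1}\ket{\pi^{t+1}_\ell}$.
\item On ${\cal T}=4$ with ${\cal C}=\ell+1$: $G_4=S_F$ acts on $\ket{\pi^4_\ell}+\ket{w_\ell}$ and gives $\ket{\pi_{\ell+1}}+\ket{w_\ell}$. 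Only the ${\cal V}$-part has ${\cal T}$ incremented, to $5\equiv0$, and ${\cal C}$ is not incremented because ${\cal T}=4$ at Step 4. This yields $\ket0\ket{\ell+1}\ket0\ket{\pi_{\ell+1}}$, matching the identification in the statement, and $\ket{w_\ell}$ stays in place.
\end{itemize}
Summing the slices gives exactly the claimed right-hand side.

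\textbf{Main obstacle.} The delicate point is the bookkeeping of Steps 4 and 5. They are controlled on the last register lying in ${\cal V}$, so that the ${\cal L}_F$-components $\ket{u_\ell}$ and $\ket{w_\ell}$ keep their $({\cal C},{\cal T})$ labels and reappear unchanged. We must also check that Step 1's leak never acts on them, and that at ${\cal T}=0$ the ${\cal C}$-increment precedes the ${\cal T}$-increment consistently with the labels $\ell+1$ used for $t\ge1$.
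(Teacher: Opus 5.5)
Your proposal is correct and follows the paper's proof essentially verbatim. It uses the same catalyst $\ket{v^\ell}$, built from an $S_F^\dagger$-catalyst for $\ket{y_\ell}$ placed at $({\cal C},{\cal T})=(\ell,0)$ and an $S_F^\dagger$-catalyst for $\ket{\pi_{\ell+1}}$ (reversed to serve $S_F$) placed at $(\ell+1,4)$, with the same norm and orthogonality arguments and the same slice-by-slice check over ${\cal T}$. The only slip is cosmetic: your ${\cal T}=0$ bullet should also list the leaked term $\lambda_\ell\ket1\ket\ell\ket0\ket{\pi_M}$ in its result, which Steps 2--5 leave untouched since they are all controlled on ${\cal F}=0$.
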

\begin{proof}
Let $\ket{y_\ell}\defeq R(a_\ell\Pi_M+\Pi_{\overline M})\ket{\pi_\ell}$. Since $\ket{\pi_\ell}\in{\cal K}=\mathrm{span}\{\ket{\pi_M},\ket{\pi_{\overline{M}}}\}$, and both $R$ and $a_\ell\Pi_M+\Pi_{\overline M}$ map ${\cal K}$ to itself, we have $\ket{y_\ell}\in{\cal K}$. Moreover, $\norm{\ket{y_\ell}}=\norm{(a_\ell\Pi_M+\Pi_{\overline M})\ket{\pi_\ell}}=\norm{G(a_\ell\Pi_M+\Pi_{\overline M})\ket{\pi_\ell}}=\norm{\ket{\pi_{\ell+1}}}$, since $R$ and $G$ are unitary. By \eqref{eq:W-K}, there is a catalyst $\ket{v_\ell}\in{\cal L}_F$ for $\ket{y_\ell}$ in $S_F^\dagger$ with $\norm{\ket{v_\ell}}^2\leq\norm{\ket{\pi_{\ell+1}}}^2W(S_F^\dagger,{\cal K})$. Since $U_F^\dagger\ket{y_\ell}=\ket{\pi_\ell^1}$, this means that:
\begin{equation}\label{eq:S_F-cat}
\begin{split}
S_F^\dagger(\ket{y_{\ell}}+\ket{v_\ell}) &=U_F^\dagger\ket{y_\ell}+\ket{v_\ell}=\ket{\pi_\ell^1}+\ket{v_\ell}.
\end{split}
\end{equation}
Similarly, since $\ket{\pi_{\ell+1}}\in{\cal K}$, there is a catalyst $\ket{v_{\ell}'}\in {\cal L}_F$ for $\ket{\pi_{\ell+1}}$ in $S_F^\dagger$ with catalyst size $\norm{\ket{v'_{\ell}}}^2\leq\norm{\ket{\pi_{\ell+1}}}^2W(S_F^\dagger,{\cal K})$. Since $U_F^\dagger\ket{\pi_{\ell+1}}=U_F^\dagger\ket{\pi_\ell^5}=\ket{\pi_\ell^4}$, this means that:
\begin{equation}\label{eq:S_F-cat2}
\begin{split}
    S_F^\dagger (\ket{\pi_{\ell+1}}+\ket{v_\ell'}) &= U_F^\dagger\ket{\pi_{\ell+1}}+\ket{v_\ell'}\\
    \ket{\pi_{\ell+1}}+\ket{v_\ell'} &= S_F(\ket{\pi_\ell^4}+\ket{v_\ell'}).
\end{split}
\end{equation}
Then define:
\begin{equation}\label{eq:v-sup-ell}
\ket{v^\ell}:=\ket{0}_{\cal F}\ket{\ell}_{\cal C}\ket{0}_{\cal T}\ket{v_{\ell}}_{{\cal L}_F}+\ket{0}_{\cal F}\ket{\ell+1}_{\cal C}\ket{4}_{\cal T}\ket{v_\ell'}_{{\cal L}_F}
\end{equation}
so that
$\norm{\ket{v^\ell}}^2\leq 2\norm{\ket{\pi_{\ell+1}}}^2W(S_F^\dagger,{\cal K})$
(and $\braket{v^\ell}{v^{\ell'}}=0$ for $\ell\neq\ell'$, since they are supported on disjoint sets of values $(\ell,0),(\ell+1,4)$ of ${\cal C}\ot{\cal T}$)
and
\begin{equation}\label{eq:coupling}
\begin{split}
    &\ket{0}_{\cal F}\ket{\ell}_{\cal C}\ket{0}_{\cal T}\ket{\pi_\ell}+\ket{0}_{\cal F}\ket{\ell+1}_{\cal C}\sum_{t=1}^4\ket{t}_{\cal T}\ket{\pi_{\ell}^t}_{\cal V}+\ket{v^{\ell}}\\
    ={}& \ket{0}\ket{\ell}\ket{0}(\ket{\pi_\ell}+\ket{v_\ell})+\ket{0}\ket{\ell+1}\sum_{t=1}^3\ket{t}\ket{\pi_\ell^t}+\ket{0}\ket{\ell+1}\ket{4}(\ket{\pi_\ell^4}+\ket{v_\ell'}).
\end{split}
\end{equation}

The action of $S_{\rm AA}(A)$ on this vector depends on the value $t$, so we consider it in cases. First, for the $t=0$ part of the state, we can see the action step by step. The first step controls on being in the marked subspace of ${\cal V}$ in the last register, which does nothing to states in ${\cal L}_F$ in the last register, and acts on states with ${\cal V}$ in the last register as in \Cref{eq:controlled-leak-action}:
\begin{align*}
    &\ket{0}_{\cal F}\ket{\ell}_{\cal C}\ket{0}_{\cal T}(\ket{\pi_{\ell}}_{\cal V}+\ket{v_{\ell}}_{{\cal L}_F})\\
    \overset{1}{\mapsto}{}& \ket{0}_{\cal F}\ket{\ell}_{\cal C}\ket{0}_{\cal T}({a_{\ell}}\Pi_M+\Pi_{\overline{M}})\ket{\pi_\ell}_{\cal V}+{\sqrt{1-a_\ell^2}}\alpha^\ell\ket{1}_{\cal F}\ket{\ell}_{\cal C}\ket{0}_{\cal T}\ket{\pi_M}_{\cal V}+
    \ket{0}_{\cal F}\ket{\ell}_{\cal C}\ket{0}_{\cal T}\ket{v_{\ell}}_{{\cal L}_F}\\
    ={}& \ket{0}_{\cal F}\ket{\ell}_{\cal C}\ket{0}_{\cal T}(({a_{\ell}}\Pi_M+\Pi_{\overline{M}})\ket{\pi_\ell}_{\cal V}+\ket{v_\ell}_{{\cal L}_F})+\lambda_\ell\ket{1}_{\cal F}\ket{\ell}_{\cal C}\ket{0}_{\cal T}\ket{\pi_M}_{\cal V}\\
    \overset{2\& 3}{\mapsto}{}& \ket{0}_{\cal F}\ket{\ell}_{\cal C}\ket{0}_{\cal T} G_0(({a_{\ell}}\Pi_M+\Pi_{\overline{M}})\ket{\pi_\ell}_{\cal V}+\ket{v_\ell}_{{\cal L}_F})+\lambda_\ell\ket{1}_{\cal F}\ket{\ell}_{\cal C}\ket{0}_{\cal T}\ket{\pi_M}_{\cal V}\\
    ={}& \ket{0}_{\cal F}\ket{\ell}_{\cal C}\ket{0}_{\cal T} S_F^\dagger({R}({a_{\ell}}\Pi_M+\Pi_{\overline{M}})\ket{\pi_\ell}_{\cal V}+\ket{v_\ell}_{{\cal L}_F})+\lambda_\ell\ket{1}_{\cal F}\ket{\ell}_{\cal C}\ket{0}_{\cal T}\ket{\pi_M}_{\cal V}\\
    ={}& \ket{0}_{\cal F}\ket{\ell}_{\cal C}\ket{0}_{\cal T}(\ket{\pi_\ell^1}+\ket{v_\ell}) +\lambda_\ell\ket{1}_{\cal F}\ket{\ell}_{\cal C}\ket{0}_{\cal T}\ket{\pi_M}_{\cal V}
\end{align*}
by \Cref{eq:S_F-cat}, since steps 2 and 3, combined, simply apply $G_t$ controlled on $t$, conditioned on the flag being 0. Finally the last two steps act as
\begin{align*}
    \overset{4}{\mapsto}{}& \ket{0}_{\cal F}\ket{\ell+1}_{\cal C}\ket{0}_{\cal T}\ket{\pi_\ell^1}+\ket{0}_{\cal F}\ket{\ell}_{\cal C}\ket{0}_{\cal T}\ket{v_\ell} +\lambda_\ell\ket{1}_{\cal F}\ket{\ell}_{\cal C}\ket{0}_{\cal T}\ket{\pi_M}_{\cal V}\\
    \overset{5}{\mapsto}{}& \ket{0}_{\cal F}\ket{\ell+1}_{\cal C}\ket{1}_{\cal T}\ket{\pi_\ell^1}+\ket{0}_{\cal F}\ket{\ell}_{\cal C}\ket{0}_{\cal T}\ket{v_\ell} +\lambda_\ell\ket{1}_{\cal F}\ket{\ell}_{\cal C}\ket{0}_{\cal T}\ket{\pi_M}_{\cal V}
\end{align*}
so we conclude
\begin{equation}\label{eq:S_AA-action-1}
\ket{0}\ket{\ell}\ket{0}(\ket{\pi_{\ell}}+\ket{v_{\ell}})
     \overset{S_{\rm AA}(A)}{\longmapsto}
     \ket{0}\ket{\ell+1}\ket{1}\ket{\pi_\ell^1}+\ket{0}\ket{\ell}\ket{0}\ket{v_\ell} +\lambda_\ell\ket{1}\ket{\ell}\ket{0}\ket{\pi_M}.
\end{equation}

For the second case, we consider $t\in\{1,2,3\}$. Then for the $t$ part of the state, step 1 does nothing, since $t\neq 0$, and step 4 does nothing, since steps 2 and 3 only control on $t$, so it's still $\neq 0$. The action of steps 2 and 3 is to apply $G_t$ to the last register controlled on the flag being 0, and Step 5 increments ${\cal T}$, so that the action of $S_{\rm AA}(A)$ on the $t$ part of the state is:
\begin{equation}\label{eq:S_AA-action-2}
    \begin{split}
        \ket{0}\ket{\ell+1}\ket{t}\ket{\pi_{\ell}^t} \overset{2\& 3}{\longmapsto}{}& \ket{0}\ket{\ell+1}\ket{t}G_t\ket{\pi_{\ell}^t}\\
        ={}&\ket{0}\ket{\ell+1}\ket{t}\ket{\pi_{\ell}^{t+1}}\\
        \overset{5}{\longmapsto}{}&\ket{0}\ket{\ell+1}\ket{t+1}\ket{\pi_{\ell}^{t+1}}.
    \end{split}
\end{equation}
Finally, we consider the action on the $t=4$ part of the state. Again, Step 1 does nothing, since $t\neq 0$, and Step 2 does nothing because $t\not\in\{1,3\}$. Step 3 applies $G_4=S_F$ conditioned on the flag being 0. Step 4 does nothing since $t\neq 0$, and Step 5 increments ${\cal T}$ mod 5 in the ${\cal V}$ part of the state, so the action of $S_{\rm AA}(A)$ on the $t=4$ part of the state is:
\begin{align*}
    \ket{0}\ket{\ell+1}\ket{4}(\ket{\pi_{\ell}^4}+\ket{v_{\ell}'})
    \overset{3}{\mapsto}{}&
    \ket{0}\ket{\ell+1}\ket{4}S_F(\ket{\pi_{\ell}^4}+\ket{v_{\ell}'})\\
    ={}& \ket{0}\ket{\ell+1}\ket{4}\ket{\pi_{\ell+1}}+\ket{0}\ket{\ell+1}\ket{4}\ket{v_{\ell}'} & \mbox{by \eqref{eq:S_F-cat2}}\\
    \overset{5}{\mapsto}{}&
    \ket{0}\ket{\ell+1}\ket{0}\ket{\pi_{\ell+1}}+\ket{0}\ket{\ell+1}\ket{4}\ket{v_{\ell}'}.
\end{align*}
Combining this with \eqref{eq:S_AA-action-1} and \eqref{eq:S_AA-action-2} as well as \eqref{eq:coupling}, the claimed action of $S_{\rm AA}(A)$ follows.
\end{proof}

Now, define a catalyst, based on the vectors $\ket{v^\ell}$ from \Cref{lem:G-catalyst}:
\begin{equation}\label{eq:v-prime}
\ket{v'}=\ket{0}_{\cal F}\sum_{\ell=1}^\infty\ket{\ell}_{\cal C}\sum_{t=1}^5\ket{t}_{\cal T}\ket{\pi_{\ell-1}^t}_{\cal V}+\sum_{\ell=0}^\infty\ket{v^\ell}_{{\cal FCTL}_F},\qquad\mbox{where }\ket{5}_{\cal T}\ket{\pi^5_{\ell-1}}\equiv\ket0_{\cal T}\ket{\pi_\ell}.
\end{equation}
First, we note that the norm converges.
\begin{lemma}\label{lem:S_AA-canonical-complexity}
    $\norm{\ket{v'}}^2=O({(1+W(S_F^\dagger,{\cal K})})/\sqrt{\mu})$, and $\norm{\Pi_{{\cal L}^\bullet}\ket{v'}}^2=O(1/\sqrt{\mu})$, where ${\cal L}^\bullet$ is the query part of the private space, defined in \Cref{fig:amp-amp-transducer-2}.
\end{lemma}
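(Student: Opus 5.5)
The catalyst $\ket{v'}$ in \eqref{eq:v-prime} is a sum of two pieces: the ${\cal V}$-part $\ket{0}\sum_{\ell\geq1}\ket{\ell}\sum_{t=1}^5\ket{t}\ket{\pi^t_{\ell-1}}$ and the ${\cal L}_F$-part $\sum_{\ell\geq0}\ket{v^\ell}$. These are supported on orthogonal subspaces (last register in ${\cal V}$ versus ${\cal L}_F$), so $\norm{\ket{v'}}^2$ is the sum of their squared norms. I will bound the two pieces separately, and then read off the projection onto ${\cal L}^\bullet$.

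\textbf{The ${\cal V}$-part.} For fixed $\ell\geq1$, the terms for different $t$ are orthogonal because they carry different values in ${\cal T}$. The identification $\ket5\ket{\pi^5_{\ell-1}}\equiv\ket0\ket{\pi_\ell}$ puts that term at $(\ell,0)$ in ${\cal C}\otimes{\cal T}$, which is also distinct from every other $(\ell,t)$. So the squared norm is $\sum_{\ell\geq1}\sum_{t=1}^5\norm{\ket{\pi^t_{\ell-1}}}^2$. Each $\ket{\pi^t_{\ell-1}}$ for $t\geq1$ is obtained from $(a_{\ell-1}\Pi_M+\Pi_{\overline M})\ket{\pi_{\ell-1}}$ by unitaries: $R$ and $U_F^\dagger$ give $\ket{\pi^1}$, the unitaries $G_1,G_2,G_3$ give $t=2,3,4$, and $U_F$ gives $t=5$. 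Hence $\norm{\ket{\pi^t_{\ell-1}}}=\norm{\ket{\pi_\ell}}$, as already observed in the proof of \Cref{lem:G-catalyst}. For $t=1,2,3$ the vectors $\ket{\pi^1_{\ell-1}}$ and $G_{t-1}\ket{\pi^{t-1}}$ lie in ${\cal V}$, since $G_1,G_2,G_3$ act as genuine unitaries on ${\cal V}$. For $t=4$ the vector is $G_3\ket{\pi^3}\in{\cal V}$, so no ${\cal L}_F$ leakage arises here. The ${\cal V}$-part therefore has squared norm exactly $5\sum_{\ell\geq1}\norm{\ket{\pi_\ell}}^2=O(1/\sqrt\mu)$ by \Cref{lem:S_AA-complexity}.

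\textbf{The ${\cal L}_F$-part.} The $\ket{v^\ell}$ are pairwise orthogonal by \Cref{lem:G-catalyst}, so $\norm{\sum_\ell\ket{v^\ell}}^2=\sum_{\ell\geq0}\norm{\ket{v^\ell}}^2$. The same lemma bounds this by $2W(S_F^\dagger,{\cal K})\sum_{\ell\geq0}\norm{\ket{\pi_{\ell+1}}}^2=O(W(S_F^\dagger,{\cal K})/\sqrt\mu)$, again using \Cref{lem:S_AA-complexity}. Adding the two parts gives $\norm{\ket{v'}}^2=O((1+W(S_F^\dagger,{\cal K}))/\sqrt\mu)$.

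\textbf{The ${\cal L}^\bullet$-projection.} By definition, ${\cal L}^\bullet$ consists of flag $0$, ${\cal T}\in\{1,3\}$, and last register in ${\cal V}$. The $\ket{v^\ell}$ have ${\cal T}\in\{0,4\}$ and last register in ${\cal L}_F$, so they project to zero. Only the $t=1$ and $t=3$ terms of the ${\cal V}$-part survive, giving $\norm{\Pi_{{\cal L}^\bullet}\ket{v'}}^2=2\sum_{\ell\geq1}\norm{\ket{\pi_\ell}}^2=O(1/\sqrt\mu)$.

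\textbf{Main obstacle.} The delicate point is the bookkeeping of orthogonality and membership in subspaces, since the bounds themselves are routine. Three checks are needed:
\begin{enumerate}
\item The ${\cal L}_F$-catalysts sit only at ${\cal T}\in\{0,4\}$, so they are orthogonal to the ${\cal V}$-part.
\item The wrapped $t=5$ term does not collide with any $t=0$ term from another $\ell$.
\item The vectors $\ket{\pi^t}$ genuinely lie in ${\cal V}$.
\end{enumerate}
For the third check, $\ket{\pi^1}=U_F^\dagger\ket{y}$ uses the transduction action $U_F$ on ${\cal V}$, not the raw action of $S_F^\dagger$, so it lies in ${\cal V}$ by definition.
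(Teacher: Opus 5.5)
Your proposal is correct and follows the paper's proof essentially step by step. It uses the orthogonality of the ${\cal V}$-part and the $\ket{v^\ell}$, the norm identity $\norm{\ket{\pi^t_{\ell-1}}}=\norm{\ket{\pi_\ell}}$, the per-$\ell$ bound and pairwise orthogonality from \Cref{lem:G-catalyst}, and then \Cref{lem:S_AA-complexity}. The only difference is that you compute the ${\cal L}^\bullet$-projection exactly as $2\sum_{\ell\geq1}\norm{\ket{\pi_\ell}}^2$, since only $t\in\{1,3\}$ survive, whereas the paper settles for the cruder bound $5\sum_{\ell\geq1}\norm{\ket{\pi_\ell}}^2$.
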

\begin{proof}
    Note that for $t\in\{1,\dots,5\}$,
    $\norm{\ket{\pi_{\ell-1}^t}}^2 = \norm{\ket{\pi_{\ell}}}^2,$ {since $\ket{\pi_{\ell-1}^1},\dots,\ket{\pi_{\ell-1}^5}=\ket{\pi_\ell}$ are obtained from one another by the unitaries $A^\dagger$, $I-2\ketbra00$, $A$ and $U_F$.}
    Thus, since the {$\ket{v^\ell}$} are pairwise orthogonal and orthogonal to the first part of $\ket{v'}$:
    \begin{align*}
        \norm{\ket{v'}}^2&=\sum_{\ell=1}^\infty 5\norm{\ket{\pi_{\ell}}}^2 + \sum_{\ell=0}^\infty\norm{\ket{v^\ell}}^2
        {\leq} \sum_{\ell=1}^\infty 5\norm{\ket{\pi_{\ell}}}^2 + 2W(S_F^\dagger,{\cal K})\sum_{\ell=0}^\infty\norm{\ket{\pi_{\ell+1}}}^2\\
        &= (5+2W(S_F^\dagger,{\cal K}))\sum_{\ell=1}^\infty\norm{\ket{\pi_{\ell}}}^2.
    \end{align*}
    By \Cref{lem:S_AA-complexity}, this is $O((1+W(S_F^\dagger,{\cal K}))/\sqrt{\mu})$.
    To upper bound the size of the projection onto ${\cal L}^\bullet$, note that each $\ket{v^\ell}$ is supported on states with $\ket{0}$ or $\ket{4}$ in ${\cal T}$, so it is orthogonal to ${\cal L}^\bullet$. Thus,
        \begin{align*}
        \norm{\Pi_{{\cal L}^\bullet}\ket{v'}}^2{\leq} \sum_{\ell=1}^\infty 5\norm{\ket{\pi_{\ell}}}^2=O(1/\sqrt{\mu}),
    \end{align*}
which completes the proof.
\end{proof}

We next show that $\ket{v'}$ is indeed a catalyst for the desired initial state.

\begin{lemma}\label{lem:S_AA-canonical-action} Let $\ket{\lambda}:=\sum_{\ell=0}^\infty\lambda_\ell\ket{\ell}$, as in \Cref{lem:S_AA-action}. Then
    $$S_{\rm AA}(A)(\ket{0}_{\cal F}\ket{0}_{\cal C}\ket{0}_{\cal T}\ket{\pi}_{\cal V}+\ket{v'})=\ket{1}_{\cal F}\ket{\lambda}_{\cal C}\ket{0}_{\cal T}\ket{\pi_M}_{\cal V}+\ket{v'}.$$
    Thus, by \Cref{lem:S_AA-canonical-complexity},
    $$L(S_{\rm AA}(A),A,\ket0\ket0\ket0\ket\pi)=O(1/\sqrt\mu)
    \;\mbox{ and }\;
    W(S_{\rm AA}(A),\ket0\ket0\ket0\ket\pi) =O((1+W(S_F^\dagger,{\cal K}))/\sqrt\mu).$$
\end{lemma}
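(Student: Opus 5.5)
The plan is to mirror the proof of \Cref{lem:S_AA-action}: decompose $\ket{0}\ket{0}\ket{0}\ket{\pi}+\ket{v'}$ into a sum over $\ell\geq0$ of the exact input vectors appearing in \Cref{lem:G-catalyst}, apply that lemma term by term, and telescope. Concretely, the vector $\ket{0}_{\cal F}\ket{0}_{\cal C}\ket{0}_{\cal T}\ket{\pi}+\ket{v'}$ will be rewritten as
\begin{equation*}
\sum_{\ell=0}^\infty\Bigl(\ket{0}\ket{\ell}\ket{0}\ket{\pi_\ell}+\ket{0}\ket{\ell+1}\sum_{t=1}^4\ket{t}\ket{\pi_\ell^t}+\ket{v^\ell}\Bigr).
\end{equation*}
To justify this, the $t=5$ terms in \eqref{eq:v-prime} are identified with $\ket{0}_{\cal T}\ket{\pi_\ell}$ at counter value $\ell\geq1$. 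Together with the input term $\ket{0}\ket{0}\ket{0}\ket{\pi_0}$, these supply exactly the $\ket{0}\ket{\ell}\ket{0}\ket{\pi_\ell}$ pieces for all $\ell\geq0$. The $t\in\{1,\dots,4\}$ terms at counter value $\ell$ are reindexed to $\ket{\ell+1}$ with $\ket{\pi_{\ell}^t}$, and the $\ket{v^\ell}$ are already summed over $\ell\geq0$.

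Next I apply $S_{\rm AA}(A)$ by linearity, which requires convergence. \Cref{lem:S_AA-canonical-complexity} gives that $\ket{v'}$ has finite norm, and $S_{\rm AA}(A)$ is bounded, so it commutes with the norm-convergent sum; I will note that the partial sums converge in norm because the summands are pairwise orthogonal with summable squared norms. By \Cref{lem:G-catalyst}, the $\ell$-th summand maps to
\begin{equation*}
\ket{0}\ket{\ell+1}\sum_{t=1}^5\ket{t}\ket{\pi_\ell^t}+\lambda_\ell\ket{1}\ket{\ell}\ket{0}\ket{\pi_M}+\ket{v^\ell}.
\end{equation*}
Summing over $\ell$, the first terms (with $\ket{5}\ket{\pi_\ell^5}\equiv\ket{0}\ket{\pi_{\ell+1}}$) reassemble into exactly the first part of $\ket{v'}$ in \eqref{eq:v-prime}, after the shift $\ell+1\to\ell$. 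The $\ket{v^\ell}$ terms sum to the second part. The remaining terms give $\ket{1}_{\cal F}\ket{\lambda}_{\cal C}\ket{0}_{\cal T}\ket{\pi_M}$.

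For the complexity statements, I first check that $\ket{v'}$ lies in the private space ${\cal L}={\cal L}^\bullet\oplus{\cal L}^\circ$. The ${\cal V}$-components have ${\cal F}=0$ and either ${\cal T}\in\{1,2,3,4\}$, or ${\cal T}=0$ with $\ell\geq1$; the ${\cal L}_F$-components lie in ${\cal F}\otimes{\cal C}\otimes{\cal T}\otimes{\cal L}_F\subseteq{\cal L}^\circ$. Given that, the bounds on $L$ and $W$ follow immediately from the definitions and \Cref{lem:S_AA-canonical-complexity}.

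The main obstacle is bookkeeping rather than any real analytic difficulty. Specifically, the ${\cal C}$-indices must be matched correctly across the telescoping, since $t=0$ sits at index $\ell$ while $t\geq1$ sits at $\ell+1$, and the $\ket{5}\equiv\ket{0}$ identification must be handled carefully. The infinite-sum linearity also needs the convergence remark above.
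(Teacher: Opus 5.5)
Your proposal is correct and follows essentially the same route as the paper: regroup $\ket{0}\ket{0}\ket{0}\ket{\pi}+\ket{v'}$ into the $\ell$-indexed blocks of \Cref{lem:G-catalyst}, apply that lemma termwise (justified by norm convergence and boundedness of $S_{\rm AA}(A)$), and reassemble the outputs into $\ket{v'}$ plus $\ket{1}\ket{\lambda}\ket{0}\ket{\pi_M}$. You also explicitly check that $\ket{v'}$ lies in the private space and that the blocks are pairwise orthogonal; the paper leaves both of these implicit.
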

\begin{proof}
First note that since
$$\sum_{t=1}^5\ket{t}\ket{\pi_{\ell-1}^t}=\sum_{t=1}^4\ket{t}\ket{\pi_{\ell-1}^t}+\ket{0}\ket{\pi_\ell},$$
and $\ket{\pi}=\ket{\pi_0}$, we have:
\begin{align*}
    \ket{0}_{\cal F}\ket{0}_{\cal C}\ket{0}_{\cal T}\ket{\pi}_{\cal V}+\ket{v'}
    &=\ket{0}_{\cal F}\ket{0}_{\cal C}\ket{0}_{\cal T}\ket{\pi}_{\cal V}+\ket{0}_{\cal F}\sum_{\ell=1}^\infty\ket{\ell}_{\cal C}\sum_{t=1}^5\ket{t}_{\cal T}\ket{\pi_{\ell-1}^t}_{\cal V}+\sum_{\ell=0}^\infty\ket{v^\ell}_{{\cal FCTL}_F}\\
    &= \ket{0}\sum_{\ell=0}^\infty\ket{\ell}\ket{0}\ket{\pi_\ell}
    +\ket{0}\sum_{\ell=1}^\infty\ket{\ell}\sum_{t=1}^4\ket{t}\ket{\pi_{\ell-1}^t}+\sum_{\ell=0}^\infty\ket{v^\ell}\\
    &=\sum_{\ell=0}^\infty\left(\ket{0}\ket{\ell}\ket{0}\ket{\pi_\ell}+\ket{0}\ket{\ell+1}\sum_{t=1}^4\ket{t}\ket{\pi_\ell^t}+\ket{v^\ell}\right).
\end{align*}
Thus, by \Cref{lem:G-catalyst}, we have:
\begin{align*}
    S_{\rm AA}(A)\left(\ket{0}\ket{0}\ket{0}\ket{\pi}+\ket{v'}\right) &= \sum_{\ell=0}^\infty S_{\rm AA}(A)\left(\ket{0}\ket{\ell}\ket{0}\ket{\pi_\ell}+\ket{0}\ket{\ell+1}\sum_{t=1}^4\ket{t}\ket{\pi_\ell^t}+\ket{v^\ell}\right)\\
    &= \sum_{\ell=0}^\infty \left(\ket{0}\ket{\ell+1}\sum_{t=1}^5\ket{t}\ket{\pi_\ell^t}+\lambda_\ell\ket{1}\ket{\ell}\ket{0}\ket{\pi_M}+\ket{v^\ell}\right)\\
    &= \ket{1}\underbrace{\sum_{\ell=0}^\infty\lambda_\ell\ket{\ell}}_{=\ket{\lambda}}\ket{0}\ket{\pi_M}+\underbrace{\sum_{\ell=1}^\infty\ket{0}\ket{\ell}\sum_{t=1}^5\ket{t}\ket{{\pi_{\ell-1}^t}}+\sum_{\ell=0}^\infty\ket{v^\ell}}_{=\ket{v'}}.
\end{align*}
Note that exchanging $S_{\rm AA}(A)$ with the infinite sum is valid, since the sum converges, and $S_{\rm AA}(A)$ has bounded (unit) norm.
\end{proof}

We now show that we can truncate $S_{\rm AA}(A)$ to something that is finite and efficiently implementable, establishing \Cref{thm:aa-canonical}, which we restate here. 

\begin{corollary}\label{thm:aa-canonical-truncated}
For any integer $D\geq1$, let $S_{\rm AA}^{[D]}(A)$ be obtained from $S_{\rm AA}(A)$ in \Cref{fig:amp-amp-transducer-2} by replacing ${\cal C}$ with ${\cal C}^{[D]}$ and the incrementation in Step~4 by incrementation modulo $D$ (restricting the public and private spaces accordingly). Let $T<160/\sqrt\mu$ be as in the proof of \cref{lem:S_AA-complexity}. Then $S_{\rm AA}^{[D]}(A)$ is canonical with respect to the oracle that applies $A$ or $A^\dagger$, and:
\begin{enumerate}
    \item it transduces $\ket0_{\cal F}\ket0_{{\cal C}^{[D]}}\ket0_{\cal T}\ket\pi$ into a state $\ket{\tau}$ with $\norm{\ket\tau-\ket1_{\cal F}\ket{\lambda'}_{{\cal C}^{[D]}}\ket0_{\cal T}\ket{\pi_M}}\leq16T/D$, where $\ket{\lambda'}$ is as in \cref{thm:aa-truncated};
    \item $W(S_{\rm AA}^{[D]}(A),\ket0\ket0\ket0\ket\pi)=O((1+W(S_F^\dagger,{\cal K}))/\sqrt\mu)$ and $L(S_{\rm AA}^{[D]},A,\ket0\ket0\ket0\ket\pi)=O(1/\sqrt\mu)$;
    \item its work unitary uses one controlled call to each of $S_F$ and $S_F^\dagger$, $O(1)$ controlled calls to $R$, and $O(\log D+\log\dim{\cal V})$ additional one- and two-qubit gates.
\end{enumerate}
\end{corollary}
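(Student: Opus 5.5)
The proof mirrors that of \Cref{thm:aa-truncated}, now using the catalyst $\ket{v'}$ from \eqref{eq:v-prime} in place of $\ket{v}$. First I would check canonicity. The caption of \Cref{fig:amp-amp-transducer-2} already observes that Step~2 (the only step calling $A$ or $A^\dagger$) commutes with Step~1, so Step~2 can be moved to the front. It acts only on ${\cal F}=0$, ${\cal T}\in\{1,3\}$, with ${\cal V}$ in the last register. After truncation this is exactly ${\cal L}^\bullet$ restricted to $\ell\in\{0,\dots,D-1\}$, and it is disjoint from the public space. Truncating ${\cal C}$ to ${\cal C}^{[D]}$ changes none of this, so $S^{[D]}_{\rm AA}(A)=S^\circ O$ with $O$ acting trivially on ${\cal H}$ and ${\cal L}^\circ$.

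For the action, I would define the truncated catalyst
$$\ket{v'^{[D]}}\defeq\ket0_{\cal F}\sum_{\ell=1}^{D-1}\ket\ell\sum_{t=1}^{5}\ket t\ket{\pi^t_{\ell-1}}+\sum_{\ell=0}^{D-2}\ket{v^\ell}+\ket0\ket{D-1}\ket0\ket{v_{D-1}},$$
keeping only the $(\ell,0)$ half of $\ket{v^{D-1}}$, since the $(\ell+1,4)$ half would need counter value $D$. The terms $\ket{0}\ket{0}\ket{t}\ket{\pi^t_{D-1}}$ for $t=1,\dots,4$ then wrap around to counter $0$. I would include those as well, together with the corresponding catalyst piece $\ket0\ket0\ket4\ket{v'_{D-1}}$, because ${\cal T}\neq0$ at counter $0$ is still private space (by the definition of ${\cal L}^\circ$ and ${\cal L}^\bullet$ with $\ell\ge0$).

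Next I would apply \Cref{lem:G-catalyst} to each block $\ell\in\{0,\dots,D-1\}$, with the increment now taken mod $D$. This telescopes exactly as in \Cref{lem:S_AA-canonical-action}. The only leftover public term is the one where $t=4$ goes to $t=5\equiv0$ at counter $D\bmod D=0$, namely $\ket0\ket0\ket0\ket{\pi_D}$, which lies in the public space. The transduced state is therefore $\ket\tau=\ket1\ket{\lambda^{[D]}}\ket0\ket{\pi_M}+\ket0\ket0\ket0\ket{\pi_D}$. This is the same $\ket\tau$ as in \Cref{thm:aa-truncated}, apart from the ${\cal T}$ register, so the error bound $16T/D$ and the choice $\ket{\lambda'}=\ket{\lambda^{[D]}}/\sqrt{1-p}$ carry over verbatim via \eqref{eq:main-thing-to-prove}.

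The main obstacle is this wrap-around bookkeeping at $\ell=D-1$. The inputs at counter $0$ with ${\cal T}\neq0$ must be genuinely private, and the catalyst pieces must remain pairwise orthogonal. They do: distinct $(\ell\bmod D,t)$ labels keep the supports disjoint, since counter $0$ previously carried only $t=0$ data and now also carries $t\in\{1,\dots,4\}$.

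For the complexity bounds, the norm of $\ket{v'^{[D]}}$ is at most the norm computed in \Cref{lem:S_AA-canonical-complexity}, plus the wrapped terms. Those add at most $4\norm{\ket{\pi_D}}^2+W(S_F^\dagger,{\cal K})\norm{\ket{\pi_D}}^2\le 5+W(S_F^\dagger,{\cal K})$, and similarly only $O(1)$ is added to the ${\cal L}^\bullet$ projection. This gives both bounds in Item~2, since $1/\sqrt\mu\ge1$. For Item~3, the work unitary consists of Step~1 (implemented as in \Cref{thm:aa-truncated} with $O(\log D)$ gates plus an $R$-controlled leak, now also controlled on ${\cal T}=0$), the controlled $G_0=S_F^\dagger(R\oplus I)$, $G_2$ and $G_4=S_F$, and the increments of ${\cal C}$ mod $D$ and ${\cal T}$ mod $5$. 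That uses one controlled call to each of $S_F$ and $S_F^\dagger$, $O(1)$ calls to $R$, and $O(\log D+\log\dim{\cal V})$ further gates.
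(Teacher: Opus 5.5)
Your proposal is correct and follows the paper's proof. Once you add the wrapped pieces $\ket0\ket0\sum_{t=1}^4\ket t\ket{\pi^t_{D-1}}$ and $\ket0\ket0\ket4\ket{v'_{D-1}}$, your catalyst is exactly the paper's $\ket{v'^{[D]}}$, and the leftover public term $\ket0\ket0\ket0\ket{\pi_D}$ and the $16T/D$ error bound are handled identically; the only cosmetic differences are that the paper bounds $\norm{\ket{v'^{[D]}}}\le\norm{\ket{v'}}$ directly by viewing the wrapped terms as relabelings of the counter-$D$ terms of $\ket{v'}$ (you add them separately at a harmless extra $O(1+W(S_F^\dagger,{\cal K}))$), and you spell out the canonicity check that the paper leaves to the figure caption.
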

\begin{proof}
Item~3 follows as in the proof of \cref{thm:aa-truncated}: Step~1 implements $W_{\rm leak}(a_\ell)$ in $O(\log D)$ gates, with the control on being in the marked space computed using calls to $R$;
Step~2 uses controlled calls to $A$ and $A^\dagger$ (one each) with $O(1)$ cost to compute the control bits;
Steps~4 and~5 are increments modulo $D$ and~$5$ with $O(1)$ extra controls; and for Step~3: $G_0$ and $G_4$ use calls to $R$, $S_F$, $S_F^\dagger$ (one each), and $G_2=(1-2\ket{0}\bra{0})_{\cal V}\oplus I_{{\cal L}_F}$ costs $O(\log\dim{\cal V})$ gates. 

For the catalyst, truncate $\ket{v'}$ from \Cref{eq:v-prime} to the terms with $\ell\leq D-1$, reading the counter value $\ell+1$ modulo $D$:
$$\ket{v'^{[D]}}\defeq\sum_{\ell=0}^{D-1}\left(\ket{0}_{\cal F}\ket{\ell+1\bmod D}\sum_{t=1}^4\ket{t}_{\cal T}\ket{\pi_\ell^t}+\ket{v^\ell}\right)+\sum_{\ell=1}^{D-1}\ket{0}_{\cal F}\ket{\ell}\ket{0}_{\cal T}\ket{\pi_\ell},$$
where in $\ket{v^{D-1}}=\ket{0}\ket{D-1}\ket{0}\ket{v_{D-1}}+\ket{0}\ket{D}\ket{4}\ket{v_{D-1}'}$ (see \Cref{eq:v-sup-ell}) the counter value $D$ is also read as $0$. All of these terms lie in the private space: the only terms with ${\cal C}^{[D]}=0$ come from $\ell=D-1$ and have ${\cal T}\neq0$. The terms are also pairwise orthogonal, as in \cref{lem:G-catalyst}. The proof of \cref{lem:G-catalyst} is local in $\ell$, so it applies verbatim for each $\ell\leq D-1$. Summing as in the proof of \cref{lem:S_AA-canonical-action}, the only difference is the wrapped-around term $\ket0_{\cal F}\ket0_{\cal C}\ket0_{\cal T}\ket{\pi_D}$, which lies in the public space. So $S_{\rm AA}^{[D]}(A)$ transduces the input into $\ket\tau\defeq\ket1\ket{\lambda^{[D]}}\ket0\ket{\pi_M}+\ket0\ket0\ket0\ket{\pi_D}$, where $\ket{\lambda^{[D]}}$ is as in the proof of \Cref{thm:aa-truncated}, and Item~1 follows exactly as in the proof of \Cref{thm:aa-truncated}. 

Finally, since the terms of $\ket{v'^{[D]}}$ are (relabelings of) a subset of the pairwise orthogonal terms of $\ket{v'}$, we have $\norm{\ket{v'^{[D]}}}\leq\norm{\ket{v'}}$ and $\norm{\Pi_{{\cal L}^\bullet}\ket{v'^{[D]}}}\leq\norm{\Pi_{{\cal L}^\bullet}\ket{v'}}$, so Item~2 follows from \cref{lem:S_AA-canonical-complexity}.
\end{proof}

\subsection{An optimal transducer and algorithm for ground-state preparation}

We now have all the ingredients to define a transducer for ground-state preparation, from which we obtain an algorithm. Throughout this section, we assume $\gamma\leq \frac{1}{2}$, because otherwise the problem is easy.
It is relatively straightforward to construct a transducer that uses infinite space, but has the correct transduction complexity, and gives a query-optimal algorithm: we just compose the error-free transducers for filtering and amplitude amplification.
However, we want to turn this into an efficient algorithm, with small overhead, so we work with the truncated transducers and keep an error budget.

Let $S_F$ be the LCU transducer from \Cref{thm:finite-lcu-transducer}, with an error parameter $\eta_2>0$ that we fix in the proof of \cref{lem:composed-transducer} below. Denote by $U_F$ its transduction action, which acts as
$$U_F\ket{\psi_k}\ket{0}=\tilde f_\delta(E_k)\ket{\psi_k}\ket{0}+\ket{\psi_k^\bot}$$
for some $\ket{\psi_k^\bot}$ orthogonal to $\ket{0}$ in the second register.
Here and below, the last three registers are the system register $\C^N$ and the registers $\C^{2J+1}$ and $\C^J$ of $S_F$, so that $U_F$ acts on $\C^N\ot\C^{2J+1}$, and ${\cal V}=\C^N\ot\C^{2J+1}\ot\mathrm{span}\{\ket{0}\}$ plays the role of the public space of $S_F$ in \cref{thm:aa-canonical}.
When we plug the LCU transducer of \cref{thm:finite-lcu-transducer} into \cref{thm:aa-canonical}, 
we get the following result.

\begin{lemma}\label{lem:composed-transducer}
Fix $\eta\in (0,1)$. Let $(U,A)$ represent an instance of guided ground-state preparation with estimate.
    There is a transducer $S(A)$, canonical in the oracle that applies $A$ or $A^\dagger$, with transduction action
    \begin{align*}
        \ket{\phi_{\rm init}}\defeq\ket{0}\ket{0}\ket{0}U_F(A\ket{0}\ket{0})\ket{0}\overset{S(A)}{\rightsquigarrow} \ket{\tau} \qquad \text{ where } \quad \norm{\ket{\tau} - \ket{1}\ket{\lambda}\ket{0}\ket{\psi_0}\ket{0}\ket{0}} \leq \eta,
    \end{align*}
    where $\ket{\lambda}$ is some unit vector, such that
    \begin{enumerate}
        \item\label{it:W} The transduction complexity is $W(S(A),\ket{\phi_{\rm init}})=O(1/(\Delta\gamma))$.
        \item\label{it:LV} The Las Vegas query complexity is $L(S(A),A,\ket{\phi_{\rm init}})=O(1/\gamma)$.
        \item\label{it:call-U} The work unitary $S^\circ$ can be implemented using $O(1)$ controlled calls to each of $U$ and $U^\dagger$.
        \item\label{it:gates} Additionally, one application of the work unitary uses $O\left(\log N + (\log \frac{1}{\delta \eta \gamma})^2\right)$ one- and two-qubit gates and $O(\log \frac{1}{\delta \eta \gamma})$ additional qubits.
    \end{enumerate}
\end{lemma}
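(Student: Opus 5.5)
We will instantiate \Cref{thm:aa-canonical} (in its restated form \Cref{thm:aa-canonical-truncated}) with $S_F$ the truncated LCU transducer of \Cref{thm:finite-lcu-transducer}, taking the filter error parameter $\eta_2$ to be a small multiple of $\eta\gamma$. The reflection $R$ will be $2\Pi_M-I$ with $\Pi_M$ the projector onto the good subspace $\C^N\ot\ket{0}_{\C^{2J+1}}\ot\ket0_{\C^J}$; it costs $O(\log J)$ gates and no calls to $U$. We set $\ket{\pi}=U_F A\ket0\ket0$. Writing $A\ket0=\sum_k\alpha_k\ket{\psi_k}$ with the workspace of $A$ absorbed, we have $\Pi_M\ket\pi=\sum_k\alpha_k\tilde f_\delta(E_k)\ket{\psi_k}\ket0\ket0$.

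\textbf{Bounding $\mu$ and the output.} After shifting so that $|E_0|\le\delta$, we have $f_\delta(E_0)\ge 1/\pi$ and $f_\delta(E_k)=0$ for $k\ge1$, because $E_k\ge2\delta$ and the other eigenvalues are bounded away modulo $2\pi$. Hence $\|\tilde f_\delta(E_k)\|\le\eta_2$ for $k\ge1$, and therefore $\sqrt\mu\ge\gamma(1/\pi-\eta_2)=\Omega(\gamma)$. This gives $\|\ket{\pi_M}-\ket{\psi_0}\ket0\ket0\|=O(\eta_2/\gamma)\le\eta/2$. If $\mu>1/2$, we first dampen the amplitude with an extra qubit rotation to bring it into $(0,1/2]$; this changes $\mu$ only by a constant factor. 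We then choose $D=\Theta(1/(\eta\sqrt\mu))$, or more precisely $\Theta(1/(\eta\gamma))$ since only the lower bound on $\sqrt\mu$ is known. Item~1 of \Cref{thm:aa-canonical-truncated} then contributes error $\le\eta/2$, and the triangle inequality gives the total error $\eta$.

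\textbf{Complexities.} By \Cref{thm:aa-canonical-truncated}, $W=O((1+W(S_F^\dagger,{\cal K}))/\sqrt\mu)$ and $L=O(1/\sqrt\mu)=O(1/\gamma)$, which is Item~\ref{it:LV}. For Item~\ref{it:W} we need $W(S_F^\dagger,{\cal K})=O(1/\delta)$. \Cref{thm:finite-lcu-transducer} gives $W(S_F^\dagger,\ket\psi\ket0\ket0)=O(\delta^{-1})$ for inputs of the form $\ket\psi\ket0\ket0$, but ${\cal K}$ also contains $\ket{\pi_{\overline M}}$, which has support outside $\ket0$ in the second register.

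\textbf{Main obstacle: the catalyst for $\ket{\pi_{\overline M}}$.} This is where I expect the main difficulty. My approach is to use $S_F^\dagger(U_F\ket\xi+\ket v)=\ket\xi+\ket v$. The components of ${\cal K}$ are $\ket{\pi}=U_F(A\ket0\ket0)$ and $\ket{\pi_M}$. The first has input $A\ket0\ket0$, which is of the good form, so its catalyst costs $O(1/\delta)$. For $\ket{\pi_M}\propto\tilde f_\delta(H)\ket\psi\ket0\ket0$, which is itself of the good form $\ket{\psi'}\ket0\ket0$, I will use \Cref{thm:finite-lcu-transducer}'s bound on $S_F^\dagger$ with input $\ket{\psi'}\ket0\ket0$ directly. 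Writing any unit $\ket y=x\ket{\pi}+z\ket{\pi_M}$, catalysts combine linearly, and the coefficients $x,z$ are $O(1)$ because $\mu\le1/2$ keeps the two vectors well separated. Combining these, the catalyst size is $O(1/\delta)$, and hence $W=O(1/(\gamma\delta))=O(1/(\gamma\Delta))$.

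\textbf{Remaining items.} Item~\ref{it:call-U} holds because the work unitary calls $S_F$ and $S_F^\dagger$ once each, and each of these uses one controlled $U$ and one controlled $U^\dagger$. For Item~\ref{it:gates}, the gate count comes from adding the $O((\log J)^2)$ cost of $S_F$ to $O(\log D+\log\dim{\cal V})$, where $\log\dim{\cal V}=O(\log N+\log J)$. With $J=O(\delta^{-3/2}\eta_2^{-1})$ we get $\log J,\log D=O(\log\frac1{\delta\eta\gamma})$. The qubit count is bounded in the same way.
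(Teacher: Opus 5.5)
Your proposal is correct and follows essentially the same route as the paper. You instantiate the truncated canonical amplitude-amplification transducer with the truncated LCU filter at $\eta_2=\Theta(\eta\gamma)$. You then bound $W(S_F^\dagger,{\cal K})=O(1/\delta)$ by writing $\ket y=c_1\ket\pi+c_2\ket{\pi_M}$, reusing the $S_F$-catalyst of $A\ket0\ket0\ket0$ as an $S_F^\dagger$-catalyst for $\ket\pi$, and applying the $S_F^\dagger$ bound to $\ket{\pi_M}$.

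The differences are minor. The paper re-derives that $S_F^\dagger$ bound via the conjugation $S_F(U)^\dagger=X\,S_F(U^\dagger)\,X^\dagger$ rather than citing \Cref{thm:finite-lcu-transducer} directly. You are more explicit than the paper that $c_1,c_2=O(1)$ and that $\mu\le 1/2$ is needed. Since $\mu$ is unknown, the halving of the marked weight should be applied unconditionally, not only ``if $\mu>1/2$''; this adds an extra qubit, so $\ket{\phi_{\rm init}}$ differs slightly from the form in the statement.
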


\begin{proof}
Recall that we assume we have an estimate $\tilde{E}_0$ of $E_0$ to precision $\delta = \Delta/3$, which implies there is a gap of at least $2\delta$ between $\tilde{E}_0$ and $E_1$.
    By replacing $U$ with $e^{-i\tilde{E}_0}U$ -- equivalently, $H$ with $H-\tilde{E}_0I$ -- we may assume without loss of generality that $|E_0| \leq \delta$, so for the ideal filter we have $f_\delta(E_0) \geq \frac{1}{\pi}$, and we have $E_k \geq 2\delta$ (and, modulo $2\pi$, $E_k\leq2\pi-2\delta$, by our assumption $E_k\in (0,\pi]$ for $k\geq 1$) for $k \geq 1$, and hence $f_\delta(E_k) = 0$ (see \Cref{eq:f-delta}).
    Let $\ket\psi\defeq A\ket0=\sum_k\alpha_k\ket{\psi_k}$.
    We now consider the transducer $S(A)$ from \cref{thm:aa-canonical} with error parameter $\eta_1 = \eta/2$, calling $S_F$ from \cref{thm:finite-lcu-transducer} with error parameter $\eta_2 = \eta \gamma / (4\pi)$, and with marked subspace consisting of the states with $\ket0$ in the second register.
    Then we have:
    $$\ket{\pi}=U_F(A\ket{0}\ket{0})\ket{0} = \bigl(\tilde f_\delta(H)\ket{\psi}\ket{0}+\ket{\psi^\bot}\bigr)\ket{0}
    \quad\mbox{and}\quad
    \ket{\pi_M}=\frac{{\tilde f}_\delta(H)\ket{\psi}}{\norm{{\tilde f}_\delta(H)\ket{\psi}}}\ket{0}\ket{0}.$$

    Denote by {$\mu = \norm{f_\delta(H) \ket{\psi}}^2$} and {$\tilde \mu = \norm{\tilde f_\delta(H) \ket{\psi}}^2$} the {weights} of the marked states for the ideal and truncated filters.
    Then:
    \begin{align*}
        \abs{\sqrt\mu - \sqrt{\tilde \mu}} \leq \norm{\tilde f_\delta(H) \ket{\psi} - f_\delta(H) \ket{\psi}} \leq \eta_2.
    \end{align*}
    In particular, the marked {weight} is {$\Omega(\gamma^2)$}: $\sqrt{\tilde\mu}\geq\sqrt\mu-\eta_2\geq\frac{\gamma}{\pi}-\frac{\eta\gamma}{4\pi}\geq\frac{3\gamma}{4\pi}$, using  {$\sqrt\mu$}$=\alpha_0f_\delta(E_0)\geq\gamma/\pi$.

    In order to apply \Cref{thm:aa-canonical}, we need an upper bound on $W(S_F^\dagger,{\cal K})$, the supremum of $W(S_F^\dagger,\ket{\varphi})$ over all unit vectors $\ket{\varphi}\in\mathrm{span}\{\ket{\pi_M},\ket{\pi_{\overline{M}}}\}$.
    \begin{claim}
        $W(S_F^\dagger,{\cal K})=O(1/\delta)$.
    \end{claim}
\begin{proof}
Let $X$ be the unitary that maps $\ket{j}\ket{k}\mapsto\ket{j}\ket{-k\bmod\abs{j}}$ on the last two registers for $0\leq k<\abs j$, and acts as the identity for $k\geq\abs j$ (in particular for $j=0$). Then $X$ acts as the identity on the public space and preserves the private space, commutes with $P$, and satisfies $X\,\mathsf{Sel}(U^\dagger)\,X^\dagger=\mathsf{Sel}(U)^\dagger$. Hence $S_F(U)^\dagger=X\,S_F(U^\dagger)\,X^\dagger$. So if $\ket v$ is a catalyst for a public vector $\ket\xi$ in $S_F(U^\dagger)$, then $X\ket v$ is a catalyst for $\ket\xi$ in $S_F(U)^\dagger$, of the same norm. Therefore applying \Cref{thm:finite-lcu-transducer} to $U^\dagger=e^{-iH}$ gives $W(S_F^\dagger,\ket{\varphi}\ket0\ket0)=O(1/\delta)$ for every unit $\ket{\varphi}\in\C^N$. In particular, this holds for $\ket{\pi_M}=\frac{\tilde f_\delta(H)\ket\psi}{\sqrt{\tilde\mu}}\ket0\ket0$. Moreover, if $\ket v$ is a catalyst for $A\ket0\ket0\ket0$ in $S_F$, which exists with $\norm{\ket v}^2=O(1/\delta)$ by \Cref{thm:finite-lcu-transducer}, then $S_F^\dagger(\ket\pi+\ket v)=A\ket0\ket0\ket0+\ket v$, so $\ket v$ is a catalyst for $\ket\pi$ in $S_F^\dagger$, and $W(S_F^\dagger,\ket\pi)=O(1/\delta)$. Now write a unit vector $\ket y\in{\cal K}$ as $\ket y=c_1\ket\pi+c_2\ket{\pi_M}$. Adding the corresponding catalysts gives $W(S_F^\dagger,\ket y)\leq2(\abs{c_1}^2+\abs{c_2}^2)\,O(1/\delta)=O(1/\delta)$.
\end{proof}

    We then conclude from \Cref{thm:aa-canonical} that $S(A)$ satisfies the claims in \Cref{it:W}, \Cref{it:LV}, and \Cref{it:call-U}: indeed, $W(S(A),\ket{\phi_{\rm init}})=O((1+W(S_F^\dagger,{\cal K}))/\sqrt{\mu})=O(1/(\delta\gamma))$, and $L(S(A),A,\ket{\phi_{\rm init}})=O(1/\sqrt{\mu})=O(1/\gamma)$.
    We need to verify that $\ket{\tau}$ is close to the ground state with garbage, as claimed.
    \cref{thm:aa-canonical} yields that there exists a unit vector $\ket{\lambda}$ such that
    \begin{align*}
        \Bigl\| \ket{\tau} - \tilde \mu^{-1/2} \ket{1}\ket{\lambda}\ket{0}\tilde f_\delta(H) \ket{\psi}\ket{0}\ket{0}\Bigr\| \leq \eta_1 = \frac{\eta}{2}.
    \end{align*}
    The claim follows once we prove that the normalized state $\frac{\tilde f_\delta(H) \ket{\psi}}{\sqrt{\tilde \mu}}$ is $\eta/2$-close in norm to $\ket{\psi_0}$.
    To this end we first recall that $\ket{\psi_0} = \frac{f_\delta(H) \ket{\psi}}{\sqrt\mu}$, since $f_\delta(E_k)=0$ for $k\geq1$ and $\alpha_0f_\delta(E_0)>0$. We then find
    \begin{align*}
        \Bigl\|\frac{f_\delta(H) \ket{\psi}}{\sqrt\mu} - \frac{\tilde f_\delta(H) \ket{\psi}}{\sqrt{\tilde \mu}}\Bigr\| \leq \frac{1}{\sqrt\mu} \norm{f_\delta(H) \ket{\psi} - \tilde f_\delta(H) \ket{\psi}} + \frac{1}{\sqrt\mu} \abs{\sqrt\mu - \sqrt{\tilde \mu}} \leq \frac{2\pi\eta_2}{\gamma} \leq \eta/2.
    \end{align*}
    
    Finally, we need to account for the gate and space overheads in \Cref{it:gates}.
    Note that with our choice of $\eta_1$ and $\eta_2$, in \Cref{thm:finite-lcu-transducer} we have $J = O(\delta^{-3/2}\eta^{-1} \gamma^{-1})$, and in \Cref{thm:aa-canonical} this yields $\log\dim \mathcal V = O(\log N + \log J)$.
    The work unitary can be implemented using $O(1)$ calls to $S_F$ and $S_F^\dagger$, which in turn need a single controlled call to each of $U$ and $U^\dagger$.
    Additionally, $S_F$ needs $O((\log J)^2) = O((\log \frac{1}{\delta \eta \gamma})^2)$ one- and two-qubit gates.
    From \cref{thm:aa-canonical} the work unitary needs a further $O(\log \frac{1}{\eta \gamma} + \log \dim \mathcal V) = O(\log \frac{N}{\eta \gamma} + \log \frac{1}{\delta \eta \gamma})$ one- and two-qubit gates (this includes implementing the reflection $R=2\Pi_M-I$, which acts on $O(\log J)$ qubits), so in total we need $O(\log \frac{N}{\eta \gamma} + (\log \frac{1}{\delta \eta \gamma})^2)$ one- and two-qubit gates.
    The space overhead is $O(\log J + \log D) = O(\log \frac{1}{\delta \eta \gamma})$, using $D=O(1/(\eta_1\sqrt{\mu}))=O(1/(\eta\gamma))$ from \cref{thm:aa-canonical}.
\end{proof}

We can now apply \Cref{thm:transducer-to-alg} to prove our main claim.

\begin{theorem}\label{thm:gsp-constant}
Let $(U,A)$ be an instance of guided ground-state preparation with estimate, with parameters $\Delta$ and $\gamma$. There is a quantum algorithm that outputs a state $\ket{\tilde\tau}$ such that
$$\norm{\ket{\tilde\tau}-\ket{\psi_0}\ket{g}}\leq\frac13$$
for some unit vector $\ket{g}$ on the auxiliary registers (where we order the registers so that the system register $\C^N$ comes first), and that uses
\begin{enumerate}
    \item $O(1/\gamma)$ controlled calls to $A$ and $A^\dagger$;
    \item $O(1/(\gamma\Delta))$ controlled calls to $U$ and $U^\dagger$;
    \item $O\!\left(\frac{1}{\gamma\Delta}\left(\log N+\log^2\frac{1}{\gamma\Delta}\right)\right)$ additional one- and two-qubit gates;
    \item $O(\log\frac{1}{\gamma\Delta})$ auxiliary qubits, in addition to the $\ceil{\log N}$ system qubits.
\end{enumerate}
\end{theorem}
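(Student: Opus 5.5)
The plan is to combine \Cref{lem:composed-transducer} with \Cref{thm:transducer-to-alg}, plus a short preprocessing step that prepares the initial state $\ket{\phi_{\rm init}}$.

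\textbf{Step 1: fix the error budget.} Fix $\eta=1/9$ in \Cref{lem:composed-transducer}. This gives a canonical transducer $S(A)=S^\circ O_A$ with
\[
W(S(A),\ket{\phi_{\rm init}})\leq W=O(1/(\gamma\Delta)), \qquad L(S(A),A,\ket{\phi_{\rm init}})\leq L=O(1/\gamma).
\]
Here $O_A$ is the oracle that applies $A$ or $A^\dagger$. Its transduction action maps $\ket{\phi_{\rm init}}$ to a state $\ket\tau$ that is $\eta$-close to $\ket1\ket\lambda\ket0\ket{\psi_0}\ket0\ket0$. Apply \Cref{thm:transducer-to-alg} with a constant error parameter $\eps'=1/9$. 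The resulting algorithm, on input $\ket{\phi_{\rm init}}$, outputs $\ket{\tilde\tau}$ with $\norm{\ket{\tilde\tau}-\ket\tau}\leq 1/9$. It uses
\begin{itemize}
\item $O(L)=O(1/\gamma)$ controlled calls to $O_A$, hence to $A$ and $A^\dagger$;
\item $O(1+W)=O(1/(\gamma\Delta))$ controlled calls to $S^\circ$.
\end{itemize}

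\textbf{Step 2: prepare $\ket{\phi_{\rm init}}$.} We need $\ket{\phi_{\rm init}}=\ket0\ket0\ket0\,U_F(A\ket0\ket0)\ket0$ as input. Approximate it by one call to $A$, followed by the finite-time algorithm from \Cref{thm:transducer-to-alg} applied to $S_F$ alone. By \Cref{thm:finite-lcu-transducer}, $W(S_F,A\ket0\ket0\ket0)=O(1/\delta)$. So constant error $1/9$ costs $O(1/\Delta)$ calls to $U$, $U^\dagger$, which is dominated by the main term.

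Since the algorithm of \Cref{thm:transducer-to-alg} is unitary, an input error of at most $1/9$ propagates to the output without amplification. The total error is then
\[
\tfrac19+\tfrac19+\eta\leq\tfrac13,
\]
where the first term is the preparation error, the second is the transducer-to-algorithm error, and the third is the $\eta$ from \Cref{lem:composed-transducer}. Reordering registers so the system comes first gives the form $\ket{\psi_0}\ket g$, with $\ket g$ absorbing $\ket1\ket\lambda\ket0\ket0\ket0$ and the algorithm's workspace.

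\textbf{Step 3: count resources.} By \Cref{it:call-U}, each call to $S^\circ$ uses $O(1)$ controlled calls to $U$, $U^\dagger$, giving $O(1/(\gamma\Delta))$ in total. For gates, each $S^\circ$ costs $O(\log N+\log^2\frac1{\delta\eta\gamma})=O(\log N+\log^2\frac1{\gamma\Delta})$, since $\eta$ is constant and $\delta=\Delta/3$. Multiplying by $O(1/(\gamma\Delta))$ and adding the $O(1+W)$ extra gates from \Cref{thm:transducer-to-alg} gives the stated bound.

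\textbf{Main obstacle: auxiliary space.} \Cref{lem:composed-transducer} gives $O(\log\frac1{\gamma\Delta})$ extra qubits. However, \Cref{thm:transducer-to-alg} as usually realized, via phase estimation on $S$ or the Belovs construction, may need its own counter of $O(\log(W/\eps'^2))=O(\log\frac1{\gamma\Delta})$ qubits. It may also need the public/private space structure, which is already encoded in the registers. I would check that the construction of \cite[Theorem~3.3]{belovs2024taming} only needs a clock register of logarithmic size. The oracle $A$ also acts on its own workspace: I assume it fits within $\C^N$ plus the $\log N$ budget, or I count its workspace as part of ${\cal V}$. Together these keep the total auxiliary space at $O(\log\frac1{\gamma\Delta})$.
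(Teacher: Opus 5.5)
Your proposal is correct and follows the paper's proof essentially step for step: a first phase that prepares $\ket{\phi_{\rm init}}$ by running \Cref{thm:transducer-to-alg} on the same truncated $S_F$ with $W=O(1/\delta)$, a second phase that runs \Cref{thm:transducer-to-alg} on the canonical transducer $S(A)$ of \Cref{lem:composed-transducer} with $W=O(1/(\gamma\Delta))$ and $L=O(1/\gamma)$, and the error split $\frac19+\frac19+\eta\leq\frac13$ via unitarity and the triangle inequality. The space issue you flag is settled in the paper just by asserting that the algorithm of \Cref{thm:transducer-to-alg} needs $O(\log K)$ qubits for $K=O(1/(\gamma\Delta))$ applications. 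The first-phase gate cost $O(\frac1\Delta\log^2\frac1{\gamma\Delta})$, which you did not write out, is dominated by the second phase.
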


\begin{proof}
Fix constants $\eps_0=\eps_1=\eta=\frac19$, and let $\delta=\Delta/3$. Let $S_F$ be the truncated LCU transducer from \Cref{thm:finite-lcu-transducer}, with the parameters chosen in the proof of \cref{lem:composed-transducer}, and let $U_F$ be its transduction action. With these parameters, $J=O(\delta^{-3/2}\gamma^{-1})$ and $D=O(1/\gamma)$, so $\log J,\log D=O(\log\frac{1}{\gamma\Delta})$. The algorithm consists of two phases. We suppress the auxiliary registers of the algorithms from \Cref{thm:transducer-to-alg}; each of these is returned to $\ket{0}$ up to the stated error.

\paragraph{Phase 1: initial state preparation.} Prepare $A\ket{0}\ket{0}\ket{0}$, where the last two registers are the LCU registers $\C^{2J+1}\otimes\C^J$ of $S_F$. Let $V_1$ be the algorithm obtained from applying \Cref{thm:transducer-to-alg} to the transducer $S_F$. Since $S_F$ makes no oracle calls, we may regard it as canonical with a trivial oracle, and take $L=0$. By \Cref{thm:finite-lcu-transducer}, $W(S_F,A\ket0\ket0\ket0)=O(1/\delta)$, so we may take $W=O(1/\delta)$. Then
$$\norm{V_1A\ket{0}\ket{0}\ket{0}-U_F(A\ket{0}\ket0)\ket0}\leq\eps_0.$$
By using the same truncated filter $S_F$ as the one composed into $S(A)$ in \cref{lem:composed-transducer}, the ideal output of this phase is exactly the input $\ket{\phi_{\rm init}}$ of Phase~2 (after appending the registers $\ket0_{\cal F}\ket0_{{\cal C}^{[D]}}\ket0_{\cal T}$).

\paragraph{Phase 2: amplitude amplification.} Let $V_2$ be the algorithm obtained from applying \Cref{thm:transducer-to-alg} to the canonical transducer $S(A)$ of \cref{lem:composed-transducer}, with $W=O(1/(\gamma\Delta))$ and $L=O(1/\gamma)$. Both bounds are witnessed by a single catalyst (the truncation of $\ket{v'}$ from \Cref{eq:v-prime}), as \Cref{thm:transducer-to-alg} requires. Since $\ket{\phi_{\rm init}}$ lies in the public space of $S(A)$, we get
$$\norm{V_2\ket{\phi_{\rm init}}-\ket\tau}\leq\eps_1,$$
where $\ket\tau$ is the transduction action from \cref{lem:composed-transducer}.

\paragraph{Correctness.} Let $\ket{\tilde\tau}\defeq V_2V_1A\ket0\ket0\ket0$ (with ${\cal F},{\cal C}^{[D]},{\cal T}$ initialised to $\ket0$), and let $\ket{g}$ collect the auxiliary part of $\ket1\ket\lambda\ket0\ket{\psi_0}\ket0$. Since $V_2$ is unitary, the triangle inequality and \cref{lem:composed-transducer} give
\begin{align*}
\norm{\ket{\tilde\tau}-\ket{\psi_0}\ket{g}}
&\leq\norm{V_2V_1A\ket0\ket0\ket0-V_2\ket{\phi_{\rm init}}}+\norm{V_2\ket{\phi_{\rm init}}-\ket\tau}+\norm{\ket\tau-\ket{\psi_0}\ket g}\\
&\leq\eps_0+\eps_1+\eta\leq\frac13.
\end{align*}

\paragraph{Complexity.} Throughout, a controlled call to a work unitary that itself makes controlled calls to $U$ or $A$ costs a controlled call to $U$ or $A$ plus $O(1)$ gates and one auxiliary qubit.
\begin{itemize}
    \item \emph{Calls to $A$ and $A^\dagger$.} Phase~1 uses one call to $A$. Phase~2 uses $O(L/\eps_1^2)=O(1/\gamma)$ controlled calls to $A$ and $A^\dagger$, so the total number of calls is $O(1/\gamma)$.
    \item \emph{Calls to $U$ and $U^\dagger$.} Phase~1 uses $O(1+W/\eps_0^2)=O(1/\delta)$ controlled calls to $S_F$, each of which makes one controlled call to each of $U$ and $U^\dagger$. Phase~2 uses $O(1+W/\eps_1^2)=O(1/(\gamma\delta))$ controlled calls to the work unitary $S^\circ$, each of which makes $O(1)$ controlled calls to $U$ and $U^\dagger$ by \cref{lem:composed-transducer}, Item~\ref{it:call-U}. The total is $O(1/(\gamma\delta))=O(1/(\gamma\Delta))$.
    \item \emph{Other gates.} By \Cref{thm:finite-lcu-transducer}, each call to $S_F$ in Phase~1 costs $O(\log^2 J)=O(\log^2\frac{1}{\gamma\Delta})$ gates, and \Cref{thm:transducer-to-alg} adds $O(1/\delta)$ gates. This gives $O(\frac1\Delta\log^2\frac{1}{\gamma\Delta})$ gates in total. By \cref{lem:composed-transducer}, Item~\ref{it:gates}, each call to $S^\circ$ in Phase~2 costs $O(\log N+\log^2\frac{1}{\gamma\Delta})$ gates, and \Cref{thm:transducer-to-alg} adds $O(1/(\gamma\delta))$ gates. This gives $O(\frac{1}{\gamma\Delta}(\log N+\log^2\frac{1}{\gamma\Delta}))$ gates, which dominates.
    \item \emph{Space.} The registers $\C^{2J+1}\otimes\C^J$, ${\cal F}$, ${\cal C}^{[D]}$ and ${\cal T}$ use $O(\log J+\log D)=O(\log\frac{1}{\gamma\Delta})$ qubits. By \Cref{thm:finite-lcu-transducer}, $S_F$ uses $O(\log J)$ further auxiliary qubits. The algorithms of \Cref{thm:transducer-to-alg} use $O(\log K)$ qubits for $K=O(1/(\gamma\Delta))$ applications.
\end{itemize}
This proves the theorem.
\end{proof}

\begin{remark}
To obtain error $\eps$, we can apply the algorithm of \Cref{sec:error-reduction} using the algorithm of \cref{thm:gsp-constant} as a subroutine. This repeats the subroutine $O(1)$ times, combined with $O(1)$ rounds of $\eps$-error phase estimation to precision $\delta$.
\end{remark}

\bibliographystyle{alpha}

\bibliography{bibo}

\newcommand{\etalchar}[1]{$^{#1}$}
\begin{thebibliography}{HMdW03}

\bibitem[ARZ26]{apers2026elfs}
Simon Apers, J{\'e}r{\'e}mie Roland, and Yuxin Zhang.
\newblock Elfs, transducers and quantum walks, 2026.
\newblock arXiv:2605.30013.

\bibitem[BJ26]{belovs2024purifier}
Aleksandrs Belovs and Stacey Jeffery.
\newblock Space-efficient quantum error reduction without log factors.
\newblock {\em Quantum}, 10:2039, 2026.

\bibitem[BJY24]{belovs2024taming}
Aleksandrs Belovs, Stacey Jeffery, and Duyal Yolcu.
\newblock Taming quantum time complexity.
\newblock {\em Quantum}, 8:1444, 2024.

\bibitem[BY23]{belovs2023LasVegas}
Aleksandrs Belovs and Duyal Yolcu.
\newblock One-way ticket to {Las Vegas} and the quantum adversary.
\newblock arXiv:2301.02003, 2023.

\bibitem[CG{\etalchar{+}}26]{chen2026optimal}
Boyang Chen, Minbo Gao, , Xinzhao Wang, and Shuo Zhou.
\newblock Optimal query complexity for ground-state preparation, 2026.
\newblock arXiv:2609.35668.

\bibitem[HMdW03]{hmw:berrorsearch}
Peter H{\o}yer, Michele Mosca, and Ronald de~Wolf.
\newblock Quantum search on bounded-error inputs.
\newblock In {\em Proceedings of 30th International Colloquium on Automata, Languages and Programming (ICALP'03)}, volume 2719, pages 291--299, 2003.
\newblock quant-ph/0304052.

\bibitem[JW26]{JW:optQPE}
Stacey Jeffery and Freek Witteveen.
\newblock Optimal quantum algorithm for ground-state energy estimation with a guiding state, 2026.
\newblock arXiv:2608.24494.

\bibitem[LT20]{linlin&tong:groundstateprep}
{Lin Lin} and Yu~Tong.
\newblock Near-optimal ground state preparation.
\newblock {\em Quantum}, 4:372, 2020.
\newblock arXiv:2002.12508.

\bibitem[MdW26]{Mande2026tightboundsquantum}
Nikhil~S. Mande and Ronald de~Wolf.
\newblock Tight bounds for quantum phase estimation and related problems.
\newblock {\em Quantum}, 10:2140, 2026.
\newblock Earlier version in ESA'23. arXiv:2305.04908.

\bibitem[Miz09]{Mizel2008dampedSearch}
Ari Mizel.
\newblock Critically damped quantum search.
\newblock {\em Physical Review Letters}, 102:150501, 2009.

\bibitem[RIK22]{rendon2022effects}
Gumaro Rendon, Taku Izubuchi, and Yuta Kikuchi.
\newblock Effects of cosine tapering window on quantum phase estimation.
\newblock {\em Physical Review D}, 106(3):034503, 2022.

\bibitem[SdW26]{SdW:optQPE}
Rolando~D. Somma and Ronald de~Wolf.
\newblock Optimal lower bound for ground-state energy estimation with a guiding state, 2026.
\newblock arXiv:2608.24493.

\end{thebibliography}

\end{document}